\newcommand{\er}[1]{\textcolor{blue}{#1}}
\newcommand{\erel}[1]{\er{(Erel says: #1)}}
\newcommand{\kbp}{$k\text{BP}$}
\begin{document}
\title{\texorpdfstring{$k$}{k}-times bin packing and its application to fair electricity distribution \thanks{This is an extended version of our paper accepted at SAGT-2024 (\url{https://link.springer.com/chapter/10.1007/978-3-031-71033-9_27}. This arXiv version corrects a bug present in the accepted SAGT paper that affected section 7.3.}}


%
%
\author{Dinesh Kumar Baghel\inst{1}
	\and
	Alex Ravsky\inst{2}
	\and
	Erel Segal-Halevi\inst{1}
}
%
%
\institute{Ariel University, Ariel 40700, Israel \\
	\email{\{dinkubag21, erelsgl\}@gmail.com}\\
	\and
	Pidstryhach Institute for Applied Problems of Mechanics and Mathematics of National Academy of Sciences of Ukraine, Lviv, Ukraine\\
	\email{alexander.ravsky@uni-wuerzburg.de}}
\maketitle             

\begin{abstract}
	 Given items of different sizes and a fixed bin capacity, the bin-packing problem is to pack these items into a minimum number of bins such that the sum of item
	sizes in a bin does not exceed the capacity. 
	We define a new variant called \emph{$k$-times bin-packing ($k$BP)}, where the goal is to pack the items such that each item appears exactly $k$ times, in $k$ different bins. 
	We generalize some existing approximation algorithms for bin-packing to solve $k$BP, and analyze their performance ratio.
	
	The study of $k$BP is motivated by the problem of \emph{fair electricity distribution}. In many developing countries, the total electricity demand is higher than the supply capacity. 
	We prove that every electricity division problem can be solved by $k$-times bin-packing for some finite $k$.
	We also show that $k$-times bin-packing can be used to distribute the electricity in a fair and efficient way.
	Particularly, we implement generalizations of the First-Fit and First-Fit Decreasing  bin-packing algorithms to solve $k$BP, and apply the generalizations to real electricity demand data. We show that our generalizations outperform existing heuristic solutions to the same problem in terms of the egalitarian allocation of connection time.  

	
	\keywords{Approximation algorithms  \and bin-packing \and First-Fit \and First-Fit Decreasing \and Next-Fit \and fair division \and Karmarkar-Karp algorithms \and Fernandez de la Vega-Lueker algorithm \and electricity distribution \and utilitarian metric \and egalitarian metric \and utility difference.}
\end{abstract}

\section{Introduction} \label{sec:introduction}
This work is motivated by the problem of \emph{fair electricity distribution}. 
In developing countries, the demand for electricity often surpasses the available supply \cite{Kaygusuz2012}.  
Such countries have to come up with a fair and efficient method of allocating of the available electricity among the households.

Formally, we consider a power-station that produces a fixed supply $S$ of electricity. 
The station should provide electricity to $n$ households. The demands of the households in a given period are given by a (multi)set $D$. Typically, $\sum_i D[i] > S$ (where $D[i]$ is the electricity demand of a household $i$), 
so it is not possible to connect all households simultaneously.
Our goal is to ensure that each household is connected the same amount of time, and that this amount is as large as possible.
We assume that an agent gains utility only if the requested demand is fulfilled; otherwise it is zero. Practically it can be understood as follows: Suppose at some time, a household $i$ is running some activity that requires $D[i]$ kilowatt of electricity to operate; in the absence of that amount, the activity will not function. 
Therefore, an allocation where demands are fractionally fulfilled is not relevant. 

A simple approach to this problem is to partition the households into some $q$ subsets, such that the sum of demands in each subset is at most $S$, and then connect the agents in each subset for a fraction $1/q$ of the time. To maximize the amount of time each agent is connected, we have to minimize $q$. This problem is equivalent to the classic problem of \emph{bin-packing}. In this problem, we are given some $n$ items, of sizes given by a multiset of positive numbers numbers $D$, and a positive number $S$ representing the capacity of a bin. The goal  is to pack items in $D$ into the smallest possible number of bins, such that the sum of item sizes in each bin is at most $S$. The problem is NP-complete \cite{10.5555/574848}, but has many efficient approximation algorithms.

However, even an optimal solution to the bin-packing problem may provide a sub-optimal solution to the electricity division problem. As an example, suppose we have three households $x,y,z$ with demands $2,1,1$ respectively, and the electricity supply $S = 3$. Then, the optimal bin-packing results in $2$ bins, for instance, $\{x,y\}$ and $\{z\}$. This means that each agent would be connected $1/2$ of the time. However, it is possible to connect each agents $2/3$ of the time, by connecting each of the pairs $\{x,y\}, \{x,z\}, \{y,z\}$ for $1/3$ of the time,
as each agent appears in $2$ different subsets.
More generally, suppose we construct $q$ subsets of agents, such that each agent appears in exactly $k$ different subsets. Then we can connect each subset for $1/q$ of the time, and each agent will be connected $k/q$ of the time.

\subsection{The \texorpdfstring{$k$}{k}-times bin-packing problem} 
To study this problem more abstractly, we define the \emph{$k$-times bin-packing problem (or \kbp)}. 
The input to \kbp{} is a set of $n$ items of sizes given by a multiset $D$,
a positive number $S$ representing the capacity of a bin, and an integer $k\geq 1$.
The goal  is to pack items in $D$ into the smallest possible number of bins, such that the sum of item sizes in each bin is at most $S$, and each item appears in  $k$ different bins, where each item occurs at most once in a bin.
In the above example, $k=2$. 
It is easy to see that, in the above example, $2$-times bin-packing yields the optimal solution to the electricity division problem.

Our first main contribution \textbf{(Section \ref{section:existence of finite k for kBP})} is to prove that, for every electricity division problem, there exists some finite $k$ for which the optimal solution to the \kbp{} problem yields the optimal solution to the electricity division problem.

We note that \kbp{} may have other applications beyond electricity division. For example, it could be to create a backup of the files on different file servers \cite{Jansen_1998}.
We would like to store $k$ different copies of each file, but obviously, we want at most one copy of the same file on the same server. This can be solved by solving \kbp{} on the files as items, and the server disk space as the bin capacity.

Motivated by these applications, we would like to find ways to efficiently solve \kbp{}. However, it is well-known that \kbp{} is NP-hard even for $k=1$. We therefore look for efficient approximation algorithms of \kbp{}. 

\subsection{Using existing bin-packing algorithms for \texorpdfstring{$k$}{k}BP} \label{using existing bin packing algorithms for kBP}
Several existing algorithms for bin-packing can be naturally extended to $k$BP. However, it is not clear whether the extension will have a good approximation ratio.

As an example, consider the simple algorithm called \emph{First-Fit (FF)}: process the items in an arbitrary order; pack each item into the first bin it fits into; if it does not fit into any existing bin, open a new bin for it. 
In the example $D = [10,20,11], S=31$, the $FF$ would pack two bins: $\{10,20\}$ and $\{11\}$. This is clearly optimal. 
The extension of $FF$ to $k$BP would process the items as follows: for each item $x_r$ in the list (in order), suppose that $b$ bins have been used thus far.  Let $j$ be the lowest index ($1 \leq j\leq b$) such that (a) bin $j$ can accommodate $x_r$ and (b) bin $j$ does not contain any copy of $x_r$, should such $j$ exist; otherwise open a new bin with index $j=b+1$.  Place $x_r$ in bin $j$.

There are two ways to process the input. One way 
is by processing each item $k$ times in sequence. In the above example, with $k=2$, $FF$ will process the items in order $[10^1, 10^2, 20^1, 20^2, 11^1, 11^2]$, where the superscript specifies the instance to which an item belongs.
This results in four bins: \sloppy $\{10^1, 20^1\}, \{10^2, 20^2\}, \{11^1\}, \{11^2\}$, which simply repeats $k$ times the solution obtained from $FF$ on $D$. However, the optimal solution here is 3 bins: $\{10^1,20^1\}, \{11^1,10^2\}, \{20^2,11^2\}$.

Another way 
is to process the whole sequence $D$, $k$ times. In the above example, $FF$ will process the sequence $D_2 = DD = [10^1,20^1,11^1, 10^2, 20^2, 11^2]$ . 
Applying the $FFk$ algorithm to this input instance will result in three bins $\{10^1,20^1\}, \{11^1,10^2\}, \{20^2,11^2\}$, which is optimal.
Thus, while the extension of $FF$ to $k$BP is simple, it is not trivial, and it is vital to study the approximation ratio of such algorithms in this case. 

As another example, consider the approximation schemes by 
de la Vega and Lueker \cite{fernandez_de_la_vega_bin_1981} and Karmarkar and Karp \cite{karmarkar-efficient-1982}.
These algorithms use a linear program that counts the number of bins of each different \emph{configuration} in the packing (see subsection {\ref{section:EAA:subsection:general}} for the definitions) 
One way in which these algorithms can be extended, without modifying the linear program, is to give $D_k$  as the input. But then, a configuration might have more than one copy of an item in $D$, which violates the $k$BP constraint. 
Another approach is to modify the constraint in the configuration linear program, to check that there are $k$ copies of each item in the solution, while keeping the same configurations as for the input $D$. Doing so will respect the $k$BP constraint.
Again, while the extension of the algorithm is straightforward, it is not clear what the approximation ratio would be; this is the main task of the present paper.

The most trivial way to extend existing algorithms is  to run an existing bin-packing algorithm, and duplicate the output $k$ times. However, this will not let us enjoy the benefits of $k$BP for electricity division (in the above example, this method will yield $4$ bins, so each agent will be connected for $2/4=1/2$ of the time).
Therefore, we present more elaborate extensions, that attain better performance. 
The algorithms we extend can be classified into two classes:

\paragraph{1. Fast constant-factor approximation algorithms \textbf{(Section \ref{section:approx_algo})}} 
Examples are First-Fit ($FF$) and First-Fit-Decreasing ($FFD$). 
For bin-packing, these algorithms find a packing with at most $1.7 \cdot OPT(D)$ and  $\frac{11}{9} \cdot {OPT}(D) + \frac{6}{9}$  bins respectively \cite{Dsa2013,dosa_tight_2007,dosa_tight_2013}
We adapt these algorithms by running them on an instance made of $k$ copies, $DD\ldots D$ ($k$ copies of $D$),
which we denote by $D_k$.
We show that, for $k>1$, the extension of $FF$ to $k$BP (which we call $FFk$) finds a packing with at most  $\left(1.5+\frac{1}{5k}\right) \cdot OPT(D_k) + 3\cdot k$ bins.
For any fixed $k>1$, the asymptotic approximation ratio of $FFk$ for large instances (when $OPT(D_k)\to \infty$)
	is $(1.5+\frac{1}{5k})$, which is better than that of $FF$, and improves towards $1.5$ when $k$ increases.

We also prove that the lower bound for $FFDk$ (the extension of $FFD$ to $k$BP) is $\frac{7}{6} \cdot {OPT}(D_k) + 1$, and conjecture by showing on simulated data that $FFDk$ solves $k$BP with at most $\frac{11}{9} \cdot OPT(D_k) + \frac{6}{9}$ bins which gives us an asymptotic approximation ratio of at most $11/9$. 

We also show that the extension of $NF$ (next-fit algorithm) to $k$BP (we call this extension as $NFk$) has the asymptotic ratio of 2.

\paragraph{2. Polynomial-time approximation schemes \textbf{(Section \ref{section:ptas})}}
Examples are the algorithms by Fernandez de la Vega and Lueker \cite{fernandez_de_la_vega_bin_1981} and Karmarkar and Karp \cite{karmarkar-efficient-1982}.
We show that the algorithm by Fernandez de la Vega and Lueker can be extended to solve $k$BP using at most $(1 + 2\cdot \epsilon)OPT(D_k) + k$ bins for any fixed $\epsilon \in (0,1/2)$. 
For every $\epsilon >0$, Algorithm 1 of Karmarkar-Karp algorithms \cite{karmarkar-efficient-1982} solves $k$BP using bins at most  $ (1 + 2 \cdot k \cdot \epsilon)OPT(D_k) + \frac{1}{2 \cdot \epsilon^2} + (2 \cdot k+1)$ bins, and runs in time
$O(n(D_k) \cdot \log {n(D_k)} + T(\frac{1}{\epsilon^2}, n(D_k))$, where $n(D_k)$ is the number of items in $D_k$, and  $T$ is a polynomially-bounded function.
Algorithm 2 of Karmarkar-Karp algorithms \cite{karmarkar-efficient-1982} generalized to solve $k$BP using at most  $OPT(D_k) + O(k \cdot \log^2 {OPT(D)})$ bins, and runs in time
$O(T(\frac{n(D)}{2},n(D_k)) + n(D_k) \cdot \log{n(D_k)})$.

\paragraph{Simulation experiments \textbf{(Appendix \ref{subsection:ffk and ffdk bp algorithms experiment})}}
We complement our theoretical analysis of the approximation ratios of $FFk$ and $FFDk$ bin-packing algorithms with
simulation experiments, which validate our conjectures in subsection {\ref{section:approx-algo:subsection:approx_algo:ffk}} and {\ref{section:approx-algo:subsection:approx_algo:ffdk}}.

\paragraph{Electricity distribution \textbf{(Section \ref{section:experiments})}}
The fair electricity division problem was introduced by Oluwasuji, Malik, Zhang and Ramchurn
\cite{oluwasuji_algorithms_2018,oluwasuji_solving_2020} under the name of ``fair load-shedding''.
They presented several heuristic as well as ILP-based algorithms, and tested them on a dataset of $367$ households from Nigeria. 
We implement the $FFk$ and $FFDk$ algorithms for finding approximate solutions to $k$BP, and use the solutions to determine a fair electricity allocation. We test the performance of our allocations on the same dataset of Oluwasuji, Malik, Zhang and Ramchurn \cite{oluwasuji_algorithms_2018}.
We compare our results on the same metrics used by Oluwasuji, Malik, Zhang and Ramchurn \cite{oluwasuji_algorithms_2018}. These metrics are utilitarian and egalitarian social welfare and the maximum utility difference between agents. We compare our results in terms of hours of connection to supply on average, utility delivered to an agent on average, and electricity supplied on average, along with their standard deviation. We find that our results surmount their results in terms of the egalitarian allocation of connection time to the electricity 
$FFk$ and $FFDk$ run in time that is nearly linear in the number of agents. We conclude that using $k$BP can provide a practical, fair and efficient solution to the electricity division problem where the objective is to connect each agent as much as possible.

In Section \textbf{\ref{section:conclusion}}, we conclude with a summary and directions for future work. We defer most of the technical proofs and algorithms to the Appendix.

\section{Related Literature} \label{section:related-literature}
\paragraph{First-Fit}

We have already defined the working of $FF$ in subsection \ref{using existing bin packing algorithms for kBP}.
Denote by $FF$ the number of bins used by the First-Fit algorithm, and by $OPT$ the number of bins in an optimal solution for a multiset $D$. An upper bound of $FF \leq 1.7OPT+3$ was first proved by Ullman in 1971 \cite{laboratory_performance_1971}. The additive term was first improved to 2 by Garey, Graham and Ullman \cite{10.1145/800152.804907} in 1972. In 1976, Garey, Graham, Johnson and Yao \cite{GAREY1976257}  improved the bound further to $FF \leq \lceil 1.7OPT \rceil$, equivalent to $FF  \leq 1.7OPT + 0.9$ due to the integrality of $FF$ and $OPT$. This additive term was further lowered to $FF \leq 1.7OPT + 0.7$ by Xia and Tan \cite{xia_tighter_2010}. Finally, in 2013 Dosa and Sgall \cite{Dsa2013}
settled this open problem and proved that $FF \leq \lfloor 1.7OPT \rfloor$, which is tight.

\paragraph{First-Fit Decreasing} Algorithm \emph{First-Fit Decreasing} ($FFD$) first sorts the items in non-increasing order, and then implements $FF$ on them. 
In 1973, in his doctoral thesis \cite{Johnson1973}, D. S. Johnson proved that $ FFD\ \leq \frac{11}{9}OPT+4$. Unfortunately, his proof spanned more than 100 pages. In 1985, Baker \cite{baker_new_1985} simplified their proof and improved the additive term to $3$. In 1991 Minyi \cite{Minyi} further simplified the proof and showed that the additive term is $1$. Then, in 1997,  Li and Yue \cite{Li_Yue_1997} narrowed the additive constant to $7/9$ without formal proof. Finally, in 2007 Dosa \cite{dosa_tight_2007} proved that the additive constant is $6/9$. They also gave an example which achieves this bound. 

\paragraph{Next-fit} The algorithm next-fit works as follows: It keeps the current bin (initially empty) to pack the current item. If the current item does not pack into the currently open bin then it closes the current bin and opens a new bin to pack the current item. Johnson in his doctoral thesis \cite{Johnson1973} proved that the asymptotic performance ratio of next-fit is 2.

\paragraph{Efficient approximation schemes} In 1981, Fernandez de la Vega and Lueker \cite{fernandez_de_la_vega_bin_1981} presented a polynomial time approximation scheme to solve bin-packing. Their algorithm accepts as input an $\epsilon> 0$ and produces a packing of the items in $D$ of size at most $\left(1+\epsilon\right)OPT+1$. Their running time is polynomial in the size of $D$ and depends on $1/\epsilon$. They invented the \emph{adaptive rounding} method to reduce the problem size. In adaptive rounding, they initially organize the items into groups and then round them up to the maximum value in the group. This results in a problem with a small number of different item sizes, which can be solved optimally using the linear configuration program. Later, Karmarkar and Karp \cite{karmarkar-efficient-1982} devised several PTAS for the bin-packing problem. One of the Karmarkar-Karp algorithms solves bin-packing using at most $OPT + O({\log^2\,{ OPT}})$ bins. Other Karmarkar--Karp algorithms have different additive approximation guarantees, and they all run in polynomial time. This additive approximation was further improved to $O(\log\,{OPT} \cdot \log\,{\log\,{OPT}})$ by Rothvoss \cite{Rothvoss_2013}. They used a ``glueing" technique wherein they glued small items to get a single big item. In 2017, Hoberg and Rothvoss \cite{Hoberg_Rothvoss_2017}  further improved the additive approximation to a logarithmic term $O(\log\,{OPT})$.

Jansen \cite{Jansen_1998} has proposed a $\mathrm{FPTAS}$ for the generalization of the bin-packing problem called as \textit{bin-packing with conflicts}. 
The input instance for their algorithm is the conflict graph. Its vertices are the items and any two items are adjacent provided they cannot be packed into the same bin. In particular, $k$BP can be considered as the bin-packing with conflicts, where the conflict graph $D_k$ is a disjoint union of copies of a complete graph $K_k$. Their bin-packing problem with conflicts is restricted to $q-$inductive graphs. In a $q-$inductive graph the vertices are ordered from $1, \ldots, n$. Each vertex in the graph has at most $q$ adjacent lower numbered vertices. 
Since the degree of each vertex of $D_k$ equals $k-1$, $D_k$ is a $k-1$-inductive graph. 
In their method first they obtain an instance of large items from the given input instance. Let this instance be $J_k$. They apply the linear grouping method of Fernandez de la Vega and Lueker \cite{fernandez_de_la_vega_bin_1981} to obtain a constant number of different item sizes. Next they apply the Karmarkar and Karp algorithm \cite{karmarkar-efficient-1982} to obtain an approximate packing of the large items. The bins in this approximate packing may have conflicts, so they use the procedure called $\mathrm{COLOR}$ which places each conflicted item into a new bin. In the worst case it may happen that all the items in each bin have conflict and hence each one of them is packed into a separate bin. Finally, after removing the conflicts, they packed the small items into the existing bins, respecting conflicts among items. In doing so, new bins are opened if necessary. 
In this paper we focus on a special kind of conflicts, and for this special case, we present a better approximation ratio.
Their algorithm solves the $k$BP using at most $(1 + 2 \cdot \epsilon) OPT(D_k) + \frac{2 \cdot k -1 }{4 \cdot \epsilon^2} + 3\cdot k + 1$ bins, 
whereas our extension to Algorithm 1 and 2 of Karmarkar-Karp algorithms solves the $k$BP using at most $(1 + 2 \cdot \epsilon) OPT(D_k) + \frac{1 }{2 \cdot \epsilon^2} + 2\cdot k + 1$ and $OPT(D_k) + O(k \cdot \log^2 OPT(D))$ bins respectively.

Gendreau, Laporte and Semet \cite{Gendreau_Laporte_Semet_2004} propose six heuristics named H1 to H6 for bin-packing with item-conflicts, represented by a general conflict graph.
The heuristic H1 is a variant of $FFD$ which  incorporates the conflicts, whereas H6 is a combination of a maximum-clique procedure and $FFD$. They show that H6 is better than H1 for conflict graphs with high density, whereas H1 performs marginally better for low density conflict graphs (where density is defined as the ratio of the number of edges to the number of possible edges).
\kbp{} can be represented by duplicating each item $k$ times, and constructing a conflict graph in which there are edges between each two copies of the same item. The density of this graph is $(k-1)/(kn-1)$, which becomes smaller for large $n$. This suggests that H1 is a better fit for \kbp{}. But in their adoption of \kbp{} the vertices of the conflict graphs are ordered as blocks corresponding to the items, in such a way that the their sizes are non-decreasing. We have already seen that when we change such item order we can obtain a better packing.

Recently, Doron-Arad, Kulik and Shachnai \cite{Doron-Arad_Kulik_Shachnai_2022} has solved in polynomial time a more general variant of bin-packing, with partition matroid constraints. Their algorithm packs the items in $OPT + O\left(\frac{OPT}{(\ln {\ln OPT})^{1/17}}\right)$ bins. Their algorithm can be used to solve the $k$BP:
for each item in $D$, define a category that contains $k$ items with the same size.
Then, solve the bin-packing with the constraint that each bin can contain at most one item from each category (it is a special case of a partition-matroid constraint).
However, in the present paper we focus on the special case of $k$BP. This allows us to attain a better running-time (with $FFk$ and $FFDk$), and a better approximation ratio (with the de la Vega--Lueker and Karmarkar--Karp algorithms).

\paragraph{Cake cutting and electricity division} The electricity division problem can also be modeled as a classic resource allocation problem known as `cake cutting'. The problem was first proposed by Steinhaus {\cite{Society2016}}. A number of cake-cutting protocols have been discussed in {\cite{brams_taylor_1996,Webb_1998}}. In cake cutting, a cake is a metaphor for the resource. Like previous approaches, a time interval can be treated as a resource. A cake-cutting protocol then allocates this divisible resource among agents who have different valuation functions (or preferences) according to some fairness criteria. The solution to this problem differs from the classic cake-cutting problem in the sense that at any point in time, $t$, the sum of the demands of all the agents, should respect the supply constraint, and several agents may share the same piece.

Load-shedding strategy at the bus level has been discussed in \cite{Shi2015}. They have dealt with the issue of cascading failure, which may occur in line contingencies. So when a line contingency happens, first they do load-shedding to avoid cascading failure. This load-shedding may reduce the loads beyond what is necessary (let's call this unnecessary load) to prevent cascading failure. Therefore, they recover as much unnecessary load as possible in the second step while maintaining the system's stability.
Shi and Liu \cite{Shi2015} presented a distributed algorithm for compensating agents for load-shedding based on the proportional-fairness criterion. In contrast, we present a centralized algorithm for computing load-shedding that attains a high egalitarian welfare.

\section{Definitions and Notation} 
\label{section:model-assumptions}
\subsection{Electricity division problem}
The input to Electricity Division consists of:
\begin{itemize}
    \item A number $S>0$ denoting the total amount of available supply (e.g. in kW);
    \item A number $n$ of households, and a list $D = D[1],\ldots,D[n]$ of positive numbers, where $D[i]$ represents the demand of households $i$ (in kW);
    \item An interval $[0,T]$ representing the time in which electricity should be supplied to the households.
\end{itemize}
 
 The desired output consists of:
 \begin{itemize}
     \item A partition $\mathcal{I}$ of the interval $[0,T]$ into sub-intervals, $I_1,\ldots,I_p$;
     \item For each interval $l\in [p]$, a set $A_l \subseteq [n]$ denoting the set of agents that are connected to electricity during interval $l$, such that
    $\sum_{i \in A_l} D[i] \leq S$ (the total demand is at most the total supply).
 \end{itemize}
Throughout most of the paper, we  assume that the utility of agent $i$ equals the total time  agent $i$ is connected:
$u_i(\mathcal{I}) = \sum_{l: i\in A_l}|I_l|$
(we will consider other utility functions at \Cref{section:experiments}).

The optimization objective is 
\begin{align*}
\max_{\mathcal{I}} \min_{i\in [n]} u_i(\mathcal{I}),   
\end{align*}
where the maximum is  over all partitions that satisfy the demand constraints. 
This max-min value is called the \emph{egalitarian connection-time} in the given instance.


\subsection{$k$-times bin packing}
We denote the bin capacity 
by $S>0$ and the multiset of $n$ items by $D$. 

Let $n(D)$ and $m(D)$ denote the number of items and the number of different item sizes in $D$, respectively. We denote these sizes by $c[1],\dots, c[m(D)]$. Moreover, for each natural $i\le m(D)$ let $n[i]$ be the number of items of size $c[i]$. The \emph{size} of a bin is defined as the sum of all the item sizes in that bin. Given a multiset $B$ of items, we assume that its \emph{size} $V(B)$ equals the sum of the sizes of all items of $B$.

We denote $k$ copies  of $D$ by $D_k := DD\ldots D$. We denote the number of bins used to pack the items in $D_k$ by the optimal and the considered algorithm by $OPT(D_k)$ and $bins(D_k)$, respectively. 

Note that each item in $D_k$ is present at most once in each bin, so it is present in exactly $k$ distinct bins. Consider the example in Section $\ref{sec:introduction}$. There are three items $x,y,z$ with demand $2,1,1$ respectively. Let $k=2$ and $S=3$. 
Then, $\{x,y\}, \{y,z\}, \{z,x\}$ is a valid bin-packing. Note that each item is present twice overall, but at most once in each bin.
In contrast, the bin-packing $\{x,y\}, \{y,z,z\}, \{x\}$ is not valid, because there are two copies of $z$ in the same bin.

\section{On optimal \texorpdfstring{$k$}{k} for \texorpdfstring{$k$}{k}-times bin-packing} \label{section:existence of finite k for kBP}
In this section we prove that, for every electricity division instance, there exists an  integer $k$ such that $k$BP yields the optimal electricity division. Moreover, we give an upper bound on $k$ as a function of the number of agents.

\subsection{Upper bound}
Let $X$ be a nonempty set. 
We denote by $\mathbb{R}^X$ the linear space of all functions from $X$ to the real numbers $\mathbb{R}$. So elements of $\mathbb{R}^X$ have the form $(w_\alpha)_{\alpha \in X}$, where for each element $\alpha\in X$, $w_\alpha$ is the corresponding real number.

For each nonempty subset $Y$ of $X$, let $\pi_Y : \mathbb{R}^X \rightarrow \mathbb{R}^Y$ be the natural projection, which maps each element $(w_\alpha)_{\alpha \in X} \in \mathbb{R}^X$ to the element 
$(w_\alpha)_{\alpha \in Y} \in \mathbb{R}^Y$.

We shall need the following lemmas.
\begin{lemma}
	\label{lem:Y}
	Let $W\subseteq \mathbb{R}^X$ be a nonempty finite linearly independent set. Then there exists a subset $Y$ of $X$ with $|Y|=|W|$ such that the set $\pi_Y(W)$ is linearly independent.
\end{lemma}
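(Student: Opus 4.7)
The plan is to prove this by induction on $m := |W|$, which is clean and avoids having to talk about ranks of matrices with infinitely many columns (which is what happens when $X$ is infinite).

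For the base case $m=1$, since $W=\{w_1\}$ is linearly independent, $w_1 \neq 0$, so there exists $y \in X$ with $w_1(y)\neq 0$; take $Y=\{y\}$.

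For the inductive step, suppose the result holds for all linearly independent sets of size $m-1$, and let $W=\{w_1,\dots,w_m\}$ be linearly independent. Apply the hypothesis to $\{w_1,\dots,w_{m-1}\}$ to obtain $Y'\subseteq X$ with $|Y'|=m-1$ such that $\pi_{Y'}(w_1),\dots,\pi_{Y'}(w_{m-1})$ are linearly independent in $\mathbb{R}^{Y'}$. Since $\dim\mathbb{R}^{Y'}=m-1$, the vector $\pi_{Y'}(w_m)$ must lie in their span, so write
\[
\pi_{Y'}(w_m)=\sum_{i=1}^{m-1}a_i\,\pi_{Y'}(w_i).
\]
Define $w^{\ast}:=w_m-\sum_{i=1}^{m-1}a_i w_i$. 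The coefficient of $w_m$ in this combination is $1$, so by linear independence of $W$ we have $w^{\ast}\neq 0$, hence some $y^{\ast}\in X$ satisfies $w^{\ast}(y^{\ast})\neq 0$. By construction $\pi_{Y'}(w^{\ast})=0$, so $w^{\ast}$ vanishes on $Y'$, which forces $y^{\ast}\notin Y'$. Set $Y:=Y'\cup\{y^{\ast}\}$; then $|Y|=m$.

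To verify that $\pi_Y(w_1),\dots,\pi_Y(w_m)$ are linearly independent, suppose $\sum_{i=1}^{m}b_i\,\pi_Y(w_i)=0$. Restricting to the coordinate $y^{\ast}$ and substituting the expression for $w^{\ast}$ yields $b_m\cdot w^{\ast}(y^{\ast})=0$, so $b_m=0$. Restricting instead to $Y'$ and using the inductive independence of $\pi_{Y'}(w_1),\dots,\pi_{Y'}(w_{m-1})$ then forces $b_1=\cdots=b_{m-1}=0$. This completes the induction.

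The only step that requires a little care is choosing $y^{\ast}$ correctly: one has to notice that the ``correcting'' combination $w^{\ast}$ vanishes on $Y'$, which is exactly what guarantees $y^{\ast}\notin Y'$ and also makes the two coordinate restrictions in the independence check decouple cleanly. Everything else is bookkeeping, and the argument works uniformly whether $X$ is finite or infinite.
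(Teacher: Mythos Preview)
Your approach is the same as the paper's: induct on $|W|$, apply the hypothesis to all but one vector to obtain $Y'$, express the remaining vector's $Y'$-projection in the span of the others, locate a coordinate where the defect $w^{\ast}=w_m-\sum_i a_i w_i$ is nonzero, and adjoin it.

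However, the verification of linear independence has the two restrictions in the wrong order, and as written the first step does not follow. Restricting $\sum_{i=1}^m b_i\,\pi_Y(w_i)=0$ to the single coordinate $y^{\ast}$ gives only $\sum_{i=1}^m b_i\, w_i(y^{\ast})=0$; substituting $w_m=w^{\ast}+\sum_{i<m} a_i w_i$ yields
\[
\sum_{i<m}(b_i+b_m a_i)\,w_i(y^{\ast})+b_m\,w^{\ast}(y^{\ast})=0,
\]
which is \emph{not} $b_m\,w^{\ast}(y^{\ast})=0$ unless you already know the first sum vanishes---and you have no control over the values $w_i(y^{\ast})$ for $i<m$. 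The repair is to restrict to $Y'$ first: since $\pi_{Y'}(w_m)=\sum_{i<m} a_i\,\pi_{Y'}(w_i)$, the relation on $Y'$ becomes $\sum_{i<m}(b_i+b_m a_i)\,\pi_{Y'}(w_i)=0$, and inductive independence forces $b_i+b_m a_i=0$ for all $i<m$. \emph{Then} the $y^{\ast}$-equation collapses to $b_m\,w^{\ast}(y^{\ast})=0$, so $b_m=0$, and hence all $b_i=0$. The paper carries out exactly this argument, phrased as a uniqueness-of-coefficients contradiction. With the steps reordered your proof is correct and coincides with the paper's.
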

For example, let the set of indices $X$ be $\{1,2,3,4,5\}$, so $\mathbb{R}^X$ is simply $\mathbb{R}^5$.
Let $W = \{ [1,0,0,0,0], [0,0,1,0,0], [0,0,0,0,1] \}$. As $|W|=3$, the lemma says that there exists a subset $Y$ containing at most $3$ indices, such that the projection of $W$ on these indices is still linearly independent. Indeed, in this case we can take $Y = \{1,3,5\}$.
\begin{proof}[of \Cref{lem:Y}]
	Let $q=|W|$. We prove the required claim by induction on $q$. The base case is $q = 1$. 
	Then $W$ consists of a single nonzero vector $w$.
	Therefore there exists $\alpha\in X$ such that the $\alpha$th entry of $w$ is non-zero. Put $Y=\{\alpha\}$. Then the vector $\pi_Y(w)$ is non-zero and so the set $\{\pi_Y(w)\}$ is linearly independent. 
	
	Now suppose that the required claim holds for $q-1$. Pick any vector $w'\in W$ and put $W'=W\setminus\{w'\}$. 
	By the induction assumption,
	there exists a subset $Y'$ of $X$ with $|Y'|=q-1$ such that the set $\pi_{Y'}(W')$ is linearly independent. 
	As $|W'|=q-1$, adding a single vector $\pi_{Y'}(w')$ to $\pi_{Y'} (W')$ makes it linearly dependent.
	Therefore there exists a unique vector of coefficients $(\lambda_v)_{v\in W'}$ such that $\pi_{Y'}(w')=\sum_{v\in W'} \lambda_v \pi_{Y'} (v)$. Since the set $W$ is linearly independent, 
	$\sum_{v\in W'} \lambda_v v \neq w'$,
	so there exists $\alpha\in X\setminus Y'$ such that $w'_\alpha\ne\sum_{v\in W'} \lambda_v v_\alpha$. Put $Y=Y'\cup\{\alpha\}$. 
	
	We claim that the set $\pi_Y(W)$ is linearly independent. Indeed, suppose for a contradiction that there exist coefficients $(\lambda'_v)_{v\in W}$ which are not all zeroes such that $\sum_{v\in W} \lambda'_v\pi_Y(v)=0$. Since the set $\pi_{Y'}(W')$ is linearly independent, the set $\pi_{Y}(W')$ is linearly independent too,
	so $\lambda'_{w'}\ne 0$. Then $\pi_Y(w')=\sum _{v\in W'} (-\lambda'_v/\lambda'_{w'})\pi_Y(v)$, and so 
	$\pi_{Y'}(w')=\sum _{v\in W'} (-\lambda'_v/\lambda'_{w'})\pi_{Y'}(v)$. The uniqueness of $(\lambda_v)_{v\in W'}$ ensures that $-\lambda'_v/\lambda'_{w'}=\lambda_v$ for each 
	$v\in W'$. But $w'_\alpha\ne \sum_{v\in W'} \lambda_v v_\alpha=\sum _{v\in W'} (-\lambda'_v/\lambda'_{w'})v_\alpha$, a contradiction. 
	
	Thus the required claim holds for $q$.
        \qed
\end{proof} 

We recall the following facts from the OEIS \cite{IntegerSequenceEncyclopedia}.
\begin{sloppypar}
For each natural $n$, let $a(n)$ be the maximal determinant of a matrix of order $n$ whose entries are $0$ or $1$. The values of $a(n)$ are known up to { $n=21$: $1, 1, 2, 3, 5, 9, 32, 56, 144, 320, 1458, 3645, 9477, 25515, 131072, 327680, 1114112, 3411968, 19531250, 56640625, \linebreak 195312500,\dots$ }. Hadamard proved that $a(n) \le 2^{-n}(n+1)^{\frac{n+1}{2}}$, with equality iff a Hadamard matrix of order $n+1$ exists. It is believed that the latter holds iff $n+1 = 1, 2$ or a multiple of $4$.
\cite{clements1965sequence} provide a lower bound
$a(n) > 2^{-n}(\frac{3}{4}(n+1))^{\frac{n+1}{2}}$.

Here are two examples of $6\times 6$ matrices that attain the upper bound $a(6)=9$:%
\footnote{
The first is from OEIS;
the second is from Dietrich Burde in
\url{https://math.stackexchange.com/a/4965827}.
}
\begingroup
\setlength{\arraycolsep}{5pt}
\begin{align}
\label{eq:mat9}
\begin{bmatrix} 
  1 & 0 & 0 & 1 & 1 & 0
  \\
  0 & 0 & 1 & 1 & 1 & 1
  \\
  1 & 1 & 1 & 0 & 0 & 1
  \\
  0 & 1 & 0 & 1 & 0 & 1
  \\
  0 & 1 & 0 & 0 & 1 & 1
  \\
  0 & 1 & 1 & 1 & 1 & 0
\end{bmatrix}   
&&
\begin{bmatrix} 
  1 & 0 & 1 & 0 & 0 & 0
  \\
  1 & 1 & 0 & 1 & 0 & 0
  \\
  0 & 1 & 1 & 0 & 1 & 0
  \\
  0 & 0 & 1 & 1 & 0 & 1
  \\
  1 & 0 & 0 & 1 & 1 & 0
  \\
  1 & 1 & 0 & 0 & 1 & 1
\end{bmatrix}   
\end{align}
\endgroup

\end{sloppypar}

\begin{lemma}
	\label{integer coefficeints lemma}
	Let $Z \subset \{0,1\}^X$ be a nonempty finite linearly dependent set of nonzero vectors and
	 $p = |Z| - 1 $. Then there exist integers $(\Delta_w)_{w \in Z}$ which are not all zeros such that $|\Delta_w| \leq a(p)$ for each $w \in Z$ and $\sum_{w \in Z} \Delta_w \cdot w = 0$.
	That is if \emph{some} nontrivial linear combination of $Z$ equals $0$, then there exists such a linear combination in which the coefficients are all integers, and are all bounded by $a(p)$.
\end{lemma}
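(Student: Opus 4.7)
The plan is to follow essentially the same scheme as the (commented-out) proof of the earlier ``integer coefficients'' lemma, but replacing Hadamard's inequality by the sharper quantity $a(\cdot)$, which is available to us because the entries of the vectors in $Z$ are already in $\{0,1\}$.

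First I would pick a maximal linearly independent subset $W'\subseteq Z$. Since $Z$ is linearly dependent, $|W'|\le|Z|-1=p$. By \Cref{lem:Y} there exists a subset $Y\subseteq X$ with $|Y|=|W'|$ such that $\pi_Y(W')$ is linearly independent in $\mathbb{R}^Y$. Next, choose any $w\in Z\setminus W'$; by the maximality of $W'$, $w$ lies in the linear span of $W'$, so there exist unique reals $(x_v)_{v\in W'}$ with $\sum_{v\in W'} x_v\,v = w$ in $\mathbb{R}^X$. Projecting to $Y$ gives $\sum_{v\in W'} x_v\,\pi_Y(v)=\pi_Y(w)$, and by the linear independence of $\pi_Y(W')$ these $(x_v)_{v\in W'}$ are the unique solution of the resulting $|W'|\times|W'|$ linear system on $Y$.

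Now I would apply Cramer's rule to that system. Let $M$ be the matrix whose columns are $\pi_Y(v)$ for $v\in W'$, and let $M_v$ be obtained from $M$ by replacing its $v$-th column with $\pi_Y(w)$. Then $x_v=\det(M_v)/\det(M)$ for every $v\in W'$. All entries of $M$ and of each $M_v$ lie in $\{0,1\}$, and both are square of order $|W'|\le p$, so by the definition of $a(\cdot)$ and its monotonicity (which follows by padding a maximizer with an extra row and column whose only nonzero entry is a $1$ on the diagonal) we have $|\det(M)|,|\det(M_v)|\le a(|W'|)\le a(p)$.

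Finally, I would clear the denominator: set $\Delta_v := \det(M_v)$ for $v\in W'$, $\Delta_w := -\det(M)$, and $\Delta_u := 0$ for every $u\in Z\setminus(W'\cup\{w\})$. Multiplying the identity $\sum_{v\in W'} x_v\,v = w$ (which is an identity in $\mathbb{R}^X$, not only in $\mathbb{R}^Y$) by $\det(M)$ yields $\sum_{v\in Z}\Delta_v\,v = 0$. All $\Delta$'s are integers bounded in absolute value by $a(p)$, and they are not all zero because $\Delta_w=-\det(M)\ne 0$ (the columns of $M$ are linearly independent). The only step that needs a little care is justifying the dependence on all of $X$ rather than only on $Y$, which is why I fix the coefficients upstream via the maximality of $W'$ and then import them into Cramer's rule by uniqueness; I expect this bookkeeping to be the only mildly tricky point, the rest being a direct transcription of the standard Cramer-plus-Hadamard argument.
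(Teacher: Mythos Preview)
Your proposal is correct and follows essentially the same approach as the paper: choose a maximal independent subset, use \Cref{lem:Y} to reduce to a square system, apply Cramer's rule, and bound the resulting determinants by $a(p)$ since all entries are $0$/$1$. The only cosmetic differences are that you argue the dependence in $\mathbb{R}^X$ first and project (whereas the paper solves on $Y$ and then lifts by uniqueness), and you explicitly justify the monotonicity $a(|W'|)\le a(p)$, which the paper leaves implicit.
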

\begin{proof}
Let $W$ be a maximal linearly independent subset of $Z$ (note that $|W|\leq |Z|-1=p$). By \Cref{lem:Y},  there exists a subset $Y$ of $X$ with $|Y|= |W|$ such that the set $\pi_Y(W) \subset \mathbb{R}^Y$ is linearly independent.
	
Pick any vector $z \in Z \setminus W$. By the maximality of $W$, the set $W\cup\{z\}$ is linearly dependent, and so the set $\pi_Y(W)\cup\{\pi_Y(z)\}$ is linearly dependent too.
Therefore, the system $\sum_{v\ \in W} x_v \cdot \pi_Y(v)=\pi_Y(z)$ has a solution $\mathbf{x}$. Note that this is a square system, with $|Y|$ equations in $|W|=|Y|$ unknowns.

The solution $\mathbf{x}$ is unique, because $\pi_Y(W)$ is linearly independent.
Therefore, $\mathbf{x}$ can be computed by Cramer's rule.
It gives $x_v= \Delta_v/\Delta_z$ for each $ v \in W$, where $\Delta_z$ is the determinant of the matrix with columns $\pi_Y(v)$, and $\Delta_v$ 
is the determinant of the same matrix where column $v$ is replaced with $\pi_Y(z)$. 
All these matrices are square matrices with binary entries and size at most $|Y|\leq p$, so their determinants are integers with absolute value at most $a(p)$.

Since the set $W\cup \{z\}$ is linearly dependent and the system $ \sum_{v \in W} x_v \cdot \pi_Y(v)=\pi_Y(z)$ has a unique solution, the system $ \sum_{v \in W} x_v\cdot v=z$ has (the same) unique solution. 
Multiplying by $\Delta_z$ gives
$\sum_{v \in W} \Delta_v\cdot v - \Delta_z\cdot z = 0$
Putting $ \Delta_v=0$ for each $v\in Z \setminus (W\cup\{z\})$ yields the desired linear combination.
\qed
\end{proof}

Given an input set of items $D$, let $OPT(D_k)$ denote the optimal number of bins in $k$-times bin-packing of the items in $D$.
\begin{theorem} 
\label{theorem: finite-k-exists}
For any electricity division instance with 
demand-vector $D$  and supply $S$, 
there exists an integer $k \leq a(n)$ 
such that $\frac{k}{OPT(D_k)}$ is the egalitarian connection-time per agent.
This time can be attained by solving \kbp{} on $D$ and allocating a fraction $\frac{1}{OPT(D_k)}$ of the time to each bin in the optimal solution.
\end{theorem}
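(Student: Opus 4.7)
The plan is to recognize the optimal egalitarian time as the value of a linear program over $0$--$1$ ``configurations,'' and then apply \Cref{lem:Y} together with Cramer's rule at an optimal basic feasible solution to extract both a representation $u^* = k/q$ with $k \leq a(n)$ and a matching $k$-times bin packing.

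First I would formulate the electricity division problem as the LP
\[
\max u \quad \text{s.t.} \quad \sum_{v \in \mathcal F} t_v \cdot v \geq u \cdot \mathbf{1}, \quad \sum_v t_v = 1, \quad t_v \geq 0,
\]
where $\mathcal F \subseteq \{0,1\}^{[n]}$ is the set of feasible configurations $v$ (subsets with $\sum_i v[i]\, D[i] \leq S$). A fractional time allocation is exactly a convex combination of configurations, so the LP value $u^*$ equals the maximum egalitarian connection-time. Next I would argue that an optimum can be chosen with every agent tight, i.e., $\sum_v t_v\, v[i] = u^*$ for all $i$: if some agent $i$ is over-served, transfer a small mass from any $v$ with $t_v > 0$ and $v[i]=1$ to $v - e_i$, which is still a feasible configuration since dropping an item only decreases demand, and which reduces $i$'s utility while leaving every other agent unaffected.

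The main step is to extract a square invertible Cramer system from an optimal BFS of the resulting ``all-tight'' LP. At such a BFS at most $p \leq n$ weights $t_v$ are positive; denote their supports $v_1, \ldots, v_p \in \{0,1\}^n$. By the BFS property the $p+1$ columns $(v_j, 1) \in \mathbb R^{n+1}$ together with the column $(-\mathbf 1, 0)$ of the variable $u$ are linearly independent, so by \Cref{lem:Y} I can pick $Y' \subseteq [n]$ with $|Y'|=p$ for which the $(p+1) \times (p+1)$ coefficient matrix
\[
M = \begin{pmatrix} (v_j[i])_{i \in Y',\, j \in [p]} & -\mathbf 1 \\ \mathbf 1^{T} & 0 \end{pmatrix}
\]
is invertible; the alternative choice omitting the ``sum'' row can be ruled out because it would make the resulting system homogeneous, forcing $(t^*, u^*)=0$ and contradicting $\sum_j t_j^* = 1$. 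Cramer's rule then yields $u^* = \det(M_u)/\det(M)$, where $M_u$ replaces the last column of $M$ by $(0, \ldots, 0, 1)^{T}$. Expanding $\det(M_u)$ along that last column leaves the determinant of the $p \times p$ zero-one submatrix $(v_j[i])_{i \in Y',\, j \in [p]}$, so by the definition of $a(\cdot)$ one obtains $|\det(M_u)| \leq a(p) \leq a(n)$.

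Finally, set $Q = |\det(M)|$ and $k = |\det(M_u)|$. Cramer's rule makes each $Q t_j^*$ a nonnegative integer and $Q u^* = k$, so $\sum_j (Q t_j^*)\, v_j = k\, \mathbf 1$ and $\sum_j Q t_j^* = Q$. Taking $Q$ equal-length time-slots and placing $Q t_j^*$ of them on configuration $v_j$ produces a valid $k$-times bin packing of $D$ using $Q$ bins, in which each item occurs in exactly $k$ bins; hence $OPT(D_k) \leq Q$. Conversely, any $k$BP with $OPT(D_k)$ bins yields egalitarian time $k/OPT(D_k) \leq u^*$, so $OPT(D_k) \geq k/u^* = Q$. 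Therefore $OPT(D_k) = Q$ and $k/OPT(D_k) = u^*$ with $k \leq a(n)$, as claimed. The most delicate point of the plan is the linear-algebraic step: one must ensure that \Cref{lem:Y} produces a subset that keeps the sum row and yields an actually invertible $M$, so that Cramer's rule (and not merely consistency) applies; one must also dispose of the trivial case $u^*=0$ separately.
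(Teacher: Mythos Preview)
Your argument is correct and reaches the same bound $k\le a(n)$, but the route differs from the paper's. The paper works geometrically: it writes the optimal schedule as $r_{\max}\cdot e\in\mathrm{CH}(W)$, uses Carath\'eodory's theorem together with a minimal-face argument to extract a linearly independent set $W''\subseteq W$ of at most $n$ configurations whose convex hull still contains $r_{\max}\cdot e$, and then applies \Cref{integer coefficeints lemma} (which packages \Cref{lem:Y} with Cramer's rule) to the dependent set $W''\cup\{e\}\subseteq\{0,1\}^n$. You instead work polyhedrally: after the ``all-tight'' reduction you take an optimal basic feasible solution to obtain the same kind of independent configuration family, and then apply \Cref{lem:Y} and Cramer's rule directly to the $(p{+}1)\times(p{+}1)$ augmented system that also carries the normalization $\sum_j t_j=1$. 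Your observation that any index set produced by \Cref{lem:Y} must contain the sum row---because otherwise the projected system is homogeneous and would force $(t^*,u^*)=0$, contradicting $\sum_j t_j^*=1$---is the extra wrinkle your setup needs, and it is valid. In both approaches the numerator $k$ ends up as the absolute value of a $p\times p$ zero--one determinant, whence $k\le a(p)\le a(n)$. Your version is a bit more self-contained (it bypasses Carath\'eodory and the packaged \Cref{integer coefficeints lemma}), while the paper's convex-geometric framing makes the passage to an independent subfamily more transparent; neither yields a sharper bound. One small point worth tightening in your write-up: the mass-shifting argument that makes every agent tight terminates in finitely many steps (each step either removes a configuration from the support or makes one more agent tight), so no limiting argument is needed.
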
 \label{existence of finite k}
\begin{proof}
	The set $W = \{ (w_1, \ldots, w_n) \in \{  0,1 \}^n : \sum_{i=1}^{n} d_i w_i \leq S \}$ naturally represents all admissible ways to pack the agents into bins (all feasible ``configurations''). The convex hull $\mathrm{CH}(W)$ of $W$ naturally represents the set of all possible schedules, where a schedule is represented by the fraction of time allocated to each configuration.
	
	Denote $e=(1, \ldots, 1)\in\mathbb{R}^n$. Let $r_{\max}$ be the egalitarian connection time, defined by $r_{\max} = \mathrm{sup} \{ r \in [0,1] : r\cdot e \in \mathrm{CH}(W) \}$. Since the set $\mathrm{CH}(W)$ is compact, $r_{\max}$ is attained, that is $r_{\max}\cdot e \in \mathrm{CH}(W)$. The electricity division problem has a solution, so $r_{\max} > 0$. 
	
	Since $r_{\max}\cdot e \in \mathrm{CH}(W)$, by Caratheodory's theorem \cite{enwiki:1216835967_CaratheodoryThm}, there exists a simplex $\mathrm{CH}(W')$ with the set $W'\subset W$ of vertices such that $r_{\max}\cdot e \in \mathrm{CH}(W')$.	
	
	Consider the following illustrating example. Let the supply $S$ is $25$ and there are three agents $x$, $y$, and $z$ with the demands $d_x = 11$, $d_y = 12$, and $d_z = 13$, respectively. It is possible to connect each agent per $r_{\max} = 2/3$ of the time by connecting each of $\{x,y\}, \{y,z\}, \{x,z\}$ per $1/3$ of the time. This can be visually represented as shown at figure \ref{fig:two-third-example}.
	\begin{figure}[ht]
		\centering
		\includegraphics[height=0.32\textheight, width=0.32\linewidth, keepaspectratio]{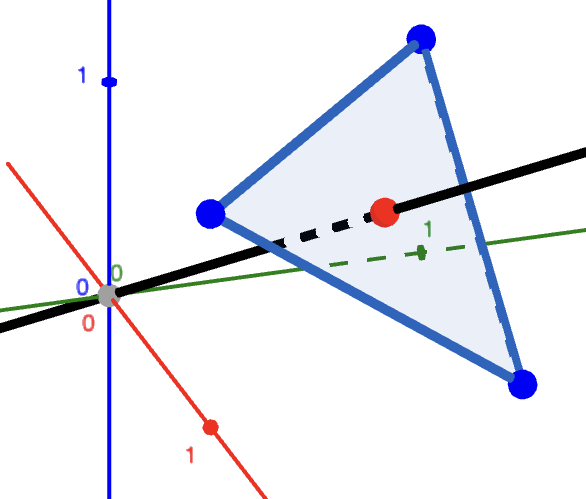}
            \caption{The three blue points are (1,1,0), (1,0,1) and (0,1,1). The black line that originates from the origin intersects with the triangle (convex hull of the three blue points) at red point (2/3,2/3,2/3).}
		\label{fig:two-third-example}
	\end{figure}
	The three blue points $(1,1,0)$, $(1,0,1)$, and $(0,1,1)$ represent elements of $W'$, and the white triangle with these vertices represents $\mathrm{CH}(W')$. The black ray originating from the origin $(0,0,0)$ is $e\cdot \{ r: r\geq 0 \}$. The red point $(2/3, 2/3, 2/3)$   represents the intersection of the ray with $\mathrm{CH}(W')$ and corresponds to $r_{\max} = 2/3$.

	Let $W''$ be the set of the vertices of the face of $\mathrm{CH}(W')$ of minimal dimension containing $r_{\max}\cdot e$. Clearly, $r_{\max} \cdot e$ is a boundary point of $\mathrm{CH}(W')$,  so $|W''| \leq n$. In other words, there exists an optimal electricity division schedule with at most $n$ different configurations. In the above example $W'' = W'$ and  $|W''| = 3 = n$.
	
	Let $(\lambda_w)_{w \in W''}$ be the barycentric coordinates of $r_{\max}\cdot e$, such that $\lambda_w > 0$ for each $w \in W''$ and $\sum_{w \in W''}\lambda_w = 1$, and
	\begin{align}
	\label{eq:rmaxe}
	r_{\max}\cdot e = \sum_{w \in W''} \lambda_w w.    
	\end{align}
	In the above example $\lambda_w = 1/3$ for each $w \in W''$.
	
	We claim that the set $W''$ is linearly independent. Indeed, suppose for a contradiction that there exist disjoint nonempty subsets $W''_1$ and $W''_2$ of $W''$ and positive numbers $(\mu_w)_{w \in W''_1 \cup W''_2}$ such that $ \sum_{w \in W''_1} \mu_w w = \sum_{w \in W''_2} \mu_w w $.
	Assume w.l.o.g. that $ \sum_{w \in W''_1} \mu_w \leq \sum_{w \in W''_2} \mu_w $. Let $ \nu = \min_{w \in W''_2} {\lambda_w / \mu_w}
	$. Then,
	\[
	r_{\max}\cdot e = \sum_{w \in W''} {\lambda_w w} = 
	\sum_{w \in W''} {\lambda_w w}  + \nu \left(\sum_{w \in W''_1} {\mu_w w} - \sum_{w \in W''_2} {\mu_w w}\right) = \sum_{w \in W''} {\lambda'_w w},
	\]
	where
	\begin{equation}
		\lambda'_w  = 
		\begin{cases}
			\lambda_w + \nu \mu_w	& \text{ if } w \in W''_1 \\	
			\lambda_w - \nu \mu_w	& \text{ if } w \in W''_2 \\
			\lambda_w 	& \text{ if } w \in W'' \setminus (W''_1 \cup W''_2)
		\end{cases} 
	\end{equation}
	Then, $ s := \sum_{w \in W''} {\lambda'_w} = \sum_{w \in W''} {\lambda_w}  + \nu (\sum_{w \in W''_1} {\mu_w} - \sum_{w \in W''_2} {\mu_w}) \leq 1, \lambda'_w \geq 0 $ for each $ w \in W'' $, and $ \lambda'_w = 0$ for some $ w \in W''_2$. Then $ s^{-1} \cdot r_{\max} \cdot e \in \mathrm{CH} (W'' \setminus \{w\})$, that contradicts the minimality of $W''$.

Let $Z:= W'' \cup \{e\}$.
As $W''$ contains a multiple of $e$, the set $Z$ is linearly dependent. 
\Cref{integer coefficeints lemma} can be applied to it with $p = |W''|\leq n$.
The lemma implies that there exist integers $(\Delta_v)_{v \in Z}$ which are not all zeroes such that $ |\Delta_v| \leq a(p)\leq a(n)$
 for each $v \in Z$ and $ \sum_{v \in Z} \Delta_v \cdot v = 0$. The set $W''$ is linearly independent, so $ \Delta_e \neq 0$ and 
\begin{align}
\label{eq:esum}
e = \sum_{w \in W''} {(-\Delta_w / \Delta_e)w}.     
\end{align}
But by \eqref{eq:rmaxe},
$e = \sum_{w \in W''} {(\lambda_w/r_{\max})}\cdot w $. As $W''$ is linearly independent, the coefficients in the expression for $e$ are unique, so we must have $-\Delta_w / \Delta_e = \lambda_w / r_{\max} > 0 $ for each $w \in W''$. 
Therefore, $-\Delta_w / \Delta_e = |\Delta_w| / |\Delta_e|$.

Construct a packing in which each configuration $w\in W''$ appears $|\Delta_w|$ times.
Then the vector 
$\sum_{w \in W''} {|\Delta_w|\cdot w}$ represents the number of times each item appears in the packing; but by \eqref{eq:esum}, this sum equals $|\Delta_e|\cdot e$. Therefore, it is a $k$-times bin-packing with $k := |\Delta_e|\leq a(n)$.

The total number of bins in the packing is $\sum_w |\Delta_w| = 
\sum_w |\Delta_e|\cdot \lambda_w/r_{\max}
=k/r_{\max}$.
Therefore, connecting each bin for an equal amount of time yields an allocation in which each agent is connected for a fraction $k/(k/r_{\max}) = r_{\max}$ of the time, as required.
\qed
\end{proof}

\subsection{Lower bound}
For any electricity division instance $D$,
let $K(D)$ denote the smallest $k$ such that 
the egalitarian connection time equals $\frac{k}{OPT(D_k)}$.
For any $n$, let $K(n)$ denote the maximum $K(D)$ over all instances with $n$ households.
\Cref{theorem: finite-k-exists} implies $K(n)\leq a(n)$, which provides an exponential upper bound on $K(n)$.
This raises the question of whether there is a matching lower bound.
Currently, we only have two very loose lower bounds.

\begin{proposition}
\label{prop:lower:n-1}
For any $n\geq 2$, there is a lower bound $K(n)\geq n-1$
\end{proposition}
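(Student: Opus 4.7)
\medskip

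\noindent\textbf{Proof proposal.} The plan is to exhibit a single explicit family of instances, one per $n \ge 2$, for which $K(D) = n-1$, thereby witnessing the lower bound $K(n) \ge n-1$. The instance I would use is the ``almost-symmetric'' one: take $n$ households all with demand $D[i] = 1$ and total supply $S = n-1$. Here every feasible configuration is a subset of $[n]$ of size at most $n-1$.

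I would first pin down the egalitarian connection time $r_{\max}$ for this instance. Because at any moment at most $n-1$ of the $n$ agents are simultaneously connected, summing the connection times of the agents gives $\sum_{i} u_i(\mathcal{I}) \le (n-1)\cdot T$, so by averaging $\min_i u_i(\mathcal{I}) \le (n-1)T/n$, hence $r_{\max} \le (n-1)/n$ (taking $T=1$). Equality is attained by running each of the $n$ subsets of $[n]$ of size $n-1$ for a $1/n$-fraction of the time, so $r_{\max} = (n-1)/n$.

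Next, I would translate this into a divisibility constraint on $k$ in $k$-times bin-packing. Every bin contains at most $n-1$ items, and a valid $k$-times packing contains exactly $nk$ item-occurrences in total, so
\[
OPT(D_k) \;\ge\; \Bigl\lceil \tfrac{nk}{n-1} \Bigr\rceil.
\]
For $k/OPT(D_k)$ to equal $r_{\max} = (n-1)/n$ we need $OPT(D_k) = nk/(n-1)$, which in particular forces $nk/(n-1)$ to be an integer. Since $\gcd(n,\,n-1) = 1$, this forces $(n-1) \mid k$, so any $k$ witnessing $r_{\max}$ via $k$-times bin-packing satisfies $k \ge n-1$.

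Finally, I would check that $k = n-1$ is actually achievable: taking as configurations the $n$ subsets of $[n]$ of size $n-1$, each used once, gives a valid $(n-1)$-times bin-packing with exactly $n$ bins, and $(n-1)/n = r_{\max}$. Therefore $K(D) = n-1$ for this instance, which proves $K(n) \ge n-1$. The argument is essentially a pigeonhole plus a coprimality observation; I do not expect any real obstacle, only the need to state the divisibility step cleanly.
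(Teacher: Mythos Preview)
Your proposal is correct and matches the paper's proof essentially line for line: the same instance ($n$ unit-demand agents with $S=n-1$), the same value $r_{\max}=(n-1)/n$ witnessed by the $n$ size-$(n-1)$ subsets, and the same coprimality argument that $nk/(n-1)$ being an integer forces $(n-1)\mid k$. Your write-up is slightly more explicit about the averaging bound for $r_{\max}$ and the divisibility step, but there is no substantive difference in approach.
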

\begin{proof}
Consider an instance with $n$ items of size $1$, and let $S:=n-1$. 
Then we can construct $n$ bins, each of which contains a different subset of $n-1$ items. 
This packing is optimal, as all bins are full; it is an $(n-1)$-times bin-packing, as each item appears in exactly $n-1$ different bins.
Therefore, $r_{\max} = (n-1)/n$.
We show that any $k<n-1$ does not yield an optimal packing. For any $k$, the sum of all item sizes is $k n$, so any $k$-times bin-packing must contain at least $k n / S = k\frac{n}{n-1}$ bins. But $n$ and $n-1$ are coprime, so for any $k<n-1$, the expression $k\frac{n}{n-1}$ is not an integer, which means that the packing must have non-full bins. \qed
\end{proof}

\begin{proposition}
\label{prop:lower:9}
$K(6) = 9$.
\end{proposition}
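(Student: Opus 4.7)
The upper bound $K(6)\le 9$ is immediate: Theorem~\ref{theorem: finite-k-exists} gives $K(n)\le a(n)$, and $a(6)=9$ is one of the determinant values recalled from OEIS just above equation~(1). So the substantive content is the matching lower bound: exhibiting a six-household instance that forces $k=9$ exactly.

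The plan is to reverse-engineer the proof of Theorem~\ref{theorem: finite-k-exists} using one of the extremal binary matrices displayed in (1). Specifically, I would pick a $6\times 6$ binary matrix $M$ with $\det M = 9$ whose columns $w_1,\dots,w_6$ (viewed as characteristic vectors of subsets of $[6]$) have $M^{-1}e$ coordinatewise positive, so that those six vectors can legitimately play the role of $W''$ in the proof of Theorem~\ref{theorem: finite-k-exists}. Cramer's rule then gives $\lambda_j = \Delta_j/T$ with $\Delta_j$ the determinant of $M$ with its $j$th column replaced by $e$, and $T=\sum_i\Delta_i$; the matrix $M$ must be chosen so that $\operatorname{lcm}_j\bigl(T/\gcd(T,\Delta_j)\bigr) = 9$, not a proper divisor.

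Once $M$ is fixed, I would manufacture the demand vector and the supply by solving the linear system $M^\top d = S\cdot\mathbf{1}$ and clearing denominators, so that every column of $M$ is a tight feasible configuration. Two finite checks then close the argument: (i) enumerate the $64$ binary vectors in $\{0,1\}^6$ to verify that no feasible configuration outside $\{w_1,\dots,w_6\}$ dominates the ray $\{r e : r\ge 0\}$ at a higher $r$, so that $W''$ is genuinely the face identified in the proof of Theorem~\ref{theorem: finite-k-exists}; and (ii) verify the arithmetic identity $\operatorname{lcm}_j (T/\gcd(T,\Delta_j)) = 9$ for the chosen $M$. Together with the upper bound, these yield $K(D) = 9$ and hence $K(6)\ge 9$.

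The main obstacle is that the three requirements on $M$ --- $\det M = 9$, $M^{-1}e > 0$, and the right arithmetic on the Cramer minors --- do not all hold for every matrix attaining $a(6)=9$. For instance, the first matrix of (1) yields $x_3 = 0$ and so the six columns are not in general position relative to $e$, while the second matrix of (1) gives $T=17$ and hence the wrong denominator. The proof therefore hinges on identifying (by a small search over binary $6\times 6$ matrices with determinant~$9$ or minor variants thereof) a matrix on which the arithmetic collapses exactly to common denominator $9$; after that, the demand construction and the dominance check are mechanical.
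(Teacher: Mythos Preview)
Your upper bound is fine, but the lower-bound plan contains a real error that leads you to discard precisely the matrix the paper uses. The quantity $\operatorname{lcm}_j\bigl(T/\gcd(T,\Delta_j)\bigr)$ is the least common denominator of the barycentric weights $\lambda_j=\Delta_j/T$; it equals the minimum total number of \emph{bins} $q$ needed to realise this particular convex combination integrally, not the multiplicity $k$. The two are related by $k=q\cdot r_{\max}=q\cdot\det M/T$. For the second matrix in~\eqref{eq:mat9} you correctly get $T=17$, hence your $\operatorname{lcm}$ is $17$; but since $\det M=9$, this yields $k=17\cdot 9/17=9$, exactly what you want --- so no search for another matrix is needed. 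Moreover, even with a corrected formula your check~(ii) would only show that \emph{this one} convex combination first becomes integral at $k=9$; it does not by itself rule out achieving $r_{\max}$ via some other packing at a smaller $k$. And check~(i) is unnecessary: once all six configurations are tight (each sums to $S$), the resulting $9$-times packing has every bin full, so $r_{\max}$ equals the trivial volume bound $S/\sum_i D_i$ without any enumeration of the $64$ binary vectors.

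The paper's argument is therefore much shorter than your plan. Taking the second matrix $A$ of~\eqref{eq:mat9}, set $S=\det A=9$ and $D=9A^{-1}e=(4,2,5,3,2,1)$, so $\sum_i D_i=17$. The packing $\mathbf{x}=9(A^{T})^{-1}e=(1,2,3,5,2,4)$ gives $17$ full bins with each item appearing $9$ times, whence $r_{\max}=9/17$. For any $k$, the equality $k/OPT(D_k)=r_{\max}$ forces $OPT(D_k)=17k/9$; since $\gcd(9,17)=1$, this is not an integer for $k<9$, so $K(D)\ge 9$. Combined with $K(6)\le a(6)=9$, this gives $K(6)=9$. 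The whole lower bound thus rests on the single arithmetic fact that the supply $9$ and the total demand $17$ are coprime, not on any $\operatorname{lcm}$ condition on the Cramer minors.
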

\begin{proof}
As $K(6)\leq a(6)=9$, it is sufficient to prove the lower bound $K(6)\geq 9$.

Let $A$ be the rightmost $6\times 6$ matrix in \eqref{eq:mat9}. Let $S := \det(A) =9$.

Let $B := \det(A)\cdot A^{-1}$, and note that $B$ is an integer matrix.

Let the vector of demands be $D := B\cdot e = [4,2,5,3,2,1]$.
By construction, Each row in $A$ corresponds to a configuration with sum exactly $9$. These configurations are: $4+5, 4+2+3, 2+5+2', 5+3+1, 4+3+2', 4+2+2'+1$ (where $2'$ denotes the second household with demand $2$).

Construct a packing by $\mathbf{x} := B^T \cdot e = [1,2,3,5,2,4]$.
Every element $x_i$ represents the number of times that configuration $i$ appears in the packing. In particular, there are ---
\begin{itemize}
\item $1$ bin with $4+5$;
\item $2$ bins with $4+2+3$;
\item $3$ bins with $2+5+2'$;
\item $5$ bins with $5+3+1$;
\item $2$ bins with $4+3+2'$;
\item $4$ bins with $4+2+2'+1$.
\end{itemize}
By construction, each item appears exactly $9$ times, so it is a valid $9$-times bin-packing. It has $17$ bins, and it is optimal since all bins are full. Therefore, $r_{\max} = 9/17$.

We now show that any $k<9$ does not yield an optimal packing. For any $k$, the sum of all bins in $k$BP is $17 k$, so any $k$BP must contain at least $17 k / 9$ bins. 
But $17$ and $9$ are coprime, Therefore, for any $k<9$, the expression $17 k / 9$ is not an integer, which means that the packing must have non-full bins. \qed
\end{proof}
It is open whether \Cref{prop:lower:9} can be generalized.
The following conjecture aims to generalize \Cref{prop:lower:9}.

\begin{conjecture}
\label{prop:coprime}
Let $A$ be an $n\times n$ binary matrix, 
let $B := \det(A)\cdot A^{-1}$,
and let $g$ be the sum of all elements in $B$ (the ``grand sum'' of $B$).
If $g$ and $\det(A)$ are coprime, then $K(n)\geq \det(A)$.
\end{conjecture}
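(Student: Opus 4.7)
The plan is to generalize the argument of \Cref{prop:lower:9} essentially verbatim, using the fact that $B = \det(A)\cdot A^{-1}$ is the adjugate of $A$ and therefore has integer entries whenever $A$ does. First I would set $S := \det(A)$ to be the bin capacity, take $D := B\cdot e$ as the demand vector, and take $\mathbf{x} := B^{T}\cdot e$ as the multiplicity vector of configurations (the $i$-th row of $A$ represents a bin, taken $x_i$ times in the packing). Since $A$ is binary, $B$ is an integer matrix, so $D$ and $\mathbf{x}$ are integer vectors; I would additionally assume (as is silently used in the $n=6$ witness) that $D > 0$ and $\mathbf{x} \geq 0$, so that they genuinely describe a demand vector and a non-negative bin multiplicity vector.

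From $A D = \det(A)\cdot e = S\cdot e$, each row of $A$ is a feasible configuration whose demand-sum equals the capacity $S$ exactly, hence the proposed packing consists only of full bins. The number of appearances of item $j$ in the packing is $(A^{T}\mathbf{x})_j = ((B A)^{T} e)_j = (\det(A)\cdot I\cdot e)_j = S$, so this is a valid $S$-times bin-packing. Its total number of bins equals $e^{T}\mathbf{x} = e^{T} B^{T} e = e^{T} B e = g$, which coincides with the trivial lower bound $(\text{total item mass})/S = (S g)/S = g$. Therefore $OPT(D_S) = g$, and so the egalitarian connection time satisfies $r_{\max} = S/g = \det(A)/g$.

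For the lower bound on $K(D)$, fix any $k$ with $1\leq k < \det(A)$. The total size of items in $D_k$ is $k\cdot e^{T} D = k\cdot e^{T} B e = k g$, so any valid $k$BP uses at least $\lceil k g / S\rceil$ bins. The coprimality hypothesis $\gcd(g,\det(A)) = 1$ forces $k g/\det(A)$ to be a non-integer for every $0 < k < \det(A)$, and hence $OPT(D_k) \geq \lceil k g/\det(A)\rceil > k g/\det(A)$. Consequently $k/OPT(D_k) < \det(A)/g = r_{\max}$, so this $k$ does not achieve the egalitarian rate. Hence $K(D) \geq \det(A)$, and therefore $K(n) \geq \det(A)$ as required.

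The main obstacle is not the algebra, which is essentially a linear-algebraic tautology driven by $B A = \det(A)\cdot I$, but that the conjecture as stated does not guarantee either $D > 0$ or $\mathbf{x}\geq 0$; without these positivity conditions the constructed ``instance'' or ``packing'' may fail to be well-defined. A clean formulation should therefore add these as explicit hypotheses (the $n=6$ witness of \Cref{prop:lower:9} does satisfy them), and the remaining nontrivial content of the conjecture lies in exhibiting families of binary matrices $A$ of every size $n$ for which $D = B e$ is strictly positive, $\mathbf{x} = B^{T} e$ is non-negative, and $\gcd(g,\det(A)) = 1$ — i.e. in showing that the lower-bound construction scales up to arbitrarily large determinants.
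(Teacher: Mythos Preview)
Your analysis is on target, and in fact it mirrors the paper's own thinking: this statement is left as a \emph{conjecture} in the paper precisely because the authors could not close the gap you identify. The paper contains (in commented-out form) essentially the same construction --- set $S=\det(A)$, $D=Be$, $\mathbf{x}=B^{T}e$, then use $AB=\det(A)\cdot I$ to verify that every row of $A$ is a full configuration and that each item appears exactly $S$ times --- together with the same coprimality argument showing that no $k<\det(A)$ can yield full bins. The authors explicitly flag the two obstructions you name: the demands $D=Be$ may have non-positive entries, and the multiplicities $\mathbf{x}=B^{T}e$ may be negative, so the ``instance'' and ``packing'' need not be well-defined.

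So your proposal is not a proof of the conjecture as stated, and you are right that it cannot be: the hypotheses are too weak. Your suggestion to add $D>0$ and $\mathbf{x}\geq 0$ as explicit assumptions would yield a correct proposition (and \Cref{prop:lower:9} is exactly the $n=6$ instance of it), but the interesting open content --- which the paper also leaves unresolved --- is whether such matrices $A$ exist with $\det(A)$ close to $a(n)$, or at least growing fast enough to give a nontrivial lower bound on $K(n)$.
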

Note that \Cref{prop:lower:9} is a special case with $\det(A)=9$ and $g=17$.
\section{Fast Approximation Algorithms} \label{section:approx_algo}
The results of the previous section are not immediately applicable to fair electricity division, as $k$BP is known to be an NP-hard problem. However, they do hint that good approximation algorithms for $k$BP can provide good approximation for electricity division.
Therefore, in this section, we study several fast approximation algorithms for $k$BP.

\subsection{\texorpdfstring{$FFk$}{FFk} --- First-Fit for \texorpdfstring{$k$}{k}BP} \label{section:approx-algo:subsection:approx_algo:ffk}
The $k$-times version of the First-Fit bin-packing algorithm packs each item of $D_k$ in order into the first bin where it fits and does not violate the constraint that each item should appear in a bin at most once. If the item to pack does not fit into any currently open bin, $FFk$ opens a new bin and packs the item into it. For example: consider $D = \{10,20,11\}, k=2, S=31$. $FFk$ will result the bin-packing $\{10,20\}, \{11,10\}, \{20,11\}$. 
It is known that the asymptotic approximation ratio of $FF$ is $1.7$ \cite{Dsa2013}.
Below, we prove that, for any fixed $k>1$, the asymptotic approximation ratio of $FFk$ for large instances (when $OPT(D_k)\to \infty$) is better, and it improves when $k$ increases.

\begin{theorem} \label{ffk:theorem3/2}
	For every input $D$ and $k\geq 1$, $FFk(D_k) \leq \left(1.5+\frac{1}{5k}\right) \cdot OPT(D_k) + 3\cdot k$.
\end{theorem}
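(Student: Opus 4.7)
The plan is to adapt the classical First-Fit analysis (which yields $FF(E) \leq 1.7\,OPT(E) + 3$ for standard bin-packing) to the $k$BP setting, exploiting the repetition structure of $D_k$. The crucial new observation is this: whenever $FFk$ opens a new bin because of a ``small'' item $x$ with $c(x) \leq S/2$, every existing bin that does not already contain a copy of $x$ must have load $> S/2$---otherwise $x$ would fit there on size grounds alone. Since at most $k-1$ prior copies of $x$ have been placed, at most $k-1$ existing bins can be ``light'' (load $< S/2$) at that moment. By induction on the processing steps, $FFk$ produces at most $k$ light bins overall, which is the source of part of the $3k$ additive term.

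For the remaining ``heavy'' bins (load $\geq S/2$), the naive volume bound $2\,OPT(D_k)$ is too weak. I would split them into \emph{very heavy} (load $\geq 2S/3$) and \emph{medium heavy} (load in $[S/2, 2S/3)$). The very heavy class contributes at most $\tfrac{3}{2}\,OPT(D_k)$ by the volume lower bound $OPT(D_k) \geq k V(D)/S$. The number of medium-heavy bins is where the $\tfrac{1}{5k}$ term must enter. To bound it, I would employ a weighting function $w(\cdot)$ in the spirit of Garey--Graham--Johnson--Yao, assigning each item $x$ a weight of the form $c(x)/S$ plus size-dependent bonuses (notably for items of size in $(S/3, S/2]$ and $(S/2, S]$), then establish that (i) each $FFk$ bin, apart from a bounded number of exceptions, has total weight $\geq 1$, and (ii) each OPT bin of $D_k$ has total weight at most $\tfrac{3}{2} + \tfrac{1}{5k}$.

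The main obstacle is (ii): improving the classical per-OPT-bin weight bound from $1.7$ down to $\tfrac{3}{2} + \tfrac{1}{5k}$ requires exploiting the fact that in any $k$BP packing each distinct item of $D$ is spread over $k$ distinct bins, so the extremal configurations that realize the tight $1.7$ bound in standard FF cannot all be simultaneously instantiated inside OPT bins of $D_k$. I would proceed by a case analysis on the item-size profile inside a single OPT bin---partitioning items by thresholds $S/2$, $S/3$, and $S/6$---and use the $k$-replication to show that the worst configurations lose $\Theta(1/k)$ worth of weight when averaged across the $k$ bins hosting each item. Summing the weighted inequalities then gives $FFk(D_k) \leq \bigl(\tfrac{3}{2} + \tfrac{1}{5k}\bigr)\,OPT(D_k) + 3k$, where the $3k$ absorbs the $k$ light bins together with the weight-$<1$ exceptions in (i).
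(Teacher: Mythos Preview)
Your proposal contains a fundamental misconception about where the improvement from $1.7$ to $1.5+\tfrac{1}{5k}$ comes from. You locate it on the OPT side, arguing that ``the extremal configurations that realize the tight $1.7$ bound in standard FF cannot all be simultaneously instantiated inside OPT bins of $D_k$'' and that averaging across the $k$ bins hosting each item lowers the per-OPT-bin weight. But this is false: an optimal packing of $D_k$ can simply be $k$ identical copies of an optimal packing of $D$, so every OPT configuration from the $k=1$ case is instantiated $k$ times with exactly the same weight. The $k$-replication constraint places no restriction on OPT beyond what a single copy already satisfies, and no \emph{fixed} weighting function will give a per-OPT-bin bound better than $17/12$ (in the standard normalization) while keeping the $FFk$ bins at $\geq 10/12$. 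Your step (ii) therefore cannot be carried out as stated.

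The paper obtains the improvement from the \emph{algorithm} side, via an \emph{instance-dependent} weighting that exploits the sequential processing of the $k$ copies. Call an $FFk$ bin \emph{underfull} if it holds a single item of size in $(S/2,2S/3)$, and let $u$ be the first instance whose $FFk$ bins include an underfull one. For instances $1,\dots,u-1$ (no underfull bins), the reward of items larger than $S/2$ is cut from $4/12$ to $2/12$; since every Group-3 bin there has its item $>2S/3$, no $FFk$ bin drops below $10/12$, while the OPT bins associated with those instances fall to $\leq 15/12$. For instances $u+1,\dots,k$, the underfull bin of instance $u$ forces every item appearing in those later $FFk$ bins to have size $>S/3$, so an entirely different (smaller) weight function can be used there, again capping the associated OPT bins at $15/12$ while keeping $FFk$ bins at $\geq 10/12$. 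Only instance $u$ retains the $17/12$ bound. Averaging yields $\bigl((k-1)\cdot\tfrac{15}{12}+\tfrac{17}{12}\bigr)\big/\bigl(k\cdot\tfrac{10}{12}\bigr)=1.5+\tfrac{1}{5k}$. Your outline has no analogue of the underfull-bin pivot, the instance-by-instance weight modification, or the re-association of OPT items to instances that makes this averaging legitimate; without these ingredients the argument cannot close.
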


\begin{proof}
	At a very high level, the proof works as follows:
	\begin{itemize}
		\item We define a \emph{weight} for each item, which depends on the item size, and may also depend on the instance to which the item belongs.
		\item We prove that the average weight of each bin in an optimal packing is at most some real number $Z$, so the total weight of all items is at most $Z\cdot OPT(D_k)$.
		\item We prove that the average weight of a bin in the $FFk$ packing (except some $3 k$ bins that we will exclude from the analysis) is at least some real number $Y$, so the total weight of all items is at least $Y\cdot (FFk(D_k) - 3 \cdot k)$.
		\item Since the total weight of all items is fixed, we get $FFk(D_k) - 3\cdot k \leq (Z/Y)\cdot OPT(D_k)$, which gives an asymptotic approximation ratio of $Z/Y$.
	\end{itemize}
	
	\paragraph{\textbf{Basic weighting scheme}}
	
	The basic weighting scheme we use follows \cite{Dsa2013}. The weight of any item of size $v$ is defined as
	\begin{align*}
		w(v) := v/S + r(v),
	\end{align*}
	where $r$ is a \emph{reward function}, computed as follows:
	\[
	r(v) := 
	\begin{cases}
		0 \quad & \text{if } v/S \leq \frac{1}{6}, \\
		\frac{1}{2}(v/S - \frac{1}{6}) \quad & \text{if } v/S \in (\frac{1}{6}, \frac{1}{3}), \\
		1/12 \quad & \text{if } v/S \in [\frac{1}{3}, \frac{1}{2}], \\
		4/12 \quad & \text{if } v/S > \frac{1}{2}.
	\end{cases}
	\]

	For $k>1$, we use a modified weighting scheme, which  gives different weights to items that belong to different instances. This allows us to get a better asymptotic ratio. We describe the modified weighting scheme below.
	
	\paragraph{\textbf{Associating bins with instances}}
	
	Recall that $FFk$ processes one instance of $D$ completely, and then starts to process the next instance. 
	We associate each bin in the $FFk$ packing with the instance in which it was opened. So for every $j \in\{1,\ldots,k\}$, the \emph{bins of instance $j$} are all the bins, whose first allocated item comes from the $j$-th instance of $D$.
	Note that bins of instance $j$ do not contain items of  instances $1,\ldots,j-1$, but may contain items of any instance $j,\ldots,k$. 
	
	For the analysis, we also need to associate some bins in the optimal packing with specific instances.  We consider some fixed optimal packing.
	For each bin $B$ in that packing:
	\begin{itemize}
		\item If $B$ contains an item $x$ with $V(x)>S/2$, and $x$ belongs to instance $j$, 
		then we associate bin $B$ with instance $j$. Clearly, there can be at most one item of size larger than $S/2$ in any feasible bin, so there is no ambiguity.
		Moreover, we ensure that all other items in $B$ belong to instance $j$ too: if some other item $x'\in B\setminus \{x\}$ belongs to a different instance $j'$, then we replace $x'$ with its copy from instance $j$. Note that the other copy of $x'$ must be located in a bin different than $B$, due to the restrictions of $k$BP. Since both copies of $x'$ have the same size, the size of all bins remains the same.
		\item If $B$ contains no item of size larger than $S/2$, then we do not associate $B$ with any instance. 
	\end{itemize}
	
	\paragraph{\textbf{Partitioning $FFk$ bins into groups}}
	For the analysis, we partition the bins of each instance $j\in[k]$ in the $FFk$ packing into five groups.
	
	\textbf{Group 1}.
	Bins with a single item of instance $j$, whose size is at most $S/2$. 
	There is at most one such bin.
	This is because, if there is one such bin $B$ of instance $j$,
	it means that all later items of instance $j$ do not fit into $B$, so their size must be greater than $S/2$. Therefore, any later bin $B'$ of instance $j$ must have size larger than $S/2$.
	
	\textbf{Group 2}.
	Bins with two or more items of instance $j$, whose total size is at most $2S/3$.
	There is at most one such bin.
	This is because, if there is one such bin $B$ of instance $j$,
	it means that all later items of instance $j$ do not fit into $B$, so their size must be greater than $S/3$. Therefore, in any later bin $B'$ with two or more items of instance $j$, their total size is larger than $2S/3$.
	
	\textbf{Group 3}.
	Bins with a single item of instance $j$, whose size is larger than $S/2$. 
	
	\textbf{Group 4}.
	Bins with two or more items of instance $j$, whose total size is at least $10 S/12$.
	
	\textbf{Group 5}.
	Bins with two or more items of instance $j$, whose total size is in $(2 S/3, 10 S/12)$.
	
	In the upcoming analysis, we will exclude from each instance, the at most one bin of group 1, at most one bin of group 2, and at most one bin of group 5. All in all, we will exclude at most $3k$ bins of the $FFk$ packing from the analysis.
	
	\paragraph{\textbf{Analysis using the basic weighting scheme}}
	First we prove that, with the basic weighting scheme, the total weight of each optimal bin $B$ is at most $17/12$. Since the total size of $B$ is at most $S$, we have $w(B)\leq 1+r(B)$, so it is sufficient to prove that the reward $r(B)\leq 5/12$. Indeed:
	\begin{itemize}
		\item If $B$ contains an item larger than $S/2$, then 
		this item gives $B$ a reward of $4/12$;
		the remaining room in $B$ is smaller than $S/2$. 
  This can accommodate either a single item of size at least $S/3$ and some items smaller than $S/6$,
  or two items of size between $S/6$ and $S/3$; in both cases, the total reward is at most $1/12$. 
            \item If $B$ does not contain an item larger than $S/2$, then there are at most $5$ items larger than $S/6$, so at most $5$ items with a positive reward. The reward of each item of size at most $S/2$ is at most $1/12$.
	\end{itemize}
 In both cases, the total reward is at most $5/12$.	
 
	We now prove that, with the basic weighting scheme, the average weight of each $FFk$ bin is at least $10/12$. 
	In fact, we prove a stronger claim: we prove that, for each instance $j$, the average weight of the items \emph{of instance $j$ only} in bins of instance $j$ (ignoring items of instances $j+1,\ldots k$ if any) is at least $10/12$.
	
	We use the above partition of the bins into 5 groups, excluding at most one bin of group 1 and at most one bin of group 2.
	
	The bins of group 3 (bins with a single item of instance $j$, which is larger than $S/2$) have a reward of $4/12$, so their weight is at least $1/2+4/12 = 10/12$.
	
	The bins of group 4 (bins with two or more items of instance $j$, which have a total size larger than $10 S/12$) already have weight at least $10/12$, regardless of their reward.
	
	We now consider the bins of group 5 (bins with two or more items of instance $j$, which have a total size in $(2S/3, 10S/12)$).
	Denote the bins in group 5 of instance $j$, in the order they are opened, by $B_1,\ldots,B_Q$.
	Consider a pair of consecutive bins, for instance, $B_t, B_{t+1}$.
	Let $x := $ the free space in bin $B_t$ during instance $j$.
	Note that $x\in (S/6, S/3)$. Let $c_1,c_2$ be some two items of instance $j$ which are packed into $B_{t+1}$.
	
	For each $c_i \in \{c_1,c_2\}$, as $FFk$ did not pack $c_i$ into $B_t$ during the processing of instance $j$, there are two options:
	either $c_i$ is too large (its size is larger than $x$), or $B_t$ already contains other copies of $c_i$ from other instances. But the second option cannot happen, because $B_t$ belongs to instance $j$, so it contains no items of instances $1,\ldots,j-1$; and while $c_i$ of instance $j$ was processed by $FFk$, items of instance $j+1,\ldots,k$ were not processed yet. Therefore, necessarily $V(c_i)>x$.
	
	Since $S/6<x<S/3$, the reward of each of $c_1,c_2$ is at least $(x/S-1/6)/2$, and the reward of both of them is at least $x/S-1/6$. Therefore, during instance $j$,
	\begin{align*}
		V(B_t)/S + r(B_{t+1}) \geq (S-x)/S + (x/S-1/6)
		= 1-1/6 = 10/12.
	\end{align*}
	In the sequence $B_1,\ldots,B_Q$, there are $Q-1$ consecutive pairs. Using the above inequality, we get that the total weight of these bins is at least $10/12\cdot (Q-1)$, which is equivalent to excluding one bin (in addition to the two bins excluded in groups 1 and 2).
	
	Summing up the weight of all non-excluded bins gives at least $(10/12)\cdot (FFk(D_k)-3k)$. Meanwhile, the total weight of optimal bins is at most $(17/12)\cdot OPT(D_k)$. This yields $FFk(D_k)\leq (17/10)\cdot OPT(D_k) + 3 k$, which corresponds to the known asymptotic approximation ratio of $1.7$ for $k=1$.
	
	We now present an improved approximation ratio for $k>1$. 
	To do this, we give different weights to items of different instances. 
	We do it in a way that the average weight of the $FFk$ bins (except the $3k$ excluded bins) will remain at least $10/12$. 
	Meanwhile, the average weight of the optimal bins in some $k-1$ instances (as well as the unassociated bins) will decrease to $15/12$, whereas the average weight of the optimal bins in a single instance will remain $17/12$.
	This would lead to an asymptotic ratio of at most $1.5+\frac{1}{5k}$.

	\paragraph{\textbf{Warm-up: $k=2$}}
	As a warm-up, we analyze the case $k=2$, and prove an asymptotic approximation ratio of $1.5+\frac{1}{2\cdot 5} = 1.6$.
	We give a detailed proof of the general case of any $k \geq 1$ in appendix \ref{appendix: ffk-algorithm: ffk-general-case-proof}. 
	
	We call a bin in the $FFk$ packing \emph{underfull} if it has only one item, and this item is larger than $S/2$ and smaller than $2S/3$. Note that all underfull bins belong to group 3. 
	We consider two cases.
	
	\paragraph{Case 1} No $FFk$ bin of instance $1$ is underfull (in instance $2$ there may or may not be  underfull bins). 
	In this case, we modify the weight of items in instance $1$ only: we reduce the reward of each item of size $v>S/2$ in instance $1$ from $4/12$ to $2/12$. 
	\begin{itemize}
		\item For any bin $B$ in the optimal packing, if $B$
		does not contain an item larger than $S/2$, then its maximum possible reward is still $3/12$ as with the basic weights. 
		If $B$ contains an item larger than $S/2$, then this item gives $B$ a reward of $2/12$ if it belongs to instance $1$, or $4/12$ if it belongs to instance $2$. The remaining room in $B$ is smaller than $S/2$, and the maximum total reward of items that can fit into this space is $1/12$ 
(as the remaining space can accommodate either one item of size at least $S/3$ and some items of size smaller than $S/6$, or two items of sizes in $(S/6,S/3)$). 
Overall, the reward of $B$ is at most $5/12$ if it contains an item larger than $S/2$ from instance $2$, and at most $3/12$ otherwise. As at most half the bins in the optimal packing contain an item larger than $S/2$ from instance $2$, at most half these bins have reward $5/12$, while the remaining bins have reward at most $3/12$, so the average reward is at most $4/12$.
		Adding the size of at most $1$ per bin leads to an average weight of at most $16/12$.
		\item In the $FFk$ packing, we note that the change in the reward affects only the bins of group $3$ (bins with a single item and $V(B)>S/2$). These bins now have a reward of at least $2/12$. Since none of them is underfull by assumption, their weight is at least $2/3+2/12 = 10/12$.
		The bins of groups $4$ and $5$ are not affected by the reduced reward: the same analysis as above can be used to deduce that their average weight (except one bin per instance excluded in group $5$) is at least $10/12$.
	\end{itemize}
	
	\paragraph{Case 2} At least one $FFk$ bin of instance $1$ is underfull.
	In this case, we modify the weight of items in instance $2$ only, as follows  (note that we reduce the \emph{weights} of these items, and not only their rewards):
	\[
	w(v) = 
	\begin{cases}
		0 \quad & \text{if } v/S < \frac{1}{3}, \\
		5/12 \quad & \text{if } v/S \in [\frac{1}{3}, \frac{1}{2}], \\
		10/12 \quad & \text{if } v/S > \frac{1}{2}.
	\end{cases}
	\]
	Note that all these weights are not higher than the basic weights of the same items. 
	\begin{itemize}
		\item In the optimal packing, every bin that contains an item larger than $S/2$ from instance $2$ is associated with instance $2$, and therefore contains only items of instance $2$ by construction. Therefore, the weight of any such bin is at most $15/12$ (the remaining space in such bin can pack only one item from $[S/3,S/2]$ of positive weight).
		The weight of every bin that contains no item larger than $S/2$ is still at most $15/12$, since its reward is at most $3/12$ as with the basic weights.
		The only bins that may have a larger weight (up to $17/12$) are those that contain an item larger than $S/2$ from instance $1$; at most half the bins in the optimal packing contain such an item. Therefore, the average  weight of a bin in the optimal packing is at most $16/12$.
		\item In the $FFk$ packing, 
		some bin $B^1$ in instance $1$ is underfull, that is, $B^1$ contains a single item of size $v\in(S/2, 2S/3)$, and the free space in $B^1$ is $S-v\in (S/3, S/2)$. 
		This means that, in the bins of instance $2$, there is no item of size at most $S/3$, since every such item would fit into $B^1$ (it is smaller than the available space in $B^1$, and it is not a copy of the single item in $B^1$).
		So for every bin $B^2$ in instance $2$ (except the excluded ones), there are only two options:
		\begin{itemize}
			\item $B^2$ contains one item larger than $S/2$ --- so its weight is $10/12$;
			\item $B^2$ contains two items, each of which is larger than $S/3$ --- so its weight is at least $2\cdot 5/12 = 10/12$.
		\end{itemize}
		The weight of items in instance $1$ does not change. So the average weight of bins in the $FFk$ packing (except the $3k$ excluded bins) remains at least $10/12$.
	\end{itemize}
	
	In all cases, for $k=2$, we get that the average weight of bins in the optimal packing is at most $16/12$, and the average weight of non-excluded bins in the $FFk$ packing is at least $10/12$. Therefore,
	$FFk -3k \leq (16/10)\cdot OPT = 1.6\cdot OPT$.
\qed	
\end{proof}

\paragraph{Approximation ratio lower bound}
For $k=1$, the approximation ratio of $1.7$ is tight; an example is given in \cite{Dósa_Sgall_2014} and on page 306 in \cite{Johnson1974}.
In appendix \ref{appendix: ffk-worst-case-example-johnsonpaper}, we analyze these examples for $k=2$, and show that $FFk$ achieves a ratio of $1.35$. For the example given in \cite{Johnson1974} we observed that as $k$ increases, the approximation ratio continues to decrease. This raises the question of whether the bound of Theorem \ref{ffk:theorem3/2} is tight.

Although we do not have a complete answer for this question, we show below an example in which $FFk$ attains an approximation ratio of $1.375$ for $k=2$, which is higher than the example of \cite{Johnson1974}.

Consider the example $D = \{371, 659, 113, 47, 485, 3, 228, 419, 468, 581, 626 \}$ and bin capacity ${S} = 1000$. Then, $FFk$ for $k=2$ will result in 11 bins $[371, 113, 47, 3, 228],$ $[659, 113, 47, 3],$ $[485, 419],$ $[468, 371],$ $[581, 228],$ $[626], [659],$ $[485, 419],$ $[468], [581], [626]$. The optimal packing for the items in $D_2$ is $[626, 371, 3],$ $[626, 371, 3]$, $[659, 228, 113],$ $[659, 228, 113],$ $[419, 581],$ $[419, 581],$ $[468, 485, 47],$ $[468, 485, 47]$. Clearly, $\frac{FFk(D_2)}{OPT(D_2)} = \frac{11}{8} = 1.375$. 
We observed that as $k$ increases, the approximation ratio continues to decrease.
Experimental results in support of this ratio are given in appendix \ref{section:results}.
Based on the example for which we have achieved a ratio of $1.375$ and the simulation of the algorithm on different datasets, we conjecture that for $k>1$ the \emph{absolute} approximation ratio for $FFk$ is $1.375$. 

\subsection{\texorpdfstring{$FFDk$}{FFDk}} \label{section:approx-algo:subsection:approx_algo:ffdk}
The $k$-time version of the First-Fit Decreasing bin-packing algorithm first sorts $D$ in non-increasing order. Then it  constructs $D_k$ using $k$ consecutive copies of the sorted $D$, and then implements $FFk$ on  $D_k$.
In contrast to $FFk$, we could not prove an upper bound for $FFDk$ that is better than the upper bound for $FFD$; we only have a lower bound.

\begin{lemma} \label{ffdk:lemma1}
	$FFDk(D_k) \geq \frac{7}{6} \cdot OPT(D_k) + 1$.
\end{lemma}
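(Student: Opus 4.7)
The claim is an existential lower bound on the worst-case behavior of $FFDk$: our plan is to exhibit a concrete instance $D$ (together with a choice of bin capacity $S$ and, if needed, a parameter controlling the instance size) and verify by direct simulation that $FFDk(D_k) \geq \frac{7}{6}\,OPT(D_k) + 1$. Rather than attempting a general averaging/weighting argument (as used for the matching upper bounds), a single well-chosen instance is sufficient, in the same spirit as the classical worst-case constructions for $FFD$ by Johnson and by D\'osa.

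The natural construction takes items of a small number of distinct sizes chosen just above the thresholds $S/2$, $S/3$, and $S/4$, so that the sort-then-greedy-pack rule of $FFDk$ is forced to pair each "large" item with the first medium item that fits, wasting a controlled amount of space, while the optimum instead groups several medium items together and isolates some large items, achieving a denser packing. Given the $7/6$ target, I would search for a configuration in which $6$ optimal bins are matched by $7$ $FFDk$ bins (plus a small additive remainder), and then scale by replicating the pattern to push the ratio arbitrarily close to $7/6$ while accommodating the $+1$ via the residual tail. Concretely, the verification proceeds in three steps: (i)~simulate $FFDk$ on the sorted sequence $D_k$, tracking the state of every bin across the $k$ consecutive sorted blocks; because all item sizes in the chosen family are distinct, the $k$BP no-duplicate-in-a-bin constraint imposes no extra loss on $FFDk$ beyond what the sizes already force, so the behavior in each of the $k$ passes is essentially identical to the first; (ii)~exhibit an explicit feasible $k$BP packing achieving the claimed optimum, verifying that every bin has total size at most $S$ and that each item appears in exactly $k$ distinct bins; (iii)~plug the two counts into the inequality to confirm the claimed bound.

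The main obstacle is arranging that the $k$-fold replication does not accidentally help either algorithm in a way that destroys the ratio. A convenient and sufficient structural property is to choose an instance in which every bin of the optimum packing contains items of pairwise distinct sizes; then the optimum for $D_k$ is simply $k$ disjoint copies of the optimum for $D$, giving $OPT(D_k) = k\cdot OPT(D)$, and analogously $FFDk(D_k)$ decomposes cleanly across the $k$ passes (with possibly one extra residual bin accounting for the $+1$). Once this structural property is secured, the ratio $FFDk/OPT$ for $D_k$ inherits the ratio for $D$, and the rest of the proof is a short, mechanical arithmetic check that $FFDk(D_k) - 1 \geq \tfrac{7}{6}\,OPT(D_k)$ for the chosen instance. \qed
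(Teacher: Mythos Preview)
Your high-level approach---exhibit an explicit instance and simulate $FFDk$ on it---is exactly what the paper does. However, your proposal is only a plan: you never actually name the instance, and more importantly, the structural shortcut you rely on is incorrect.

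The paper uses D\'osa's instance: with $S=1$ and small $\delta>0$, take $D$ to be four copies each of $\tfrac12+\delta$, $\tfrac14+2\delta$, $\tfrac14+\delta$, and eight copies of $\tfrac14-2\delta$. The optimal packing fills every bin exactly, so $OPT(D_k)=6k$. The delicate part is computing $FFDk(D_k)$: the paper verifies that the first pass of $FFD$ uses $8$ bins, but each subsequent pass uses only $7$, because small items from pass $j+1$ slot into partially-filled bins left over from earlier passes. Hence $FFDk(D_k)=8+7(k-1)=7k+1=\tfrac76\cdot OPT(D_k)+1$.

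This is precisely where your sketch breaks. You assert that with distinct item sizes ``the behavior in each of the $k$ passes is essentially identical to the first'' and that $FFDk(D_k)$ ``decomposes cleanly across the $k$ passes (with possibly one extra residual bin).'' That is not how $FFDk$ behaves on lower-bound instances: the reason $FFD$ exceeds $OPT$ is exactly that many $FFD$ bins have slack, and items from later passes \emph{do} fill that slack, so later passes use strictly fewer new bins than the first. In the paper's example the per-pass count drops from $8$ to $7$; had your assumption held, you would get $8k$ bins and a (false) $\tfrac43$ lower bound. The cross-pass interaction is the whole content of the computation, not a negligible residual.

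A secondary issue: insisting on pairwise-distinct sizes does not neutralize the $k$BP constraint. Two copies of the same original item still cannot share a bin, regardless of whether other items happen to have the same size. So distinctness buys you nothing for step~(i), while it takes you away from the standard constructions (which all use repeated sizes). To complete the proof you must supply the explicit instance and carry out the pass-by-pass simulation, tracking which earlier bins absorb items from later passes.
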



\begin{proof}
	We use the following example from page $2$ of \cite{dosa_tight_2007}. Let $\delta$ be a sufficiently small positive number, and let $S=1$. Let $D = \{\frac{1}{2}+ \delta, \frac{1}{2}+ \delta, \frac{1}{2}+ \delta, \frac{1}{2}+ \delta, \frac{1}{4}+ 2\delta, \frac{1}{4}+ 2\delta, \frac{1}{4}+ 2\delta, \frac{1}{4}+ 2\delta, \frac{1}{4}+ \delta, \frac{1}{4}+ \delta, \frac{1}{4}+ \delta, \frac{1}{4}+ \delta, \frac{1}{4} - 2\delta, \frac{1}{4} - 2\delta, \frac{1}{4} - 2\delta, \frac{1}{4} - 2\delta, \frac{1}{4} - 2\delta, \frac{1}{4} - 2\delta, \frac{1}{4} - 2\delta, \frac{1}{4} - 2\delta\}$. 
 
 An optimal packing for $D$ contains 4 bins of type $\{ \frac{1}{2}+\delta, \frac{1}{4}+\delta, \frac{1}{4} - 2\delta \}$ and 2 bins of type $\{ \frac{1}{4}+ 2\delta, \frac{1}{4}+ 2\delta, \frac{1}{4} - 2\delta, \frac{1}{4} - 2\delta \}$. As all bin sizes are exactly $1$, this pattern is clearly optimal for any $k$. Therefore, for all $k\geq 1$, $OPT(D_k) = 6k$. 
	
	On applying $FFDk$ on $D_k$ the resulting number of bins are $8 + 7(k-1)$. This gives us a lower bound of $\frac{7}{6} \cdot OPT(D_k) + 1$.
 \qed
\end{proof}

We discuss the challenges in extending the existing proof for the $FFD$ in appendix \ref{appendix:FFDk-algorithm}.
Based on the simulation of $FFDk$ on different datasets, we conjecture that the upper bound for $FFDk$ is $\frac{11}{9} \cdot OPT(D_k) + \frac{6}{9}$. Experimental results supporting this conjecture are provided in  \Cref{section:results}.

\subsection{\texorpdfstring{$NFk$}{NFk}}
\label{section:approx-algo:subsection:NFk}
Given the input $D_k$, the algorithm $NFk$ works as follows: like ${NF}$, ${NFk}$ always keeps a single bin open to pack items. If the current item does not pack into the currently open bin then ${NFk}$ closes the current bin and opens a new bin to pack the item. 

We can assume that $V(D) > S$, otherwise there is a trivial solution with $k$ bins. 
While processing input $D_k$, $NFk$ holds only one open bin, and it cannot contain a copy of each item of $D$. In fact, the open bin always contains a part of some instance of $D$, and possibly a part of the next instance of $D$, with no overlap. Therefore, 
if the current item $x$ is not packed into the current open bin, the only reason is that $x$ does not fit, as there is no  previous copy of $x$ in the current bin (all previous copies, if any, are in already-closed bins).
\begin{theorem} \label{NFk:theorem:asymptotic-ratio-2}
	For every input $D_k$ and $k \geq 1$,  the asymptotic ratio of ${NFk}(D_k)$ is 2.
\end{theorem}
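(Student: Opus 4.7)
The plan is to prove both directions of the asymptotic ratio $2$ separately: an upper bound of $2\cdot OPT(D_k)+O(1)$ for every instance, and a family of instances on which the ratio tends to $2$.

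\textbf{Upper bound.} I would adapt the classical $NF$ argument after first justifying that the $k$BP conflict constraint never causes $NFk$ to open a new bin. The paragraph preceding the theorem already sketches this: under $V(D)>S$, the contents of the currently open bin form a contiguous block of $D_k$ of total size at most $S<V(D)$, so they span fewer than $n$ consecutive positions of $D_k$, meaning the bin holds at most one copy of each item of $D$. Consequently, whenever $NFk$ closes a bin $B_i$ and opens $B_{i+1}$ with first item $x$, the only possible reason is that $x$ did not fit by size, i.e.\ $V(B_i)+V(x)>S$, hence $V(B_i)+V(B_{i+1})>S$. Pairing consecutive bins $(B_1,B_2),(B_3,B_4),\ldots$ gives
\begin{align*}
V(D_k) > S\cdot \lfloor NFk(D_k)/2 \rfloor,
\end{align*}
and combining with $OPT(D_k)\ge V(D_k)/S$ yields $NFk(D_k)\le 2\cdot OPT(D_k)+1$. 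Dividing and sending $OPT(D_k)\to\infty$ gives the asymptotic upper bound $2$.

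\textbf{Lower bound.} I would reuse the classical tight family for $NF$ and verify it still works for every fixed $k\ge 1$. Take $S=1$ and let $D=[x_1,y_1,x_2,y_2,\ldots,x_m,y_m]$ with $m$ even, $|x_i|=1/2$, and $|y_i|=\epsilon$, where $\epsilon$ is small enough that $m\epsilon\le 1/2$. Processing $D_k$, $NFk$ packs each $x_i$ together with the following $y_i$ and then refuses the next $x$-item by size, producing exactly $mk$ bins. An optimal packing uses (i) $mk/2$ bins of the form $\{x_{2j-1},x_{2j}\}$, each such pair reused $k$ times so that every $x_i$ lies in $k$ distinct bins, and (ii) $k$ additional bins, each containing one copy of every $y_i$. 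Hence $OPT(D_k)\le mk/2+k$, and
\begin{align*}
\frac{NFk(D_k)}{OPT(D_k)} \ge \frac{mk}{mk/2+k} = \frac{2m}{m+2} \xrightarrow[m\to\infty]{} 2.
\end{align*}

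The main (and essentially only) obstacle is the structural observation that the $k$BP conflict constraint is inactive for $NFk$; once it is in hand, both directions reduce to the standard analysis of $NF$ specialized to the input $D_k$, so no new technical machinery is needed beyond the assumption $V(D)>S$ (which is harmless, since the alternative yields a trivial $k$-bin solution).
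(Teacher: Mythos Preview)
Your proposal is correct and follows essentially the same route as the paper's proof: the upper bound is the classical consecutive-bin pairing argument $V(B_{i-1})+V(B_i)>S$ (justified, as you note, by the observation preceding the theorem that the conflict constraint is never the reason $NFk$ closes a bin), and the lower bound is the standard $\{1/2,\epsilon\}$ family, which the paper also uses (citing \cite{Zheng_Luo_Zhang_2015}) with the same optimal count $k(m/2+1)$.
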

We give a detailed proof of the above Theorem in appendix \ref{appendix:NFk-algorithm}.

\section{Polynomial-time Approximation Schemes} \label{section:ptas}
\subsection{Some general concepts and techniques} \label{section:EAA:subsection:general}
The basic idea behind generalizing Fernandez de la Vega-Lueker and all the Karmarkar-Karp algorithms to solve $k$BP is similar. It consists of three steps: 
1. Keeping aside the small items, 2. Packing the remaining large items, and 3. Packing the small items in the bins that we get from step 2 (opening new bins if necessary) to get a solution to the original problem.

In step 3, the main difference from previous work is that, in $k$BP, we cannot pack two copies of the same small item into the same bin, so we may have to open a new bin even though there is still remaining room in some bins. The following lemma analyzes the approximation ratio of this step.
\begin{lemma} \label{general:lemma1}
	Let $D_k$ be an instance of the $k$BP problem, and $0 < \epsilon \leq 1/2$.
We say that the item is \emph{large}, if its size is bigger than $\epsilon \cdot S$ and \emph{small} otherwise. 
Assume that the large items are
packed into $L$ bins. Consider an algorithm which starts adding the small items into the $L$ bins respecting the constraint of $k$BP, but whenever required, the algorithm opens a new bin. Then the number of bins required for the algorithm to pack the items in $D_k$ is at most $\max \{ L, (1+ 2 \cdot \epsilon) \cdot OPT(D_k) + k \}$.
\end{lemma}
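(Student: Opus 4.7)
The plan is to split into two cases according to whether the algorithm ever opens a new bin while packing small items. If not, the total bin count is exactly $L$, which matches the first term of the max, and we are done. Otherwise let $N$ be the final number of bins, and consider the moment the very last new bin is opened, say for the $r$-th copy of some small item $x$ of size $v\leq \epsilon\cdot S$ (where $r\leq k$). At that moment there are $N-1$ previously-used bins. Each such bin $B$ falls into one of two categories: either $B$ already contains a copy of $x$, or $B$ rejects $x$ because $V(B)+v>S$, i.e., $V(B)>(1-\epsilon)\cdot S$. The number of bins in the first category is at most $r-1\leq k-1$, since only $r-1$ copies of $x$ have been placed so far. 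Consequently at least $(N-1)-(k-1)=N-k$ of the existing bins have load strictly greater than $(1-\epsilon)\cdot S$ at that moment; because subsequent insertions only increase loads, this still holds in the final packing.

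Summing the loads of these $N-k$ ``heavy'' bins shows that the total size of all items in $D_k$ strictly exceeds $(N-k)(1-\epsilon)\cdot S$. On the other hand, an optimal packing fits the same items into $OPT(D_k)$ bins of capacity $S$, so the same total size is at most $OPT(D_k)\cdot S$. Combining these two bounds gives $(N-k)(1-\epsilon)<OPT(D_k)$, i.e., $N<\frac{OPT(D_k)}{1-\epsilon}+k$. An elementary check, $(1+2\epsilon)(1-\epsilon)=1+\epsilon(1-2\epsilon)\geq 1$ for $\epsilon\in(0,1/2]$, yields $\frac{1}{1-\epsilon}\leq 1+2\epsilon$, so $N\leq (1+2\epsilon)\cdot OPT(D_k)+k$, establishing the second branch of the max.

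The main obstacle, and the only place where the $k$BP setting really enters, is the bookkeeping at the moment the last bin is opened: one must argue that among the existing bins, at most $k-1$ can be blocked by the conflict constraint (containing a copy of $x$), so every other existing bin is necessarily capacity-blocked and hence heavy. Once this is pinned down, the remainder of the argument is a standard volume-comparison combined with the inequality $\tfrac{1}{1-\epsilon}\leq 1+2\epsilon$ valid on the stated range of $\epsilon$.
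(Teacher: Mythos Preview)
Your argument is correct and matches the paper's proof essentially line for line: both split on whether a new bin is opened, then at the opening of the last bin observe that at most $k-1$ of the earlier bins can be conflict-blocked so the remaining $N-k$ must have load exceeding $(1-\epsilon)S$, and finish with the volume bound and the inequality $\tfrac{1}{1-\epsilon}\le 1+2\epsilon$. Your write-up is in fact a bit more explicit about the bookkeeping (distinguishing $r-1$ from $k-1$) than the paper's version.
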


\begin{proof}
	 Let $I$ be the set of all small items in $D$ 
 and $I_k$ be the $k$ copies of $I$. Let $bins(D_k)$ be the number of bins in the packing of $D_k$. If adding small items do not require a new bin, then $bins(D_k) = L$. Otherwise since  the first item in the last bin cannot be packed to $bins(D_k)-k$ previous bins, each of these bins has less than $(1 - \epsilon) \cdot S$ free size. Thus 
	\begin{align} \label{general:lemma1:equation1}
		\sum {D_k [i]} &\geq (S - \epsilon \cdot S)(bins(D_k)-k) \nonumber \\
		\sum {D_k [i]} &\leq OPT(D_k) \cdot S \nonumber \\
		bins(D_k) &\leq \frac{1}{1 - \epsilon} \cdot OPT(D_k) + k \nonumber \\
		bins(D_k) &\leq (1 + 2 \cdot \epsilon) \cdot  OPT(D_k) + k
	\end{align}
	Therefore,
	\begin{equation} \label{general:lemma1:equation2} bins(D_k) \leq \max \{ L,  (1 + 2 \cdot \epsilon) \cdot OPT(D_k) + k \} 
	\end{equation}
\qed
\end{proof}


Step 2 is done using a linear program based on \emph{configurations}.

\begin{definition}
	A \emph{configuration} (or a \emph{bin type}) is a collection of item sizes which sums to, at most, the bin capacity $S$.
\end{definition}
For example {\cite{enwiki:1139054649}}: suppose there are $7$ items of size $3$, $6$ items of size $4$, and $S=12$. Then, the possible configurations are $ [3,3,3,3], [3,3,3], [3,3], [3], $ $[4,4,4], [4,4], [4],$ $ [3,3,4],[3,4,4], [3,4]$.



Enumerate all possible configurations by the natural numbers from $1$ to $t$. Let $A=\|a_{ij}\|$ be a $m(D) \times t$ matrix, such that for each natural $i\le m(D)$ and $j\le t$ the entry $a_{ij}$ is the number of items of size $c[i]$ in the configuration $j$. Let $\mathbf{n}$ be a $m(D)$-dimensional vector such that for each natural $i\le m(D)$ its $i$th entry is $n[i]$ (the number of items of size $c[i]$). Let $\mathbf{x}$ be a $t$-dimensional vector such that for each natural $j\le t$ we have that $x[j]$ is the number of bins filled with configuration $j$, and $\textbf{1}$ be a $t$-dimensional vector whose each entry is $1$.
Consider the following linear program
\begin{align*}
&& \min  &\quad \mathbf{1 \cdot x}
\\
(C_1)
&&
\text{such that} &\quad A\mathbf{x} = \mathbf{n}
\\
&& &\mathbf{x} \geq 0 
\end{align*}
When $\mathbf{x}$ is restricted to integer entries ($\mathbf{x}\in \mathbb{Z}^t$),
the solution of this linear program defines a feasible bin-packing.
We denote by $F_1$ the fractional relaxation of the above program, where $\mathbf{x}\in \mathbb{R}^t$.

Recall that in $k$BP, each item of $D$ has to appear in $k$ distinct bins. \emph{One can observe that $k$BP uses the same configurations as in the bin-packing, to ensure that each bin contains at most one copy of each item.}
Therefore, the configuration linear program $C_k$ for $k$BP is as follows, 
where $A$, $\mathbf{n}$, and $\mathbf{x}$ are the same as in $C_1$ above
(for $k=1$ it is the same as in \cite{fernandez_de_la_vega_bin_1981}):
\begin{align}
&&	\min  \quad \mathbf{1 \cdot x} \label{eqn:dlvl:4}\\
(C_k)
&&
	\text{such that} \quad A\mathbf{x} &= k\mathbf{n} \label{eqn:dlvl:5} \\
&&	\mathbf{x} &\geq 0 \label{eqn:dlvl:6}
\end{align}

\begin{lemma}
Every integral solution of $C_k$ can be realised as a feasible solution of $k$BP.
\end{lemma}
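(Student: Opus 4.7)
The plan is to show that for each item size, we can explicitly assign the $n[i]$ items of that size to the available ``slots'' in the bins via a cyclic rotation. The key preliminary observation is that every configuration appearing in $C_k$ satisfies $a_{ij} \leq n[i]$: a configuration by definition represents a feasible bin in the original instance $D$, and such a bin can contain at most $n[i]$ items of size $c[i]$, since each bin holds at most one copy of each item and there are only $n[i]$ distinct items of size $c[i]$ in $D$. This cap on $a_{ij}$ will be the only structural property we need beyond the LP equality $\sum_j a_{ij} x_j = k\, n[i]$.

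Given an integral feasible $\mathbf{x}$, I would process each item size $c[i]$ independently. List the bins, repeated according to the multiplicities $x_j$, as $B_1,\ldots,B_N$, and let $a_\ell$ denote the number of size-$c[i]$ slots in bin $B_\ell$. Label the size-$c[i]$ slots consecutively: $B_1$ owns slots $1,\ldots,a_1$, bin $B_2$ owns slots $a_1+1,\ldots,a_1+a_2$, and so on. By the LP constraint the total number of slots is $\sum_\ell a_\ell = k\, n[i]$. Then assign slot $q$ to the item whose index is $((q-1) \bmod n[i]) + 1$.

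This assignment realizes the $k$BP constraints. First, each of the $n[i]$ items receives exactly $k$ slots, because among the $k\, n[i]$ consecutive slot indices each residue modulo $n[i]$ appears exactly $k$ times. Second, within any single bin $B_\ell$, the owned slot indices form a consecutive block of length $a_\ell \leq n[i]$, and therefore carry pairwise distinct residues modulo $n[i]$; consequently the $a_\ell$ items assigned to $B_\ell$ are all distinct, so no bin receives two copies of the same item. Together these facts give that each item ends up in exactly $k$ different bins, which is the defining requirement of $k$BP. Doing this construction independently for every size $c[i]$ produces the desired feasible $k$BP packing from $\mathbf{x}$.

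The only delicate point is the distinctness-within-a-bin property, and this is precisely what the bound $a_\ell \leq n[i]$ buys us: the stride $n[i]$ of the cyclic rotation is at least as wide as any single bin's slot block, so no bin can ``wrap around'' and capture two copies of the same item. If one preferred a non-constructive argument, existence follows just as well from a bipartite degree-sequence / Gale--Ryser argument applied separately to each size $c[i]$, since the required marginals ($k$ on the item side, $a_{ij}$ on each bin of configuration $j$) sum to the same total $k\, n[i]$ and the per-bin caps $a_{ij} \leq n[i]$ rule out any Hall-type obstruction.
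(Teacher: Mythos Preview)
Your proof is correct and is essentially the same construction as the paper's: both fix an ordering of the bins (with multiplicities $x_j$) and, for each size $c[i]$, fill the $a_{ij}$ slots in each successive bin by cycling through the $n[i]$ items round-robin, using the observation $a_{ij}\le n[i]$ to guarantee no repeat within a bin. The paper phrases this as drawing items from a queue $d_1,\dots,d_{n[i]},d_1,\dots$ of length $k\,n[i]$, whereas you phrase it via the residue map $q\mapsto ((q-1)\bmod n[i])+1$; these are literally the same assignment, and your write-up is if anything more explicit about why each item ends up in exactly $k$ bins and why distinctness within a bin holds.
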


\begin{proof}
Since each bin can contain at most one copy of each item of $D$, for each natural $j\le t$ if the configuration $j$ can be realized then $a_{ij}\le n[i]$ for each natural $i\le m(D)$. We shall call such configurations \emph{feasible}. We can realise every sequence of feasible configurations as a solution of the $k$BP problem
as follows. 
For each natural $i\le m(D)$, let $d_1,\dots, d_{n[i]}$ be the items from $D$ of size $c[i]$. Let the queue $Q_i$ be arranged of $k$ copies of these items, beginning from  the first copies of $d_1,\dots, d_{n[i]}$ in this order, then of the second copies of these items in the same order, and so forth.
To realize the sequence, consider the first configuration, say, $j$, from the sequence, and for each natural $i\le m(D)$ move $a_{ij}$ items of size $c[i]$ from the queue $Q_i$ into the first bin (or just do nothing when $Q_i$ is already empty), then  similarly proceed the second configuration from the sequence and so forth. Since all configurations are feasible, we never put two copies of the same item in the same bin, so the above procedure constructs a feasible solution to $k$BP.
\qed
\end{proof}

Let $F_k$ be the fractional bin-packing problem corresponding to $C_k$. Step $2$ involves grouping. Grouping reduces the number of different item sizes, and thus reduces the number of constraints and configurations in the fractional linear program $F_k$.

To solve the configuration linear program 
efficiently, both Fernandez de la Vega-Lueker algorithm and Algorithm 1 of Karmarkar Karp use a \emph{linear grouping} technique. In linear grouping, items are divided into groups (of fixed cardinality, except possibly the last group), and each item size (in each group) increases to the maximum item size in that group. 
See appendix \ref{appendix: dlvl-to-kbp} for more detail.

Our extension of the 
Fernandez de la Vega-Lueker and
the Karmarkar-Karp algorithms to $k$BP differs from their original counterparts in mainly two directions. First, in the 
configuration linear program (see the constraint {\ref{eqn:dlvl:5}} in $C_k$), and hence the obtained solution to this configuration linear program is not necessarily the $k$ times copy of the original solution of BP. Second, in greedily adding the small items, see lemma {\ref{general:lemma1}}. In extension of Karmarkar-Karp algorithm 1 to $k$BP we have also shown that getting an integer solution from $\mathbf{x}$ by rounding method may require at most $(k-1)/2$ additional bins. 
We discuss extensions to the Fernandez de la Vega-Lueker and Karmarkar-Karp algorithms and their analyses in subsections \ref{SEC: PTAS: SUBSECTION: dlvl} and \ref{SEC: PTAS: SUBSECTION: kkalgorithms}, respectively.

The inputs to the extension of the algorithms by Fernandez de la Vega-Lueker and Algorithm 1 and Algorithm 2 of Karmarkar-Karp are an input set of items $D$, a natural number $k$, and an 
approximation parameter $\epsilon \in (0,1/2]$. Algorithm 2 of Karmarkar-Karp, in addition, accepts an integer parameter $g > 0$.


\subsection{Fernandez de la Vega-Lueker algorithm to \texorpdfstring{$k$}{k}BP} \label{SEC: PTAS: SUBSECTION: dlvl}
Fernandez de la Vega and Lueker \cite{fernandez_de_la_vega_bin_1981} published a PTAS which, given an input instance $D$ and $\epsilon \in (0,1/2] $, solves a bin-packing problem 
with, at most, $(1+\epsilon) \cdot OPT(D) + 1$ bins. They devised a method called ``adaptive rounding'' for this algorithm. In this method, the given items are put into groups and rounded to the largest item size in that group. This resulting instance will have fewer different item-sizes. This resulting instance can be solved efficiently using a configuration linear program $C_k$ (see appendix \ref{appendix: dlvl-to-kbp} for more detail) .

\paragraph{A high-level description of the extension of Fernandez de la Vega-Lueker algorithm to $k$BP.}
Let $I$ and $J$ be multisets of small and large items in $D$, respectively. After applying linear grouping in $J$, let $U$ be the resulting instance and $C_k$ be the corresponding configuration linear program. An optimal solution to $C_k$ will give us an optimal solution to the corresponding $k$BP instance $U_k$. Ungrouping the items in $U_k$ will give us a solution to $k$BP instance $J_k$. Finally, adding the items in $I_k$, by respecting the constraints of $k$BP, to the solution of $J_k$ (and possibly opening new bins if required) will give us a packing of $D_k$.

Lemma \ref{dlvl:lemma:opt<(1+epsilon)opt} (see appendix \ref{appendix: dlvl-to-kbp} for more detail) bounds the number of bins in an optimal packing of $U_k$. The extension of the algorithm by Fernandez de la Vega-Lueker to $k$BP is given in appendix {\ref{appendix: dlvl-to-kbp}}.

\begin{theorem} \label{dlvl:theorem:bins<=(1+2epsilon)opt+k}
	Generalizing the Fernandez de la Vega-Lueker algorithm to $k$BP will require  $bins(D_k) \leq (1 +  2 \cdot \epsilon ) \cdot OPT(D_k)	 + k$ bins.
\end{theorem}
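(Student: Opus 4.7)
} The plan is to chain together the ingredients developed earlier in the section: the linear-grouping bound on the large-item subproblem, the exact solvability of the configuration LP $C_k$ on an instance with few distinct sizes, and the small-item insertion lemma (\Cref{general:lemma1}). Concretely, I will show that after the large items $J$ are packed, the bin count $L$ is bounded by $(1+\epsilon)\cdot OPT(D_k)$, and then plug this into \Cref{general:lemma1} to obtain the claimed $(1+2\epsilon)\cdot OPT(D_k)+k$ bound.

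First I would split $D$ into the set $I$ of small items (size $\le \epsilon\cdot S$) and the set $J$ of large items (size $>\epsilon\cdot S$). Linear grouping is applied to $J$ to obtain the rounded-up instance $U$ with $O(1/\epsilon^2)$ distinct item sizes; this allows the configuration LP $C_k$ associated with $U_k$ to be solved optimally in polynomial time, yielding an integral packing of $U_k$ with $OPT(U_k)$ bins. Using \Cref{dlvl:lemma:opt<(1+epsilon)opt}, this number is at most $(1+\epsilon)\cdot OPT(J_k)$. Reversing the rounding ungroups $U_k$ back to a feasible packing of $J_k$ (item sizes only decrease, so feasibility is preserved), so by \Cref{dlvl:corollary:bins<(1+epsilon)opt} the number of bins used for the large items satisfies
\begin{equation*}
L \;\le\; (1+\epsilon)\cdot OPT(J_k) \;\le\; (1+\epsilon)\cdot OPT(D_k),
\end{equation*}
where the last inequality uses $J\subseteq D$, which implies $OPT(J_k)\le OPT(D_k)$.

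Next, I insert the small items of $I_k$ into this packing greedily, opening new bins only when forced to by the $k$BP disjointness constraint or by capacity. This is exactly the scenario covered by \Cref{general:lemma1}, which gives
\begin{equation*}
bins(D_k) \;\le\; \max\bigl\{\,L,\; (1+2\epsilon)\cdot OPT(D_k)+k\,\bigr\}.
\end{equation*}
Since $(1+\epsilon)\cdot OPT(D_k) \le (1+2\epsilon)\cdot OPT(D_k)+k$ holds trivially for any $\epsilon>0$ and any nonnegative $OPT(D_k)$, the maximum is attained by the second term, yielding $bins(D_k)\le (1+2\epsilon)\cdot OPT(D_k)+k$ as desired.

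The only real subtlety, and hence the step I expect to need the most care, is justifying that ungrouping the $U_k$-packing preserves both feasibility for $k$BP \emph{and} the bin count, so that $L$ can be meaningfully compared against $OPT(D_k)$. This rests on the fact that the LP constraint $A\mathbf{x}=k\mathbf{n}$ (together with feasibility of each configuration, i.e.\ $a_{ij}\le n[i]$) guarantees that the round-robin realisation described after the statement of $C_k$ never places two copies of the same item into one bin, even after the rounded sizes are replaced by their original, smaller values. Everything else is a one-line appeal to \Cref{general:lemma1} and the monotonicity $OPT(J_k)\le OPT(D_k)$.
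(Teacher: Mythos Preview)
Your proposal is correct and follows essentially the same approach as the paper: bound the large-item packing by $(1+\epsilon)\cdot OPT(D_k)$ via \Cref{dlvl:lemma:opt<(1+epsilon)opt}/\Cref{dlvl:corollary:bins<(1+epsilon)opt} and monotonicity, then invoke \Cref{general:lemma1} for the small-item insertion. Your version is actually slightly cleaner than the paper's, which splits into the cases ``fewer than $k$ new bins opened'' versus ``at least $k$ new bins opened'' before reaching the same $\max$; you bypass that case split by comparing the two arguments of the $\max$ directly.
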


We give a detailed proof of 
Theorem \ref{dlvl:theorem:bins<=(1+2epsilon)opt+k}, along with the runtime analysis of the Fernandez de la Vega-Lueker algorithm to $k$BP in appendix \ref{appendix: dlvl-to-kbp}.

\subsection{Karmarkar-Karp Algorithms to \texorpdfstring{$k$}{k}BP} \label{SEC: PTAS: SUBSECTION: kkalgorithms}
Karmarkar and Karp \cite{karmarkar-efficient-1982} improved the work done by Fernandez de la Vega and Lueker \cite{fernandez_de_la_vega_bin_1981} mainly in two directions: (1) Solving the linear programming relaxation of $C_1$ 
using a variant of the GLS method \cite{grotschel_ellipsoid_1981} and (2) using a different grouping technique. These improvements led to the development of three algorithms. Their algorithm $3$ is a particular case of the algorithm $2$; we will discuss the generalization of algorithms $1$ and $2$ of Karmarkar-Karp algorithms to solve $k$BP.
Let $LIN(F_k)$ denote the optimal solution to the fractional linear program $F_k$ for $k$BP. We will discuss helpful results relevant to analyzing the generalized version of their algorithms. These results are an extension of the results in \cite{karmarkar-efficient-1982}.
Lemma {\ref{kkalgorithms: opt-dk<=2V(Dk)/s+k}} bounds from above the number of bins needed to pack the items in an optimal packing of some instance $D_k$. Lemma {\ref{kkalgorithms: V(Dk)<=LIN(Dk)S+S(m(Dk)+k)/2}} concerns  obtaining an integer solution from a basic feasible solution of the fractional linear program.
We discuss lemmas \ref{kkalgorithms: opt-dk<=2V(Dk)/s+k} and \ref{kkalgorithms: V(Dk)<=LIN(Dk)S+S(m(Dk)+k)/2} in appendix \ref{kkalgorithms2kBP}.

Before moving further, we would like to mention that if we use some instance (or group) without subscript $k$, we are talking about the instance when $k=1$. 

All Karmarkar-Karp algorithms use a variant of the ellipsoid method to solve the fractional linear program. So, we will talk about adapting this method to $k$BP.

\paragraph{Solving the fractional linear program: } \label{kk:solving-flp}
Solving the fractional linear program \hypertarget{kk:flp}{$F_k$}   involves a variable for each configuration. This results in a large number of variables. The fractional linear program $F_k$ has the following dual $D_F$.
\begin{align} \label{kk:dflp}
	\max  \quad k \cdot \mathbf{n \cdot y} \\
	\text{such that} \quad A^T \mathbf{y} &\leq \mathbf{1} \\
	\mathbf{y} &\geq 0 
\end{align}
The above dual linear program can be solved to any given tolerance $h$ by using a variant of the ellipsoid method that uses an approximate separation oracle \cite{karmarkar-efficient-1982}. The running time of the algorithm is $T(m(D_k),n(D_k)) = O\left(m(D)^8 \cdot  {\ln{m(D)}} \cdot {\ln^2\left(\frac{m(D) \cdot n(D)}{\epsilon \cdot S \cdot h}\right)} + \frac{m(D)^4 \cdot k \cdot n(D) \cdot \ln{m(D)}}{h} \ln{\frac{m(D) \cdot n(D)}{\epsilon \cdot S \cdot h}}\right)$.	We give a description of this variant of the ellipsoid method and its running time in \ref{kkalgorithms: solving-flp}

\paragraph{A high-level description of the extensions of the Karmarkar-Karp algorithms.} We will give a high-level description behind the extension of Karmarkar-Karp algorithms to solve $k$BP. Let $I$ and $J$ be multisets of small and large items in $D$, respectively. Let $U''$ be the instance constructed from $J$ by applying the grouping technique. Construct the configuration linear program $C_k$ for $U''_k$ and solve the corresponding fractional linear program $F_k$. Let $\mathbf{x}$ be the resulting solution. From $\mathbf{x}$, obtain an integral solution for $U''_k$. From this solution, get a solution for $J_k$ by ungrouping the items. Add the items in $I_k$ by respecting the constraints of $k$BP to get a solution for $D_k$.

\subsubsection{Karmarkar-Karp Algorithm 1 extension to \texorpdfstring{$k$}{k}BP:} \label{SUBSUBSECTION: KKAlgorithm1}
Algorithm 1 of the Karmarkar-Karp algorithms uses the linear grouping technique as illustrated in subsection \ref{section:EAA:subsection:general}. We give the extension of the Karmarkar-Karp Algorithm 1 to $k$BP in appendix \ref{kkalgorithms: algorithm1}

\begin{theorem} \label{kkalgorithms:algorithm1:theorem:bin(Dk)<=(1+2kepsilon)OPT+additiveterms}
	Let $bins(D_k)$ denote the 
number of bins produced by Karmarkar-Karp Algorithm 1 extension to $k$BP. Then,
	$bins(D_k) \leq (1 + 2 \cdot k \cdot \epsilon)OPT(D_k) + \frac{1}{2 \cdot \epsilon^2} + (2 \cdot k+1)$.
\end{theorem}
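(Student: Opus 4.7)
The plan is to analyze the three-step algorithm outlined in Section \ref{SEC: PTAS: SUBSECTION: kkalgorithms}: (i) partition $D$ into small items $I$ (size $\leq \epsilon S$) and large items $J$ (size $> \epsilon S$); (ii) apply linear grouping to $J$ with a parameter $g$, so that $J = U' \cup U''$ where $U'$ is the block of the $g$ largest items and $U''$ has only $m(U'') \leq \lceil n(J)/g \rceil$ distinct sizes; (iii) solve the fractional configuration LP $F_k$ for $U''_k$ using the ellipsoid-based method described in \ref{kk:solving-flp}, round it to an integral packing, pack $U'_k$ separately, un-group, and finally insert the small items $I_k$ greedily via Lemma \ref{general:lemma1}.

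The central quantity to control is $L := bins(J_k)$, the number of bins occupied at the end of step (iii) before small items are added. By Lemma \ref{kkalgorithms: V(Dk)<=LIN(Dk)S+S(m(Dk)+k)/2}, the standard rounding argument produces an integral packing of $U''_k$ exceeding $LIN(U''_k)$ by at most $(m(U'') + k)/2$ bins; monotonicity of the LP relaxation gives $LIN(U''_k) \leq OPT(U''_k) \leq OPT(J_k) \leq OPT(D_k)$. The top group $U'_k$ contains at most $g$ distinct items, each with $k$ copies, so it can always be packed into at most $k g$ bins by placing one copy per bin. Summing these contributions,
\[
L \;\leq\; OPT(D_k) \;+\; \frac{m(U'') + k}{2} \;+\; k g.
\]
The parameter $g$ is chosen proportional to $\epsilon^2 \cdot n(J)$, and combined with the fact that every large item has size $> \epsilon S$ (so $n(J) \leq V(J)/(\epsilon S) \leq OPT(D_k)/\epsilon$), this simultaneously forces $m(U'') \leq 1/\epsilon^2$ and $k g \leq 2 k \epsilon \cdot OPT(D_k) + O(k)$. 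Substituting yields $L \leq (1 + 2 k \epsilon) OPT(D_k) + \frac{1}{2\epsilon^2} + O(k)$.

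The final step invokes Lemma \ref{general:lemma1}, which states that once the large items occupy $L$ bins, greedy insertion of the small items (respecting the $k$BP conflict constraint) gives a total count of at most $\max\{L,\ (1 + 2\epsilon) OPT(D_k) + k\}$ bins. For every $k \geq 1$ the first argument dominates the second, so the total is bounded by $L$, and a careful accounting of the remaining constants absorbs the $O(k)$ slack into the stated additive term $2k + 1$.

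The main obstacle will be the simultaneous tight tuning of $g$: it must be large enough that the rounding loss $(m(U'') + k)/2$ remains within $\frac{1}{2\epsilon^2} + O(k)$, yet small enough that the $k$-fold replication of $U'$ contributes only $2 k \epsilon \cdot OPT(D_k) + O(k)$ rather than a full $k \cdot OPT(D_k)$. A secondary subtlety is calibrating the ellipsoid tolerance $h$ so that the approximate fractional solution is within one bin of $LIN(U''_k)$, ensuring the LP-side bound on $L$ is not inflated beyond the multiplicative $(1 + 2 k \epsilon)$ factor appearing in the theorem.
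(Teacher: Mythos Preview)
Your proposal is correct and follows essentially the same approach as the paper's own proof: linear grouping on $J$ with $g=\lceil \epsilon^2 n(J)\rceil$, solving the fractional LP for $U''_k$ to tolerance $h=1$, rounding via Corollary~\ref{kk:lemm2cor1} at a cost of $(m(U'')+k)/2$, packing $U'_k$ in $kg$ bins, and finishing with Lemma~\ref{general:lemma1}. The paper carries the tolerance contribution $+1$ explicitly in the main chain (so that $\mathbf{1}\cdot\mathbf{x}\le LIN(U''_k)+1\le OPT(D_k)+1$) rather than deferring it to a remark, and then collects the additive constants as $1+\tfrac{k}{2}+k\le 2k+1$; your accounting is the same up to this bookkeeping.
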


We give the proof of the above Theorem and the running time of the Karmarkar-Karp Algorithm 1 extension to $k$BP in appendix \ref{kkalgorithms: algorithm1}.

\subsubsection{Karmarkar-Karp Algorithm 2 extension to \texorpdfstring{$k$}{k}BP.} \label{SUBSUBSECTION: KKAlgorithm2}
Algorithm 2 of the Karmarkar-Karp algorithms uses the \textit{alternative geometric grouping technique}. Let $J$ be some instance and $g > 1$ be some integer parameter, then, alternative geometric grouping partitions the items in $J$ into groups such that each group contains the necessary number of items so that the size of each group but the last (i.e. the sum of the item sizes in that group) is at least $g \cdot S$. See appendix \ref{kkalgorithms:algorithm2} for more details.

\begin{theorem} \label{kkalgorithm2:theorem:bin(Dk)<=opt+O(klog2opt)}
	Let $bins(D_k)$ denote the 
number of bins produced by Karmarkar-Karp Algorithm 2 extension to $k$BP. Then,
	$bins(D_k) \leq OPT(D_k) + O(k \cdot \log^2 {OPT(D)})$.
\end{theorem}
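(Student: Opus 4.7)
The plan is to parallel the analysis of Karmarkar--Karp's second algorithm for standard bin-packing, tracking wherever a factor of $k$ appears. First, I would separate $D$ into a small-item set $I$ (sizes $\leq \epsilon S$) and a large-item set $J$, and apply alternative geometric grouping with parameter $g$ to $J$, obtaining an instance $U''$ together with a leftover group $U'$. Since each of the groups but the last has total size at least $g \cdot S$, the number of distinct sizes in $U''$ satisfies $m(U'') = O\bigl(\log(V(J)/(gS))\bigr) = O\bigl(\log(OPT(D)/g)\bigr)$, and $U'$ can be packed in $O(V(J)/(gS))$ bins trivially, with $V(U') \leq gS$-type bounds from the grouping lemma.

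Next, I would form the configuration LP $C_k$ on $U''_k$ and solve its fractional relaxation $F_k$ via the ellipsoid variant referenced in subsection~\ref{kk:solving-flp}, obtaining a basic feasible $\mathbf{x}$ with $\mathbf{1}\cdot\mathbf{x} \leq LIN(F_k)$. Rounding $\mathbf{x}$ to an integer packing costs at most $(m(U''_k)+k)/2 = (m(U'') + k)/2$ additional bins by the extension of Lemma~\ref{kkalgorithms: V(Dk)<=LIN(Dk)S+S(m(Dk)+k)/2} (noted in the excerpt as $(k-1)/2$ extra bins over the single-copy analysis), giving a packing of $U''_k$ using at most $LIN(F_k) + (m(U'') + k)/2 \leq OPT(U''_k) + (m(U'') + k)/2$ bins. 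Ungrouping yields a packing of $J_k$ using at most this many bins plus the cost of packing $U'_k$, which adds $O(k \cdot g)$ bins since each of the $k$ copies of $U'$ can be packed in $O(g)$ bins by a trivial scheme.

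Finally, I would insert the small items $I_k$ into the existing bins and open new ones as needed, respecting the $k$BP conflict constraint. By Lemma~\ref{general:lemma1}, the total count is bounded by
\begin{equation*}
bins(D_k) \leq \max\left\{\,bins(J_k),\ (1+2\epsilon)\cdot OPT(D_k) + k\,\right\}.
\end{equation*}
Combining all bounds,
\begin{equation*}
bins(D_k) \leq OPT(D_k) + O\!\left(m(U'') + k + k g + \epsilon\cdot OPT(D_k)\right).
\end{equation*}
To conclude, I would choose $g = \Theta(\log OPT(D))$ and $\epsilon = \Theta(\log OPT(D)/OPT(D))$, which balances the grouping loss $O(kg)$, the rounding loss $O(m(U'') + k) = O(k + \log(OPT(D)/g))$, and the small-item loss $O(\epsilon \cdot OPT(D_k)) = O(k \log OPT(D))$ (using $OPT(D_k) \leq k\cdot OPT(D)$). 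Summing these three $O(k \log OPT(D))$ terms and using that $g\log(OPT(D)/g) = O(\log^2 OPT(D))$ for the chosen $g$, we obtain $bins(D_k) \leq OPT(D_k) + O(k \log^2 OPT(D))$.

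The main obstacle is the rounding step: in the original analysis for $k=1$, rounding a basic feasible solution of $F_1$ costs only $m(U'')/2$ extra bins because each basic feasible solution has at most $m(U'')$ nonzero coordinates. For $k$BP, I must verify that the constraint $A\mathbf{x} = k\mathbf{n}$ does not inflate this cost beyond the $(m(U'') + k)/2$ additive term, which requires carefully re-examining how an integral solution is extracted from a fractional one and confirming that the extra $k/2$ term absorbs all discrepancy introduced by the $k$-fold right-hand side. Once this is settled, the remaining calculations are a routine rebalancing of $g$ and $\epsilon$ against the factor of $k$.
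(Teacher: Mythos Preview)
Your proposal has a genuine gap at the heart of the argument: the claim that alternative geometric grouping produces an instance $U''$ with $m(U'') = O\bigl(\log(V(J)/(gS))\bigr)$ is incorrect. By equation~\eqref{kk:equation5}, the grouping gives
\[
m(U'') \;\le\; \frac{V(U'')}{g\cdot S} \;+\; \ln\frac{1}{\epsilon\cdot S},
\]
which is \emph{linear} in $V(J)/(gS)\approx OPT(D)/g$, not logarithmic. With your choice $g=\Theta(\log OPT(D))$ this yields $m(U'')=\Theta\bigl(OPT(D)/\log OPT(D)\bigr)$, so the rounding loss $(m(U'')+k)/2$ alone already swamps the target additive term $O(k\log^2 OPT(D))$. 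Your final accounting ``$O(m(U'')+k)=O(k+\log(OPT(D)/g))$'' is therefore off by a polynomial factor, and no single-shot choice of $g$ and $\epsilon$ can balance the terms.

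The missing idea is that the paper's algorithm (following Karmarkar--Karp) is \emph{iterative}. After solving $F_k$ for $U''_k$ and taking $\lfloor x_j\rfloor$ bins of each configuration, the unpacked residue is itself an instance whose total size is at most $m(U'')\cdot S$; one regroups this residue and repeats. Because $m(U'')$ shrinks by roughly a factor $g$ per iteration, the loop terminates after $O\bigl(\ln V(D)/\ln g\bigr)$ rounds. The additive losses are then: the $U'$-packing cost $O\bigl(k\,g\,\ln(1/\epsilon)\bigr)$ \emph{per iteration} (note the $\ln(1/\epsilon)$ factor, which you also dropped), plus $O(1)$ per iteration from the LP tolerance, plus the final greedy cleanup. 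Summing over the $O(\log V(D))$ iterations and choosing $g=2$, $\epsilon=1/V(D)$ (not $g=\Theta(\log OPT)$) gives the $O(k\log^2 OPT(D))$ bound. Your concern about the rounding cost $(m(U'')+k)/2$ is handled implicitly by this iteration, not by a one-time payment.
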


We give the proof of the above Theorem and the running time of the Karmarkar-Karp Algorithm 2 extension to $k$BP in appendix \ref{kkalgorithms:algorithm2}.


\section{Experiment: \texorpdfstring{$FFk$}{FFk} and \texorpdfstring{$FFDk$}{FFDk} for fair electricity distribution}
\label{section:experiments}
In this section, we describe an experiment checking the performance of the $k$BP adaptations of $FFk$ and $FFDk$ to our motivating application of fair electricity distribution.

\label{experiment:fairelecdistri}

\subsection{Dataset}
\label{experiment:fairelecdistri:dataset}

We use the same dataset of $367$ Nigerian households described in \cite{oluwasuji_solving_2020}.%
\footnote{
	We are grateful to Olabambo Oluwasuji for sharing the dataset with us.
}
This dataset contains the hourly electricity demand for each household for 13 weeks (2184 hours). 
In addition, they estimate for each agent and hour, the \emph{comfort} of that agent, which is an estimation of the utility the agent gets from being connected to electricity at that hour.
For more details about the dataset, readers are encouraged to refer to the papers \cite{oluwasuji_algorithms_2018,oluwasuji_solving_2020}. 
The electricity demand of agents can vary from hour to hour. We execute our algorithms for each hour separately, which gives us essentially 2184 different instances.

As in \cite{oluwasuji_solving_2020}, we use the demand figures in the dataset as mean values; we determine the actual demand of each agent at random from a normal distribution with a standard deviation of $0.05$ (results with a higher standard deviation are presented in appendix \ref{APPENDIX: electricity-distribution-results}).

As in \cite{oluwasuji_solving_2020}, 
we compute the supply capacity $S$ for each day by averaging the hourly estimates of  agents' demand for that day. 
We run nine independent simulations (with different randomization of agents' demands). Thus, the supply changes in accordance with the average daily demand, but cannot satisfy the maximum hourly demand.

\subsection{Experiment}
\label{experiment:fairelecdistri:experiment}

For each hour, we execute the $FFk$ and $FFDk$ algorithms on the households' demands for that hour. We then use the resulting packing to allocate electricity: if the packing returns $q$ bins, then each bin is connected for $1/q$ of an hour, which means that each agent is connected for $k/q$ of an hour.

The authors of \cite{oluwasuji_solving_2020} measure the efficiency and fairness of the resulting allocation, not only by the total time each agent is connected, but also by more complex measures. In particular, they assume that each agent $i$ has a utility function, denoted $u_i$, that determines the utility that the agent receives from being connected to electricity at a given hour.
They consider three different utility models:
\begin{enumerate}
	\item The simplest model is that $u_i$ equals the amount of time the agent $i$ is connected to electricity (this is the model we mentioned in the introduction).
	\item The value $u_i$ can also be equal to the total amount of electricity that the agent $i$ receives. For each hour, the amount of electricity given to $i$ is the amount of time $i$ is connected, times $i$'s demand at that hour.
	\item They also measure the ``comfort'' of the agent $i$ in time $t$ by averaging their demand over the same hour in the past four weeks, and normalizing it by dividing by the maximum value.
\end{enumerate}

For each utility model, they consider three measures
of efficiency and fairness:
\begin{itemize}
	\item Utilitarian: the sum $
	\sum_{i} u_i(x)$ (or the average) of all agents' utilities $u_i$.
	\item Egalitarian: the minimum utility $
	\min_{i} u_i(x)	$ of a single agent,
	\item The maximum difference $\max_{i,j} \{\lvert u_i(x) - u_j(x)\rvert\}$ of utilities between each pair of agents.
\end{itemize}

\subsection{Results}
\label{subsection:implementationresult}
The authors of \cite{oluwasuji_solving_2020} have proposed two models: comfort model (CM) and the supply model. The objective of the CM and the SM model is to maximize the comfort and supply respectively.
For the comparison with the results from \cite{oluwasuji_solving_2020},
we show our results for $FFk$ and $FFDk$ for $k=100$ \footnote{We have checked smaller values of $k$, and found out that the performance increases with $k$. By the time $k$ reached $100$, the performance increase was very slow, so we kept this value.} along with the results in \cite{oluwasuji_solving_2020} in tables \footnote{In tables \ref{results:table:connections-to-supply}-\ref{results:table:comfort-delivered} CM, SM, GA, CSA1, RSA, CSA2 stands for: The Comfort Model, The Supply Model, Grouper Algorithm, Consumption-Sorter Algorithm, Random-Selector Algorithm, Cost-Sorter Algorithm respectively \cite{oluwasuji_solving_2020,oluwasuji_algorithms_2018}} \ref{results:table:connections-to-supply},\ref{results:table:electricity-supplied}, and \ref{results:table:comfort-delivered} . We highlight the best results in bold.
As can be seen in tables \ref{results:table:connections-to-supply}, 
\ref{results:table:electricity-supplied} and 
\ref{results:table:comfort-delivered},
$FFk$ and $FFDk$ outperform previous results in terms of the egalitarian allocation of connection time which is the main objective of this paper. Overall, the comparison of $FFk$ and $FFDk$ with their results are as follows:
\begin{itemize}
    \item[--] $FFk$ and $FFDk$ outperform the previous results in terms of egalitarian allocation of connection time. Maximum Utility difference is also better than all previous results. In terms of utilitarian social welfare metric $FFk$ and $FFDk$ outperforms all previous results except CM.

    \item[--] $FFk$ and $FFDk$ outperform the previous results in terms of utilitarian allocation of supply. In terms of egalitarian social welfare metric and maximum utility difference, $FFk$ and $FFDk$ outperforms all previous results except SM.

    \item[--] $FFk$ and $FFDk$ outperform the previous results in terms of utilitarian allocation of comfort except the CM model. In terms of egalitarian allocation of comfort, $FFk$ and $FFDk$ performs better than previous results except the CM and SM model. Their performance is nearly equivalent to the CM model with better standard deviation and maximum utility difference.
\end{itemize}

In appendix \ref{APPENDIX: electricity-distribution-results}, we have graphs that show how various values of $k$ and varying levels of uncertainty affect the number of connection hours, amount of electricity delivered, and comfort.


\begin{table}[h!]
	\caption{Comparing results of $FFk$ and $FFDk$ for $k=100$ with the results in \cite{oluwasuji_solving_2020} in terms of hours of connection to supply on the average, along with their standard deviation (SD) within parenthesis. In the third column, we have shown the average number of hours an agent is connected to the supply. }
	\label{results:table:connections-to-supply}
	\begin{center}
                    \scalebox{1}{
				\begin{tabular}{|p{0.15\linewidth}|p{0.2\linewidth}|p{0.15\linewidth}|p{0.2\linewidth}|p{0.15\linewidth}|}
					\toprule
					Algorithm & Utilitarian: sum(SD) & Utilitarian: average & Egalitarian(SD) & Maximum Utility Difference \\
					\toprule
					$FFk$ & {716145.3847 (13.9574)} & {1951.3498} & \textbf{1951.3498 (0.0380)} & \textbf{0.0(0.0)} \\
					\toprule
					$FFDk$ & {715891.3137 (13.3774)} & {1950.6575} &  \textbf{1950.6575 (0.0364)} & \textbf{0.0(0.0)} \\
					\toprule
					CM & 717031(3950) & 1953.7629 & 1920(3.24) & 123(2.09) \\
					\toprule
					SM & 709676(3878) & 1933.7221 & 1922(3.41) & 71(2.04) \\
					\toprule
					GA & 629534(4178) & 1715.3515 & 1609(4.69) & 695(3.28) \\
					\toprule
					CSA1 & 647439(3063) & 1764.1389 & 1764(2.27) & 1(0.00) \\
					\toprule
					RSA & 643504(4094) & 1753.4169 & 1753(4.33) & 1(0.00) \\
					\toprule
					CSA2 & 641002(3154) & 1746.5995 & 1746(2.38) & 1(0.00) \\
					\bottomrule
				\end{tabular}}
		\end{center}	
	\end{table}
	
	\begin{table}[h!]
		\caption{Comparing results of $FFk$ and $FFDk$ for $k=100$ with the results in \cite{oluwasuji_solving_2020} in terms of electricity supplied on the average, along with their standard deviation (SD) within parenthesis.}
		\label{results:table:electricity-supplied}
		\begin{center}
                    \scalebox{1}{
				\begin{tabular}{|p{0.15\linewidth}|p{0.25\linewidth}|p{0.25\linewidth}|p{0.25\linewidth}|}
						\toprule
						Algorithm & Utilitarian(SD) & Egalitarian(SD) & Maximum Utility Difference \\
						\toprule
						$FFk$ & \textbf{1364150.4034 (58.6228)} & {0.8067 (0.0001)} & {0.1183 (0.0001)} \\
						\toprule
						$FFDk$ & \textbf{1363494.0885 (48.1455)} & {0.8063 (0.0001)} & {0.1183 (0.0002)} \\
						\toprule
						CM & 1340015(8299) & 0.78(0.01) & 0.17(0.02) \\
						\toprule
						SM & 1347801(8304) & 0.83(0.01) & 0.11(0.02) \\
						\toprule
						GA & 1297020(11264) & 0.35(0.04) & 0.58(0.03) \\
						\toprule
						CSA1 & 1296939(7564) & 0.66(0.02) & 0.28(0.02) \\
						\toprule
						RSA & 1344945(11284) & 0.68(0.03) & 0.25(0.03) \\
						\toprule
						CSA2 & 1345537(7388) & 0.63(0.03) & 0.30(0.02) \\
						\bottomrule
					\end{tabular}}
			\end{center}	
		\end{table}

		\begin{table}[h!]
			\caption{Comparing results of $FFk$ and $FFDk$ for $k=100$ with the results in \cite{oluwasuji_solving_2020} in terms of comfort delivered on the average, along with their standard deviation (SD) within parenthesis.}
			\label{results:table:comfort-delivered}
			\begin{center}
                        \scalebox{1}{
						\begin{tabular}{|p{0.15\linewidth}|p{0.25\linewidth}|p{0.25\linewidth}|p{0.25\linewidth}|}
							\toprule
							Algorithm & Utilitarian(SD) & Egalitarian(SD) & Maximum Utility Difference \\
							\toprule
							$FFk$ & {296630.3583 (7.3626)} & {0.8085 (0.00003)} & {0.1155 (0.00004)} \\
							\toprule
							$FFDk$ & {296493.8100 (7.5569)} & {0.8081 (0.00003)} & {0.1156 (0.00003)} \\
							\toprule
							CM & 303217(3447) & 0.81(0.01) & 0.13(0.02) \\
							\toprule
							SM & 292135(3802) & 0.83(0.01) & 0.09(0.02) \\
							\toprule
							GA & 291021(5198) & 0.38(0.04) & 0.56(0.03) \\
							\toprule
							CSA1 & 291909(3201) & 0.67(0.02) & 0.25(0.02) \\
							\toprule
							RSA & 268564(5106) & 0.65(0.04) & 0.28(0.03) \\
							\toprule
							CSA2 & 270262(3112) & 0.64(0.02) & 0.28(0.02) \\
							\bottomrule
						\end{tabular}}
				\end{center}	
			\end{table}	
%
In appendix {\ref{APPENDIX: electricity-distribution-results}}, we discuss the variation in utilitarian, egalitarian, and maximum-utility difference with different values of $k$ and varying levels of uncertainty (standard deviation). The graphs show that the changes appear to saturate as $k$ increases.

\section{Conclusion and Future Directions} \label{section:conclusion}
We have shown that the existing approximation algorithms, like the First-Fit and the First-Fit Decreasing, can be extended to solve $k$BP. We have proved that, for any $k\geq 1$, the asymptotic approximation ratio for the $FFk$ algorithm is $\left(1.5+\frac{1}{5k}\right) \cdot OPT(D_k) + 3\cdot k $.
We have also proved that the asymptotic approximation ratio for the $NFk$ algorithm is $2$.
We have also demonstrated that the generalization of efficient approximation algorithms like Fernandez de la Vega-Lueker and Karmarkar Karp algorithms solves $k$BP in $(1 + 2 \cdot \epsilon)\cdot OPT(D_k) + k$ and $OPT(D_k) + O(k \cdot \log^2 {OPT(D)})$ bins respectively in polynomial time.
We have also shown the practical efficacy of $FFk$ and $FFDk$ in solving the fair electricity distribution problem. 

Given the usefulness of $k$-times bin-packing to electricity division, an interesting open question is how to determine the optimal value of $k$ --- the $k$ that maximizes the fraction of time each agent is connected --- the fraction $\frac{k}{OPT(D_k)}$. 
Note that this ratio is not necessarily increasing with $k$. For example, consider the demand vector $D = \{11,12,13\}$:
\begin{itemize}
	\item For $k=1$, $OPT(D_k)=2$, so any agent is connected $\frac{1}{2}$ of the time. 
	\item For $k=2$, $OPT(D_k)=3$, so any agent is connected  $\frac{2}{3}$ of the time.
	\item For $k=3$, $OPT(D_k)=5$, so any agent is connected only $\frac{3}{5}<\frac{2}{3}$ of the time. 
\end{itemize}

Some other questions left open are
\begin{enumerate}
	\item 	To bridge the gap in the approximation ratio of $FFk$, between the conjectured lower bound $1.375$ and the upper bound $\left(1.5+\frac{1}{5k}\right) \cdot OPT(D_k) + 3\cdot k $, which converges to $1.5$. 
	\item To prove or disprove that the conjectured bound $\frac{11}{9}OPT + \frac{6}{9}$ is tight for $FFDk$.
\end{enumerate}

\bibliographystyle{splncs04}
\bibliography{kbp-paper}

\newpage
\appendix

\section{\texorpdfstring{$FFk$}{FFk} Algorithm}
\label{appendix:ffk-algorithm}

\subsection{Asymptotic approximation ratio} \label{appendix: ffk-algorithm: ffk-general-case-proof}

In this section we will give the proof of Theorem \ref{ffk:theorem3/2} for the general case. Recall that we have already gave the proof for $k=2$ as a warm up case in Section \ref{section:approx-algo:subsection:approx_algo:ffk}.

\begin{proof}[Proof of Theorem \ref{ffk:theorem3/2} ]
	Recall that we have a basic weighting scheme, in which each item of size $v$ is given a weight 
	\begin{align*}
		w(v) := v/S + r(v),
	\end{align*}
	where $r$ is a \emph{reward function}, computed as follows:
	\[
	r(v) := 
	\begin{cases}
		0 \quad & \text{if } v/S \leq \frac{1}{6}, \\
		\frac{1}{2}(v/S - \frac{1}{6}) \quad & \text{if } v/S \in (\frac{1}{6}, \frac{1}{3}), \\
		1/12 \quad & \text{if } v/S \in [\frac{1}{3}, \frac{1}{2}], \\
		4/12 \quad & \text{if } v/S > \frac{1}{2}.
	\end{cases}
	\]
	
	We modify this scheme later for some items, based on the instances to which they belong.
	
Denote by $u$, the first instance in the $FFk$ packing in which there are underfull bins (if there are no underfull bins at all, we set $u = k$). We modify the item weights as follows:
\begin{itemize}
\item For instances $1,\ldots, u-1$, we reduce the reward of each item of size $v>S/2$ from $4/12$ to $2/12$.
\item For instance $u$, we keep the weights unchanged.
\item For instances $u+1,\ldots, k$, 
we reduce the weights of items as in case $2$ of warm-up case for $k=2$ in the proof of Theorem \ref{ffk:theorem3/2}), that is 
\[
w(v) = 
\begin{cases}
    0 \quad & \text{if } v/S < \frac{1}{3}, \\
    5/12 \quad & \text{if } v/S \in [\frac{1}{3}, \frac{1}{2}], \\
    10/12 \quad & \text{if } v/S > \frac{1}{2}.
\end{cases}
\]
\end{itemize}
Now we analyze the effects of these changes.
\begin{itemize}
\item In the optimal packing, 
the maximum reward of every  bin that contains an item larger than $S/2$ from instances $1,\ldots,u-1$ drops from at most $5/12$ to at most $3/12$, so their weight is at most $15/12$.
Additionally, every bin that contains an item larger than $S/2$ from some instance $j\in\{u+1,\ldots, k\}$, contains only items from instance $j$. Therefore, its weight can be at most $15/12$ (the remaining space in such bin can pack only one item from $[S/3,S/2]$ of positive weight. 
Additionally, the weight of every bin that contains no items larger than $S/2$ remains at most $15/12$.
The only bins that can have a larger weight (up to $17/12$) are bins with items larger than $S/2$ from instance $u$. At most $1/k$ bins in the optimal packing can contain such an item.
Therefore, the average weight of a bin in the optimal packing is at most $\frac{(17/12) + (15/12)\cdot(k-1)}{k}$.
		
		\item 
		In the $FFk$ packing, 
		instances $1,\ldots,u-1$ have no underfull bins. 
		The change in the reward affects only the bins of group 3
		(bins with a single item larger than $S/2$): the weight of each bin in group 3 (which must have one item larger than $2S/3$ since it is not underfull) is at least $2/3+2/12 = 10/12$. 
		Since we only reduce the reward of the item larger than $S/2$ for the instances $1, \ldots ,u-1$; the analysis of bins in groups 4 and 5 is not affected at all by this reduction in reward and therefore leads to their average weight (except excluding one bin per instance in groups 1, 2, and 5) of at least $10/12$.
		
		In instance $u$, the weights are unchanged, so the average bin weight (except excluding one bin per instance in groups 1, 2, and 5) is still at least $10/12$.
		
		Finally, the bins of instances $u+1,\ldots,k$ contain no item of size at most $S/3$ --- since any such item would fit into the underfull bin in instance $u$. Therefore, for each such bin, there are only two options:
		\begin{itemize}
			\item the bin contains one item larger than $S/2$ --- so its weight is $10/12$;
			\item the bin contains two items, each of which is larger than $S/3$ --- so its weight is at least $2\cdot 5/12 = 10/12$.
		\end{itemize}
	\end{itemize}
	Again, the average bin weight (except excluding one bin per instance in groups 1, 2, and 5) of instances $u+1, \ldots, k$ is at least 10/12.
	
	We conclude that the average weight of all bins in the $FFk$ packing (except the $3k$ excluded bins) is at least $10/12$. Therefore
	$$FFk \leq \left(\frac{\frac{(17/12) + (15/12)\cdot(k-1)}{k}}{10/12} \cdot OPT\right) + 3k=\left(1.5+\frac 1{5k}\right) \cdot OPT + 3k.$$	
 \qed
\end{proof}

\subsection{Worst-case example} \label{appendix: ffk-worst-case-example-johnsonpaper}
\begin{lemma} \label{lower bound 1}
    The approximation ratio of $FFk(D_k)$ for $k=2$ for the worst case examples given in \cite{Dósa_Sgall_2014,Johnson1974} is $1.35$. For the example given in \cite{Johnson1974}, the approximation ratio continues to decrease as $k$ increases.
\end{lemma}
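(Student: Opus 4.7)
The plan is to take the two known tight worst-case instances for $FF$ (from \cite{Johnson1974} and \cite{Dósa_Sgall_2014}) and directly simulate $FFk$ on $D_2$. I would first recall the structure of each example. The Johnson instance is built from roughly $6m$ items split into three size classes (one class just above $S/2$, one just above $S/3$, and one just above $S/6$) arranged in a specific order, so that an optimal packing uses $6m$ bins (each a triple from the three classes) while $FF$ opens approximately $10m$ bins. The Dósa--Sgall instance has the same flavor, with parameters tuned so that $FF$ hits the tight bound $\lfloor 1.7\cdot OPT\rfloor$.

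Next I would execute $FFk$ on the doubled input $D_2 = DD$. The first pass is identical to running $FF$ on $D$, so the configuration of bins at the end of the first pass is already known from the original analyses. The heart of the argument is a case analysis of the second pass: for each item I would determine (i) the earliest first-pass bin into which it can physically fit, and (ii) whether that bin already contains a copy of the same item, in which case the $k$BP constraint forces $FFk$ to skip further. By tracking the leftover capacity per bin separately for each size class, I would count how many second-pass items are absorbed into existing bins and how many force new bins to open. Comparing with $OPT(D_2) = 2\cdot OPT(D)$ (which follows since duplicating any optimal packing of $D$ gives a valid $2$BP for $D_2$), I would verify that the ratio is exactly $1.35 = 27/20$ for both instances.

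For the monotonicity claim on the Johnson family, I would extend the second-pass analysis inductively to $k\geq 2$: after pass $j$, describe the distribution of leftover capacity and of which items already appear in each bin, then argue that every additional pass absorbs at least as many items into existing bins as the previous pass did. This yields $FFk(D_k)/OPT(D_k)$ monotonically decreasing in $k$ on this family. The bound $OPT(D_k) = k\cdot OPT(D)$ again follows by replicating the optimal single-copy packing.

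The main obstacle is the bookkeeping of the second and later passes. The $k$BP no-duplicate constraint interacts in a subtle way with the three size classes: a second-pass large item cannot join a first-pass bin that already holds its copy even if space allows, whereas a small or medium item often can. Getting the counts exactly right — neither over- nor under-counting reused capacity — and verifying that the constructed example really forces $FFk$ to open the claimed number of bins (rather than a smaller number through some fortuitous alignment) is where the care is needed. Once this accounting is nailed down for the two specific instances, the ratio $1.35$ and the decreasing behavior in $k$ fall out by direct arithmetic.
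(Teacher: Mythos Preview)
Your plan is essentially the same as the paper's: direct simulation of $FFk$ on the known tight instances, tracking which items of later passes are absorbed into bins opened earlier and how many new bins must be opened. The paper carries this out very concretely---for the D\'osa--Sgall family it simply observes that all small and medium items of the second copy fit into first-pass bins while each large item of size $>S/2$ forces a new bin, giving $FFk(D_2)=17n+10n=27n$ versus $OPT(D_2)=20n$; for the Johnson instance it fixes the specific list $D=\{6(7),10(7),16(3),34(10),51(10)\}$, $S=101$, and tabulates the $FFk$ packing bin by bin for $k=1,\ldots,5$ until a repeating pattern emerges (exactly $10$ new bins per additional copy), yielding $FFk(D_k)=17+10(k-1)$ and hence a ratio $1+7/(10k)$ that strictly decreases in $k$. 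Your description of the Johnson instance (three size classes, $6m$ optimal bins, $10m$ FF bins) does not match the instance the paper actually uses, and your monotonicity argument is phrased more abstractly than the paper's explicit pattern-spotting, but the underlying mechanism and the arithmetic conclusion are identical.
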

\begin{proof}
    Dosa and Sgall \cite{Dósa_Sgall_2014} gave a simple example for the lower bound construction of $FF$ as following: For a very small $\epsilon$, first there are $10n$ \emph{small} items of size approximately $1/6$ (smaller and bigger), then comes $10n$ \emph{medium} items of size approximately $1/2$ (smaller and bigger), and finally $10n$ \emph{large} items of size $1/2 + \epsilon$ follows. They have proved that for this list $D$, $OPT(D)=10n$ and $FF(D)=17n$. Now, consider the packing of the items in $D_2$. After packing the first instance of $D$, $FFk$ packs the \emph{small} and \emph{medium} items into the existing bins. Only each of the \emph{large} items require a separate bin to pack. So it is easy to see that $FFk(D_2)=(17+10)n$ while $OPT(D_2)=(10+10)n$. This will give us $\frac{FFk(D_2)}{OPT(D_2)} = \frac{27}{20} = 1.35$. Note that here $OPT$ is arbitrarily big.
 
Johnson, Demers, Ullman, Garey and Graham \cite{Johnson1974} provide the following example for $FF$. We now show that, on the same example, $FFk$ achieves an approximation ratio of $1.35$.

All the item sizes in this example 
$D$ are in non-decreasing order and are as follows
$D = \{6 (7),10 (7),16 (3),34 (10),51 (10)\}$ and bin size $S = 101$.
The number in the parenthesis denotes the occurrence of that item in $D$. In the above example $6 (7)$ expands to $\{6,6,6,6,6,6,6\}$. Before proceeding further, we describe some notation that we will use. For two bins $B_i, B_j$, $i<j$ implies bin $B_i$ has been created before bin $B_j$.


\begin{figure}
\centering
    \begin{subfigure}[b]{0.3\textheight}
        \includegraphics[height=0.9\textheight, width=0.9\textwidth, keepaspectratio]{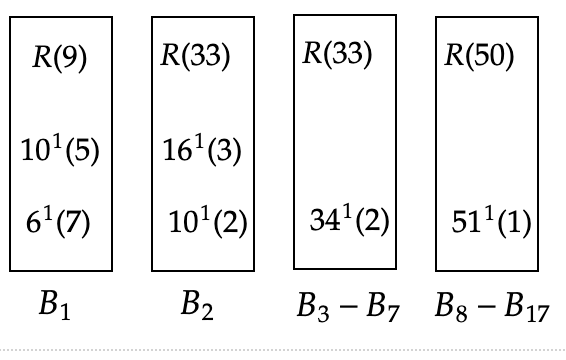}
        \caption{$FFk$ packing for $k=1$.}
        \label{ffk_lb_k1}
    \end{subfigure}

    \begin{subfigure}[b]{0.55\textheight}
        \includegraphics[height=0.9\textheight, width=0.9\textwidth, keepaspectratio]{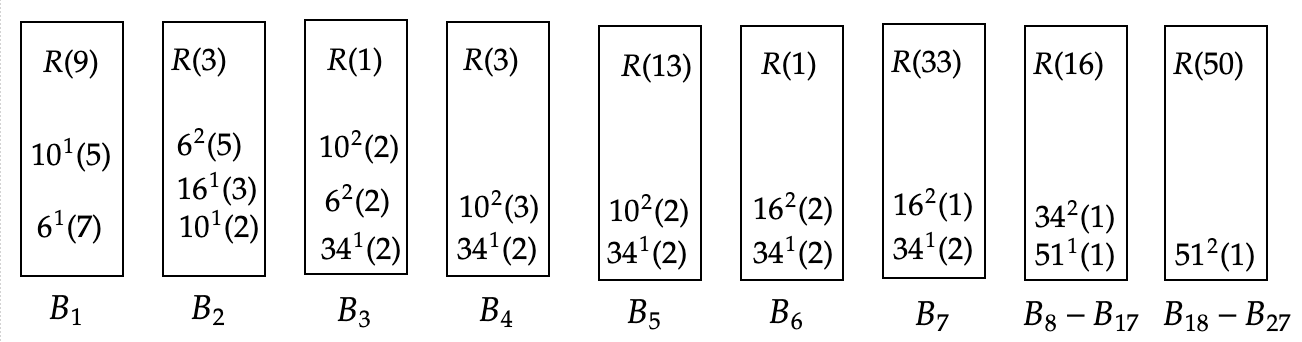}
        \caption{$FFk$ packing for $k=2$.}
        \label{ffk_lb_k2}
    \end{subfigure}

    \begin{subfigure}[b]{0.4\textheight}
        \includegraphics[height=0.9\textheight, width=0.9\textwidth, keepaspectratio]{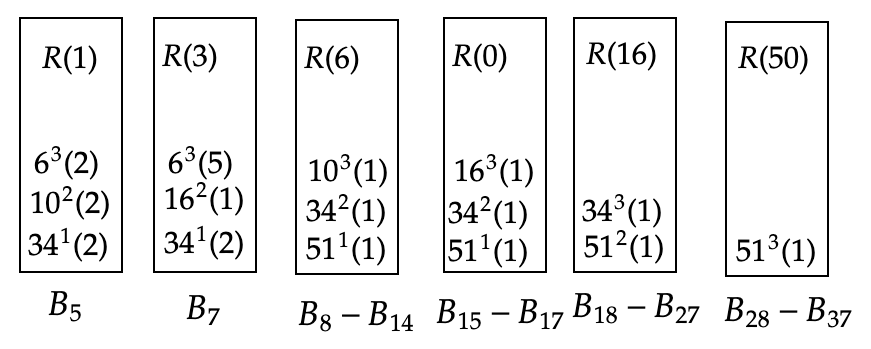}
        \caption{$FFk$ packing for $k=3$. Note that bins from $FFk(D_2)$ packing in which no item from future instances can be packed are not shown. }
        \label{ffk_lb_k3}
    \end{subfigure}

    \begin{subfigure}[b]{0.4\textheight}
        \includegraphics[height=0.9\textheight, width=0.9\textwidth, keepaspectratio]{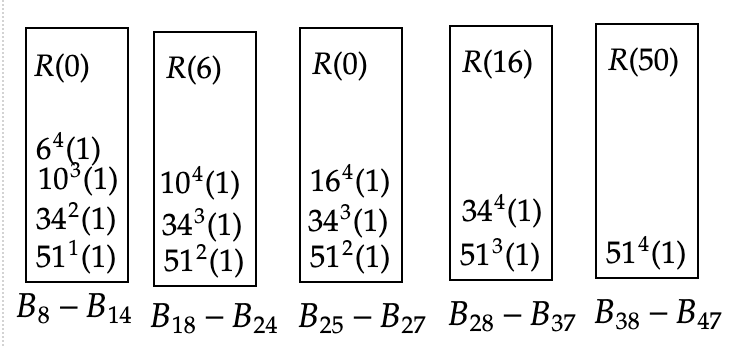}
        \caption{$FFk$ packing for $k=4$. Note that bins from $FFk(D_3)$ packing in which no item from future instances can be packed are not shown.}
        \label{ffk_lb_k4}
    \end{subfigure}

    \begin{subfigure}[b]{0.4\textheight}
        \centering
        \includegraphics[height=0.9\textheight, width=0.9\textwidth, keepaspectratio]{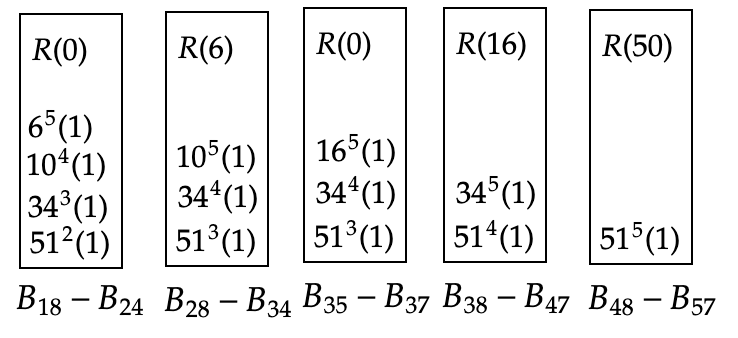}
        \caption{$FFk$ packing for $k=5$. Note that bins from $FFk(D_4)$ packing in which no item from future instances can be packed are not shown.}
        \label{ffk_lb_k5}
    \end{subfigure}
    \caption{Figures \ref{ffk_lb_k1}-\ref{ffk_lb_k5} show the $FFk$ packing for $D$ for different values of $k$. For the lack of space we are not showing the bins in the packing of $D_{k-1}$ in which no item from future instances can be packed. Note the pattern in packing in figures \ref{ffk_lb_k4} and \ref{ffk_lb_k5}.}
    \label{ffk lower bound example}
\end{figure}

Note that $OPT$ for $k=1$ can result the following bin-packing 

$O_1 = O_2 = O_3 = [51,34,16] $,

$O_4 = O_5 = O_6 = O_7 = O_8 =O_9 = O_{10} = [51,34,10,6]$.

It is easy to observe that optimal number of bins for packing $D_k$ is $OPT(D_k)=k \cdot OPT(D)$.

In figure {\ref{ffk lower bound example}} we have shown the packing of $D_1, D_2, D_3, D_4$, and $D_5$. Each box in the figure represents a bin. Each bin represents the items packed into the bin along with their count and the instance from which they belong. Each bin also represents the space remaining in the bin. For example, in figure {\ref{ffk_lb_k1}}, $6^1(7)$, inside bin $B_1$ represents that there are 7 items of size 6 from first instance of $D$. $R(9)$ in bin $B_1$ denotes the remaining space in that bin.

Figures {\ref{ffk_lb_k1}} and {\ref{ffk_lb_k2}} show the $FFk$ packing of the items for instance $D_1$ and $D_2$ respectively. 
Note that there are some bins in which the items from future bins cannot be packed (either there is no space or due to the violation of the $k$BP constraint). For the lack of space we do not show such bins. For example, 
in figure {\ref{ffk_lb_k3}} we have the shown the $FFk$ packing of the items for instance $D_3$. $FFk$ packs the items in the third instance of $D$ after packing the items in the first two instances. However, after packing the second instance, bins $B_1, B_2, B_3, B_4$ cannot pack any of the items from the third instance (in fact any of the items from all the future instances). Therefore, we do show these bins in figure {\ref{ffk_lb_k3}}. 
Note that there is a pattern in the $FFk$ packing of $D_4$ and $D_5$. It is easy to see that this pattern repeats itself in the packing of $D_6, D_7, \ldots, D_k$. Also note that while packing the items in $D_k$ we add 10 new bins in addition to the bins in the $FFk$ packing of $D_{k-1}$. Therefore, 
$\frac{FFk(D_k)}{OPT(D_k)} = \frac{17 + 10(k-1)}{10k} \leq 1.35$ for $k>1$. One can observe that as $k$ increases, the approximation ratio continues to decrease.
\qed
\end{proof}

\begin{lemma} \label{lower bound for ffk}
    The lower bound of $FFk(D_k)$ for $k=2$ is 1.375.
\end{lemma}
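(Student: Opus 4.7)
The plan is to establish this lower bound by exhibiting a specific instance $D$ and a bin capacity $S$, and then carefully verifying two claims about $D_2$: (1) that the execution of $FFk$ produces exactly $11$ bins, and (2) that there exists a valid $2$BP packing of $D_2$ into only $8$ bins, which must therefore be optimal. Dividing gives the ratio $11/8 = 1.375$.

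More concretely, I would take the instance $D = \{371, 659, 113, 47, 485, 3, 228, 419, 468, 581, 626\}$ with $S = 1000$, and process the sequence $D_2 = DD$ in this exact order. For the first claim, I would simulate $FFk$ step by step, tracking for each bin both its remaining capacity and the set of item values already placed in it (to enforce the $k$BP constraint that no bin may contain two copies of the same item). The expected trajectory is that during the first copy of $D$, $FFk$ opens six bins containing $[371,113,47,3,228]$, $[659]$, $[485,419]$, $[468]$, $[581]$, $[626]$ respectively; then during the second copy, the item $659$ cannot join bin $1$ (insufficient space) but fits in bin $2$, whereas subsequent items either fit into existing bins (respecting the conflict constraint) or force new bins to open, yielding the listed final packing with $11$ bins. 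This step is purely bookkeeping but needs to be done carefully, because a single misstep in capacity accounting or conflict checking would invalidate the count.

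For the second claim, I would verify that the proposed packing with $8$ bins --- two copies each of $[626,371,3]$, $[659,228,113]$, $[419,581]$, and $[468,485,47]$ --- is indeed a valid $2$-times bin-packing: every bin sum is at most $1000$ (checking $626+371+3=1000$, $659+228+113=1000$, $419+581=1000$, $468+485+47=1000$; all tight), and each of the $11$ items appears in exactly $2$ distinct bins. Optimality then follows immediately from a volume argument: the total size of items in $D_2$ equals $2\cdot(371+659+113+47+485+3+228+419+468+581+626) = 2\cdot 4000 = 8000$, which requires at least $\lceil 8000/1000 \rceil = 8$ bins, so $OPT(D_2) = 8$.

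The main obstacle is simply the bookkeeping in step (1) --- there is no deep argument, only a careful simulation of $FFk$ on a sequence of $22$ items under two simultaneous constraints (capacity and the distinctness requirement within a bin). Once both counts are verified, the conclusion $FFk(D_2)/OPT(D_2) = 11/8 = 1.375$ is immediate, establishing the claimed lower bound. \qed
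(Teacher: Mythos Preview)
Your approach is exactly the paper's: the same instance $D=\{371,659,113,47,485,3,228,419,468,581,626\}$, $S=1000$, the same simulation of $FFk$ on $D_2$ yielding $11$ bins, and the same optimal packing of $8$ perfectly-filled bins (the paper argues optimality via the same volume bound, implicitly, since every optimal bin is full).

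One concrete slip in your trajectory sketch: you write that during the second copy ``the item $659$ cannot join bin~1 (insufficient space) but fits in bin~2''. That is false---bin~2 already contains $659^1$, so the $k$BP conflict constraint forbids $659^2$ there (and capacity would too, since after $113^2,47^2,3^2$ land in bin~2 the remaining space is only $178$). In the correct trace $659^2$ opens a fresh bin $B_7$. This does not change the final count of $11$, but it is exactly the kind of ``misstep in \ldots\ conflict checking'' you warn against; fix it when you write out the full simulation.
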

\begin{proof}
    Consider the input $D= \{371, 659, 113, 47, 485, 3, 228, 419,468,581,626\}$ and $S=1000$.
For $k=1$, $FFk$ will result in the following packing: \\
$B_1 = [371^1, 113^1, 47^1, 3^1, 228^1], sum=762$, \\
$B_2 = [659^1], sum=659 $, \\
$B_3 = [485^1, 419^1], sum = 904$, \\
$B_4 = [468^1], sum = 468$, \\
$B_5= [581^1], sum = 581$, \\
$B_6 = [626^1], sum = 626$ \\
We use the word ``SAME" for the content of a bin if a bin cannot pack any of the items from the future instances of $D$.
This happens because every item in a future instance is either too large to fit into that bin, or has the same type as an item already packed into it. 
Note that bin $B_1$ is such bin. 
For $k=2$, $FFk$ will result in the following packing: \\
$B_1 = \text{SAME}$,\\
$B_2 = [659^1, 113^2, 47^2, 3^2], sum=822$, \\
$B_3 = [485^1, 419^1], sum=904$, \\
$B_4 = [468^1, 371^2], sum=839$, \\
$B_5= [581^1, 228^2], sum=809$, \\
$B_6 = [626^1], sum=626$, \\
$B_7 = [659^2], sum=659$, \\
$B_8 = [485^2, 419^2], sum=904$, \\
$B_9 = [468^2], sum=468$, \\
$B_{10} = [581^2], sum=581$, \\
$B_{11} = [626^2], sum=626$, \\

The items in $D$ can be optimally  packed in bins $[626,371,3], [659,228,113], [419,581], [468,485,47]$. The size of each of these bins is exactly $1000$. Therefore, the optimal packing of $D_k$ requires $k \cdot OPT(D) = 4k$ bins. 
Therefore, $\frac{FFk(D_k)}{OPT(D_k} = \frac{11}{8} = 1.375$ for $k = 2$. 
\qed
\end{proof}

\begin{remark}
    The approximation ratio for the example instance given in lemma \ref{lower bound for ffk} continues to decrease as $k$ increases.
\end{remark}
\begin{proof}
    The packing for $k=2$ is given in \ref{lower bound for ffk}. We will continue from this packing.
Note that bin $B_2$ cannot pack any of the items from the future instances of $D$.

For $k=3$, $FFk$ will result in the following packing: \\
$B_1 = \text{SAME}, B_2 = \text{SAME}$\\
$B_3 = [485^1, 419^1, 47^3, 3^3], sum=954$, \\
$B_4 = [468^1, 371^2, 113^3], sum=952$, \\
$B_5= [581^1, 228^2], sum=809$, \\
$B_6 = [626^1, 371^3], sum=997$, \\
$B_7 = [659^2, 228^3], sum=887$, \\
$B_8 = [485^2, 419^2], sum=904$, \\
$B_9 = [468^2, 485^3], sum=953$, \\
$B_{10} = [581^2, 419^3], sum=1000$, \\
$B_{11} = [626^2], sum=626$, \\
$B_{12} = [659^3], sum=659$, \\
$B_{13} = [468^3], sum=468$, \\
$B_{14} = [581^3], sum=581$, \\
$B_{15} = [626^3], sum=626$ \\
Note the bin $B_3$ and $B_{10}$ cannot pack any of the items from the future instances of $D$.

For $k=4$, $FFk$ will result in the following packing: \\
$B_1 = \text{SAME}, B_2 = \text{SAME}, B_3 = \text{SAME}, B_{10} = \text{SAME}$\\
$B_4 = [468^1, 371^2, 113^3, 47^4], sum=999$, \\
$B_5= [581^1, 228^2, 113^4, 3^4], sum=925$, \\
$B_6 = [626^1, 371^3], sum=997$, \\
$B_7 = [659^2, 228^3], sum=887$, \\
$B_8 = [485^2, 419^2], sum=904$, \\
$B_9 = [468^2, 485^3], sum=953$, \\
$B_{11} = [626^2, 371^4], sum=997$, \\
$B_{12} = [659^3, 228^4], sum=887$, \\
$B_{13} = [468^3, 485^4], sum=953$, \\
$B_{14} = [581^3, 419^4], sum=1000$, \\
$B_{15} = [626^3], sum=626$, \\
$B_{16} = [659^4], sum=659$, \\
$B_{17} = [468^4], sum=468$, \\
$B_{18} = [581^4], sum=581$, \\
$B_{19} = [626^4], sum=626$ \\
Note that bin $B_4$ and $B_{14}$ cannot pack any of the items from the future instances of $D$.

For $k=5$, $FFk$ will result in the following packing: \\
$B_1 = \text{SAME}, B_2 = \text{SAME}, B_3 = \text{SAME}, B_4 = \text{SAME}, B_{10} = \text{SAME}, B_{14}=\text{SAME}$\\
$B_5= [581^1, 228^2, 113^4, 3^4, 47^5], sum=972$, \\
$B_6 = [626^1, 371^3, 3^5], sum=1000$, \\
$B_7 = [659^2, 228^3, 113^5], sum=1000$, \\
$B_8 = [485^2, 419^2], sum=904$, \\
$B_9 = [468^2, 485^3], sum=953$, \\
$B_{11} = [626^2, 371^4], sum=997$, \\
$B_{12} = [659^3, 228^4], sum=887$, \\
$B_{13} = [468^3, 485^4], sum=953$, \\
$B_{15} = [626^3, 371^5], sum=997$, \\
$B_{16} = [659^4, 228^5], sum=887$, \\
$B_{17} = [468^4, 485^5], sum=953$, \\
$B_{18} = [581^4, 419^5], sum=1000$, \\
$B_{19} = [626^4], sum=626$, \\
$B_{20} = [659^5], sum=659$, \\
$B_{21} = [468^5], sum=468$, \\
$B_{22} = [581^5], sum=581$, \\
$B_{23} = [626^5], sum=626$ \\
Note that bin $B_5, B_6, B_7 B_{18}$ cannot pack any of the items from the future instances of $D$.

For $k=6$, $FFk$ will result in the following packing: \\
$B_1 = \text{SAME}, B_2 = \text{SAME}, B_3 = \text{SAME}, B_4 = \text{SAME}, B_{5} = \text{SAME}, B_{6} = \text{SAME}, B_{7} = \text{SAME}, B_{10} = \text{SAME}, B_{14}=\text{SAME}, B_{18} = \text{SAME}$\\
$B_8 = [485^2, 419^2, 47^6, 3^6], sum=954$, \\
$B_9 = [468^2, 485^3], sum=953$, \\
$B_{11} = [626^2, 371^4], sum=997$, \\
$B_{12} = [659^3, 228^4, 113^6], sum=1000$, \\
$B_{13} = [468^3, 485^4], sum=953$, \\
$B_{15} = [626^3, 371^5], sum=997$, \\
$B_{16} = [659^4, 228^5], sum=887$, \\
$B_{17} = [468^4, 485^5], sum=953$, \\
$B_{19} = [626^4, 371^6], sum=997$, \\
$B_{20} = [659^5, 228^6], sum=887$, \\
$B_{21} = [468^5, 485^6], sum=953$, \\
$B_{22} = [581^5, 419^6], sum=1000$, \\
$B_{23} = [626^5], sum=626$, \\
$B_{24} = [659^6], sum=659$, \\
$B_{25} = [468^6], sum=468$, \\
$B_{26} = [581^6], sum=581$, \\
$B_{27} = [626^6], sum=626$ \\
Note that bins $B_8, B_{12}, B_{22}$ cannot pack any of the items from the future instances of $D$.

For $k=7$, $FFk$ will result in the following packing: \\
$B_1 = \text{SAME}, B_2 = \text{SAME}, B_3 = \text{SAME}, B_4 = \text{SAME}, B_{5} = \text{SAME}, B_{6} = \text{SAME}, B_{7} = \text{SAME}, B_{8} = \text{SAME}, B_{10} = \text{SAME}, B_{12} = \text{SAME}, B_{14}=\text{SAME}, B_{18} = \text{SAME}, B_{22} = \text{SAME},$\\
$B_9 = [468^2, 485^3, 47^7], sum=1000$, \\
$B_{11} = [626^2, 371^4, 3^7], sum=1000$, \\
$B_{13} = [468^3, 485^4], sum=953$, \\
$B_{15} = [626^3, 371^5], sum=997$, \\
$B_{16} = [659^4, 228^5, 113^7], sum=1000$, \\
$B_{17} = [468^4, 485^5], sum=953$, \\
$B_{19} = [626^4, 371^6], sum=997$, \\
$B_{20} = [659^5, 228^6], sum=887$, \\
$B_{21} = [468^5, 485^6], sum=953$, \\
$B_{23} = [626^5, 371^7], sum=997$, \\
$B_{24} = [659^6, 228^7], sum=887$, \\
$B_{25} = [468^6, 485^7], sum=953$, \\
$B_{26} = [581^6, 419^7], sum=1000$, \\
$B_{27} = [626^6], sum=626$, \\
$B_{28} = [659^7], sum=659$, \\
$B_{29} = [468^7], sum=468$, \\
$B_{30} = [581^7], sum=581$, \\
$B_{31} = [626^7], sum=626$ \\
Note that bins $B_9, B_{11}, B_{16}, B_{26}$ cannot pack any of the items from the future instances of $D$.

For $k=8$, $FFk$ will result in the following packing: \\
$B_1 = \text{SAME}, B_2 = \text{SAME}, B_3 = \text{SAME}, B_4 = \text{SAME}, B_{5} = \text{SAME}, B_{6} = \text{SAME}, B_{7} = \text{SAME}, B_{8} = \text{SAME}, B_{9} = \text{SAME}, B_{10} = \text{SAME}, B_{11} = \text{SAME}, B_{12} = \text{SAME}, B_{14}=\text{SAME}, B_{16} = \text{SAME}, B_{18} = \text{SAME}, B_{22} = \text{SAME}, B_{26} = \text{SAME},$\\
$B_{13} = [468^3, 485^4, 47^8], sum=1000$, \\
$B_{15} = [626^3, 371^5, 3^8], sum=1000$, \\
$B_{17} = [468^4, 485^5], sum=953$, \\
$B_{19} = [626^4, 371^6], sum=997$, \\
$B_{20} = [659^5, 228^6, 113^8], sum=1000$, \\
$B_{21} = [468^5, 485^6], sum=953$, \\
$B_{23} = [626^5, 371^7], sum=997$, \\
$B_{24} = [659^6, 228^7], sum=887$, \\
$B_{25} = [468^6, 485^7], sum=953$, \\
$B_{27} = [626^6, 371^8], sum=997$, \\
$B_{28} = [659^7, 228^8], sum=887$, \\
$B_{29} = [468^7, 485^8], sum=953$, \\
$B_{30} = [581^7, 419^8], sum=1000$, \\
$B_{31} = [626^7], sum=626$, \\
$B_{32} = [659^8], sum=659$, \\
$B_{33} = [468^8], sum=468$, \\
$B_{34} = [581^8], sum=581$, \\
$B_{35} = [626^8], sum=626$ \\
Note that bins $B_{13}, B_{15}, B_{20}, B_{30}$ cannot pack any of the items from the future instances of $D$.

At this point, we note a repeating pattern:

\begin{itemize}
    \item Item $371^7, 371^8$ has been packed in bins $B_{23}, B_{27}$ respectively. These bins have the same content $[626, 371]$ (without superscript). For $k=9$ it will pack $371^9$ in bin $B_{31}$, and afterwards this bin will also have the same content $[626, 371]$. 
    \item 
    Items $659^7, 659^8$ have been packed in bins $B_{28}, B_{32}$ respectively.
    Item $659^9$ cannot be packed into any of the previous bins and hence will require a new bin $B_{36}$ to pack. 
    \item 
    Items $113^7, 113^8$ have been packed in bins $B_{16}, B_{20}$ respectively. These bins have the same content $[659, 228, 113]$ (without superscript). 
    For $k=9$ it will pack $113^9$ in bin $B_{24}$ and after packing this bin will also have the same content $[659, 228, 113]$.
    \item Items $47^7, 47^8$ have been packed in bins $B_{9}, B_{13}$ respectively. These bins have the same content $[468, 485, 47]$ (without superscript). For $k=9$ it will pack $47^9$ in bin $B_{17}$ and after packing this bin will also have the same content $[468, 485, 47]$.
    \item 
    Items $485^7, 485^8$ have been packed in bins $B_{25}, B_{29}$ respectively. These bins have the same content $[468, 485]$ (without superscript). For $k=9$ it will pack $485^9$ in bin $B_{33}$ and afterwards this bin will also have the same content $[468, 485]$.
    \item
    Items $3^7, 3^8$ have been packed in bins $B_{11}, B_{15}$ respectively. These bins have the same content $[626, 371, 3]$ (without superscript). For $k=9$ it will pack $3^9$ in bin $B_{19}$ and after packing this bin will also have the same content $[626, 371, 3]$.
    \item 
Items $228^7, 228^8$ have been packed in bins $B_{24}, B_{28}$ respectively. These bins have the same content $[659, 228]$ (without superscript). For $k=9$ it will pack $228^9$ in bin $B_{32}$ and after packing this bin will also have the same content $[659, 228]$.
\item 
Items $419^7, 419^8$ have been packed in bins $B_{26}, B_{30}$ respectively. These bins have the same content $[581, 419]$ (without superscript). For $k=9$ it will pack  $419^9$ in bin $B_{34}$ and afterwards this bin will also have the same content $[581, 419]$.
\item 
Items $468^9, 581^9, 626^9$ cannot be packed into any of the previous bins and hence will require new bins $B_{37}, B_{38}, B_{39}$ respectively. 
Note that for $k=8$ these items were packed in new bins $B_{29}, B_{30}, B_{31}$ and for $k=7$ these items were packed in new bins $B_{25}, B_{26}, B_{27}$.  
\end{itemize}
One can observe that the content of the bins (except the $\text{SAME}$ bins) remains same for $k=7,8,9$. This pattern continues for further values of $k$. After $k \geq 2$, packing of $D_k$ requires 4 new bins than the packing of $D_{k-1}$. 
Therefore, $bins(D_k) = 11 + 4\cdot(k-2) = 4\cdot k + 3$.

The items in $D$ can be optimally  packed in bins $[626,371,3], [659,228,113], [419,581], [468,485,47]$. The size of each of these bins is exactly $1000$. Therefore, the optimal packing of $D_k$ requires $k \cdot OPT(D) = 4k$ bins. 
Therefore, $\frac{FFk(D_k)}{OPT(D_k} = \frac{11 + 4 \cdot (k-2)}{4 \cdot k} = 1 + \frac{3}{4k} \leq 1.375$ for $k \geq 2$. One can observe that as $k$ increases, the approximation ratio continues to decrease.
\qed
\end{proof}


\section{\texorpdfstring{$FFDk$}{FFDk} algorithm} \label{appendix:FFDk-algorithm}
\paragraph{Challenges in extending existing proof:} Existing proofs for the $FFD$ are based on the assumption that the last bin contains a single item, and no other item(s) are packed after that, i.e. the smallest item is the only item in the last bin of $FFD$ packing. We cannot say the same in the case of $k$BP when $k>1$. For example let $D= \{ 103,102,101\}$ and ${S}=205$. Then $FFDk$ packing when $k=1$ is $\{103,102\}, \{101\}$  whereas for $k=2$ the packing is $\{103, 102\}, \{101,103\}, \{102,101\}$. Hence, the existing proofs for $FFD$ cannot be extended to $FFDk$.

\section{\texorpdfstring{$NFk$}{NFk} algorithm} \label{appendix:NFk-algorithm}

\begin{proof}[Proof of Theorem \ref*{NFk:theorem:asymptotic-ratio-2}]
	The idea behind the proof is due to \cite{Johnson1973}. Let the number of bins in the ${NFk}$ packing of $D_k$ is ${NFk}(D_k)$. Let these bins are ordered in the sequence in which they are opened. Then, for any two bins $B_{i-1}$ and $B_{i}$ , where $i$ is not the index of the first bin, it is true that:
	\[
	V(B_{i-1}) + V(B_i) > S
	\]
	Therefore,
	\[
	V(D_k)=V(B_1) + \ldots + 	V(B_{{NFk}(D_k)}) > \left\lfloor \frac{{NFk}(D_k)}{2} \right\rfloor \cdot S
	\]
	
	Since $OPT(D_k) \geq V(D_k)/S$,
	\[
	{NFk}(D_k) \leq 2 \cdot OPT(D_k) + 1 
	\]
	For the lower bound we consider the example given in \cite{Zheng_Luo_Zhang_2015}. Let the bin capacity is 1. Let $D$ be the input instance in which there are $y$ (for some large $y$) copies of the two items having sizes $\{1/2, \epsilon\}$ where $0 < \epsilon < 1/y$. ${NFk}$ will pack $D_k$ in $k\cdot y$ bins while optimal packing of $D_k$ consists of $k \cdot (y/2+1)$ bins. This will give us an asymptotic ratio of $2$.
 \qed
\end{proof}


\section{Fernandez de la Vega-Lueker algorithm to \texorpdfstring{$k$}{k}BP}
\label{appendix: dlvl-to-kbp}
Configuration linear program $C_k$ can be solved as follows:

\textbf{Using exhaustive search:} Since there are $n(D)$  items, a configuration can repeat at most $n(D)$  times. Therefore, $(x_\tau)_{\tau \in T} \in \{0,1, \ldots,n(D)\}^{\lvert T \rvert}$ 
where $T$ is the set of all possible configurations, $\tau$ is some configuration in $T$, and $x_\tau$ is the number of bins filled with the configuration $j$.
Each configuration contains at most $1/\epsilon$ items, and there are $m(D)$ different item sizes. Therefore, the number of possible configurations is at most $m(D)^{1/\epsilon}$. Now, we can check if enough slots are available for each size $c[i]$. Finally, output the solution with the minimum number of bins. Running time then is $m(D) \times n(D)^{m(D)^{1/\epsilon}}$, which is polynomial in $n(D)$ when $m(D)$ and $\epsilon$ are fixed.

\paragraph{Linear Grouping: } Let $D$ be some instance of the bin-packing problem
and $g > 1$ be some integer parameter. Order the items in $D$ in a non-increasing order. Let $U$ be the instance obtained by making groups of the items in $D$ of cardinality $g$ and then rounding up the items in each group by the maximum item size in that group. Let $U'$ be the group of the $g$ largest items and $U''$ be the instance consisting of groups from the second to the last group in $U$. Then 
$OPT(U'') \leq OPT(D)$ \cite{karmarkar-efficient-1982} and 
$OPT(U') \leq g$, because each item in $U'$ can be packed into a single bin. So
$$OPT(D) \leq OPT(U'' \cup U') \leq OPT(U'') + OPT(U') \leq OPT(U'') + g.$$
It implies the below Lemma \ref{general:lemma2} due to \cite{karmarkar-efficient-1982},
where 
 $LIN(D)$ denotes the value of the fractional bin-packing problem $F_1$ associated with the instance $D$:
\begin{lemma} \label{general:lemma2}
    1. $OPT(U'') \leq OPT(D) \leq OPT(U'') + g$. $\quad$
	2. $LIN(U'') \leq LIN(D) \leq LIN(U'') + g$. $\quad$
	3. $V(U'') \leq V(D) \leq V(U'') + g$.
\end{lemma}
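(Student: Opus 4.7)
The plan is to reduce all three claims of the lemma to two elementary pointwise comparisons between the sorted instances $D$, $U''$ and $D \setminus U'$, and then exploit them via the standard monotonicity of $V$, $LIN$ and $OPT$ under pointwise domination of item sizes. The $OPT$ part is essentially already sketched in the paragraph preceding the lemma in the paper; my task is to extend the same reasoning uniformly to $V$ and $LIN$.

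Concretely, index the items of $D$ in non-increasing order. For each $k \geq 2$, group $k$ of $U$ consists of $g$ copies of $D[(k-1)g + 1]$, so once $U''$ is sorted non-increasingly one obtains the closed form $U''[i] = D[\,g\lceil i/g\rceil + 1\,]$. Two pointwise inequalities now fall out immediately: first, $U''[i] \leq D[i]$ for every valid index $i$, because $g\lceil i/g\rceil + 1 > i$ and $D$ is non-increasing; and second, $(D \setminus U')[i] = D[g+i] \leq U''[i]$, because $g + i \geq g\lceil i/g\rceil + 1$ throughout the range of valid indices.

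I would then invoke the following standard monotonicity template: whenever $A[i] \leq B[i]$ for all $i$ with both instances sorted non-increasingly, any feasible or fractional bin-packing of $B$ converts to one of $A$ by item-by-item substitution, so $V(A) \leq V(B)$, $LIN(A) \leq LIN(B)$, $OPT(A) \leq OPT(B)$. Applied to the first pointwise inequality this template yields the lower bounds $V(U'') \leq V(D)$, $LIN(U'') \leq LIN(D)$, $OPT(U'') \leq OPT(D)$. Applied to the second it yields $V(D \setminus U') \leq V(U'')$, $LIN(D \setminus U') \leq LIN(U'')$, $OPT(D \setminus U') \leq OPT(U'')$. Finally, $U'$ is handled separately: its $g$ items can be placed one per bin, giving $OPT(D) \leq OPT(U'') + g$; opening $g$ extra (fractional) bins, one per item, transfers the same bound to $LIN$; and additivity of volume gives $V(D) = V(U') + V(D \setminus U') \leq V(U'') + g$, using the standard Karmarkar--Karp normalization $S = 1$ so that each of the $g$ items of $U'$ contributes at most $1$ to the volume.

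The substantive content is really the pointwise comparison in the second paragraph --- once that is nailed down, the three conclusions fall out via the same template, and the main ``hard'' step is simply the indexing bookkeeping behind the formula $U''[i] = D[g\lceil i/g\rceil + 1]$. One subtlety worth flagging is the $V$ bound: the additive constant $g$ tacitly assumes item sizes normalized to $[0,1]$; without that normalization the correct statement is $V(D) \leq V(U'') + g \cdot S$.
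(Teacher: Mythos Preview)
Your argument is correct and is exactly the pointwise-domination approach that the paper sketches for the $OPT$ inequality before deferring the full statement to Karmarkar--Karp~\cite{karmarkar-efficient-1982}; you simply carry the same substitution template through to $LIN$ and $V$, which is the standard proof. One small notational slip: in the volume step you write $V(D)=V(U')+V(D\setminus U')$, but $U'$ consists of $g$ \emph{rounded-up} copies of $D[1]$, not the first $g$ original items of $D$; what you actually need (and clearly intend, given your formula $(D\setminus U')[i]=D[g+i]$) is $V(D)=\sum_{i\le g}D[i]+V(D[g{+}1..n])$, and since $\sum_{i\le g}D[i]\le g$ under the $S=1$ normalization you already flagged, the bound still goes through.
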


If the items in $D$ are arranged in a non-decreasing order, then $U''$ will be the instance consisting of groups from first to the second from the last group, and $U'$  be the last group consisting of the $\leq g$ large items. The above same results hold in this case as well.

We extend the algorithm by de la Vega and Lueker to solve $k$BP as follows:

\begin{algorithm}[H]
	\DontPrintSemicolon
	\KwIn{A set $D$ of items, an integer $k$, and $\epsilon \in (0, 1/2]$.}
	\KwOut{A bin-packing of $D_k$.}
	Let $I$ and $J$ be multisets of small (of size $\le\epsilon \cdot S$) and large (of size $>\epsilon \cdot S$) items in $D$, respectively.  \;
	Sort the items in $J$ in a non-decreasing order of their sizes. \;
	Construct an instance $U$ from $J$ by 
 the linear grouping with $g = n(J) \cdot {\epsilon}^2$ and round up the sizes in each group to the maximum size in that group. \;
	Optimally solve the configuration linear program $C_k$ corresponding to the rounded problem $U$. 
	This will give us an optimal solution to the $k$BP instance $U_k$.  \;
	 To get a solution for $J_k$, ``ungroup'' the items, that is replace the items in groups with the original items in that group, as in step $3$, prior to rounding up.  \;
	Greedily add small items in $I_k$ by respecting constraint of $k$BP to get a solution for $D_k$. \;
	\caption{Fernandez de la Vega-Lueker Algorithm to $k$BP}
	\label{delaVegaLueker}
\end{algorithm}	

The instances $I_k$ and $J_k$ consist of $k$ copies of instances $I$ and $J$, respectively.

\begin{lemma} \label{dlvl:lemma:opt<(1+epsilon)opt}
	Let  $OPT(U_k)$ denote the optimal number of bins in solving $C_k$. Then, $OPT(U_k) < (1 + \epsilon) \cdot OPT(J_k) $.
\end{lemma}
\begin{proof}
	Let $L$ be the problem resulting from rounding down the items in each group but the first by the maximum of the previous group. Then, the instance $L_k$ consists of $k$ copies of instance $L$. The items in the first group are  rounded down to $0$. Since bin-packing  is monotone, 
	\begin{align} \label{equation:dlvlp:1}
		OPT(L_k) \leq OPT(J_k) \leq OPT(U_k)
	\end{align} 
	Note that $L$ and $U$ differ by a group of at most 
 $n(J) \cdot {\epsilon}^2$ items. Therefore, the difference between $L_k$ and $U_k$ is, at most, $k \cdot n(J) \cdot {\epsilon}^2$. Hence,
	\begin{align} \label{equation:dlvlp:2}
		OPT(U_k) - OPT(L_k) \leq k\cdot n(J)\cdot \epsilon^2 \nonumber \\
		OPT(U_k) \leq OPT(J_k) + k\cdot n(J)\cdot \epsilon^2
	\end{align}
	Since all items in $U_k$ 
 (and hence in $J_k$) have a size larger than $\epsilon \cdot S$, the number of items in a bin is at most $1/\epsilon$.  Hence, the minimum number of bins required to pack 
 the items in $J_k$
 are at least
 $k \cdot n(J) \cdot \epsilon$. Hence,
	\begin{equation} \label{equation:dlvlp:3}
		k \cdot n(J) \cdot \epsilon^2 \leq \epsilon \cdot OPT(J_k).
	\end{equation}
	By \ref{equation:dlvlp:2} and \ref{equation:dlvlp:3},  
	\begin{equation} \label{dlvlres1}
		OPT(U_k) \leq (1 + \epsilon)OPT(J_k)
	\end{equation}	
 \qed
\end{proof}


\begin{corollary} \label{dlvl:corollary:bins<(1+epsilon)opt}
	$ bins(J_k) \leq (1+\epsilon) \cdot OPT(J_k)$.
\end{corollary}
\begin{proof}
	From steps $3$ and $5$ of the algorithm, it is clear that from a packing of $U_k$ we can obtain a packing of $J_k$ with the same number of bins. Therefore, $bins(J_k) \leq (1+\epsilon) \cdot OPT(J_k)$.
 \qed
\end{proof}


\begin{proof}[of Theorem \ref{dlvl:theorem:bins<=(1+2epsilon)opt+k}]
	Let $I$ be the set of all items of size at most $\epsilon \cdot S$ in $D$, and $I_k$ be $k$ copies of the instance $I$. In the step $6$, while packing the items in $I_k$ if fewer than $k$ new bins are opened, then from corollary $\ref{dlvl:corollary:bins<(1+epsilon)opt}$ it is clear that $bins(D_k) \leq bins(J_k) + k \leq (1 + \epsilon) \cdot OPT(J_k) + k$. Since bin-packing is monotone, $OPT(J_k) \leq OPT(D_k)$. Therefore, $bins(D_k) \leq (1 + \epsilon) \cdot OPT(D_k) + k$.
	
	Now assume that at least $k$ new bins are opened at the step $6$ of the algorithm. Then, 
 From Lemma \ref{general:lemma1},
	\begin{align} \label{dlvl:theorem1:equation1}
		bins(D_k) &\leq \max \{(1+\epsilon) \cdot OPT(D_k) + k, (1+2 \cdot \epsilon) \cdot OPT(D_k) + k \} \nonumber \\
		&\leq (1+2 \cdot \epsilon) \cdot OPT(D_k) + k
	\end{align}
\qed
\end{proof}

\paragraph{Running time of algorithm \ref{delaVegaLueker}: }
\label{dlvl:algorithmruntimeanalysis} Time $O(n(D)\log{n(D)})$ is sufficient for all the steps except the step 4 in algorithm \ref{delaVegaLueker}. Step 4 takes time $O\left(m(D) \cdot {n(D)} ^ {{m(D)} ^ {1/\epsilon}}\right)$. Therefore, the time taken by the algorithm is $O\left(m(D) \cdot {n(D)} ^ {{m(D)} ^ {1/\epsilon}}\right)$, which is polynomial in $n(D)$ given $m(D)$ and $\epsilon$.



\section{Karmarkar-Karp algorithms to \texorpdfstring{$k$}{k}BP} \label{kkalgorithms2kBP}
\begin{lemma} 
\label{kkalgorithms: opt-dk<=2V(Dk)/s+k}
	$OPT(D_k) \leq 2 \cdot V(D_k)/S + k$ 
\end{lemma}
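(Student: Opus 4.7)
The plan is to exhibit an explicit feasible $k$BP packing whose bin-count is at most $2V(D_k)/S + k$, which gives the desired upper bound on $OPT(D_k)$. The construction treats each of the $k$ copies of $D$ separately.

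First, I would invoke a standard single-instance bin-packing bound: for any instance $D$, the Next-Fit algorithm produces a packing with at most $2V(D)/S + 1$ bins. The reason is exactly the one used in the proof of \Cref{NFk:theorem:asymptotic-ratio-2}: consecutive bins $B_{i-1}, B_i$ in the NF output satisfy $V(B_{i-1}) + V(B_i) > S$, and summing over disjoint consecutive pairs gives $V(D) > \lfloor NF(D)/2\rfloor \cdot S$, hence $NF(D) \leq 2V(D)/S + 1$.

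Next, I would apply NF independently to each of the $k$ copies of $D$ that make up $D_k$, placing the output bins of the $k$ runs into disjoint collections. Because each run packs only one copy of $D$ (which contains each item at most once), no bin produced by any single run contains two copies of the same item, and no item is shared between the bins produced by different runs. Thus the union of the $k$ packings is a valid $k$BP packing of $D_k$: every item of $D$ appears in exactly $k$ bins, and each bin contains at most one copy of each item.

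Summing the NF bound over the $k$ runs gives a total of at most $k\bigl(2V(D)/S + 1\bigr) = 2V(D_k)/S + k$ bins, using $V(D_k) = k\cdot V(D)$. Since $OPT(D_k)$ is bounded above by the size of any feasible packing, this yields $OPT(D_k) \leq 2V(D_k)/S + k$. There is no real obstacle here — the only thing to check carefully is that separating the $k$ copies into disjoint bin collections automatically respects the $k$BP no-duplicate constraint, which it does by construction. \qed
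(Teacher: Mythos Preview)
Your proof is correct and follows essentially the same approach as the paper: both use the single-instance bound $OPT(D)\le 2V(D)/S+1$ (the paper cites it from Karmarkar--Karp, you derive it via Next-Fit) and then observe that packing the $k$ copies of $D$ separately yields a valid $k$BP packing with at most $k(2V(D)/S+1)=2V(D_k)/S+k$ bins. The only difference is cosmetic: the paper phrases the second step as $OPT(D_k)\le k\cdot OPT(D)$, while you explicitly describe the separate-copies construction.
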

\begin{proof}
	 When $k=1$ then from \mbox{\cite{karmarkar-efficient-1982}} we know that $OPT(D) \leq \frac{2}{S} \cdot V(D)+ 1$. Therefore $$OPT(D_k) \leq k \cdot OPT(D) \leq \frac{2}{S}\cdot k \cdot V(D)+ k = \frac{2}{S} \cdot V(D_k)+ k.$$
    \qed
\end{proof}

\begin{lemma} 
\label{kkalgorithms: V(Dk)<=LIN(Dk)S+S(m(Dk)+k)/2}
	$V(D_k) \leq S \cdot LIN(D_k) \leq S \cdot OPT(D_k) \leq S \cdot LIN(D_k) + \frac{S \cdot (m(D_k) + k)}{2}$
\end{lemma}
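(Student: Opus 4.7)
My plan is to establish the three inequalities in turn.

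\textbf{First inequality, $V(D_k) \leq S \cdot LIN(D_k)$.} This is a direct volume count. Given any feasible fractional solution $\mathbf{x}$ of $F_k$, I left-multiply the equality $A\mathbf{x} = k\mathbf{n}$ by the row-vector $\mathbf{c}^T$ of item sizes to obtain $V(D_k) = \mathbf{c}^T A \mathbf{x}$. Since every configuration has total item-size at most $S$, the row-vector inequality $\mathbf{c}^T A \leq S\cdot \mathbf{1}^T$ holds componentwise, giving $V(D_k) \leq S \cdot \mathbf{1}^T\mathbf{x}$. Specializing to the LP-optimal $\mathbf{x}$ yields the claim.

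\textbf{Second inequality, $LIN(D_k) \leq OPT(D_k)$.} This is immediate, because any integer feasible solution to $C_k$ is also a feasible (fractional) solution to $F_k$, so the LP optimum is bounded above by the IP optimum.

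\textbf{Third inequality, $OPT(D_k) \leq LIN(D_k) + (m(D_k)+k)/2$.} This is the technical heart. I would start from a basic optimal solution $\mathbf{x}^*$ of $F_k$, so $\mathbf{1}^T \mathbf{x}^* = LIN(D_k)$. Since $F_k$ has $m(D_k)$ equality constraints, $\mathbf{x}^*$ has at most $m(D_k)$ positive coordinates. Write $\mathbf{x}^* = \mathbf{x}_I + \mathbf{x}_F$ with $\mathbf{x}_I = \lfloor \mathbf{x}^* \rfloor$ a non-negative integer vector and $\mathbf{x}_F \in [0,1)^t$ supported on the same set of at most $m(D_k)$ coordinates. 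The integer part $\mathbf{x}_I$ realizes a partial $k$BP packing using $\mathbf{1}^T \mathbf{x}_I$ bins, leaving a residual multiset $R$ with count vector $A\mathbf{x}_F = k\mathbf{n} - A\mathbf{x}_I$, which is non-negative integer because $A\mathbf{x}^* = k\mathbf{n}$ and $A\mathbf{x}_I$ are integer.

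Since $R$ is a sub-multiset of $D_k$, each distinct size appears at most $k$ times in $R$, so $R$ decomposes into $k$ layers $R^{(1)},\dots,R^{(k)}$, each a sub-multiset of $D$ (containing at most one copy of every size). Applying \Cref{kkalgorithms: opt-dk<=2V(Dk)/s+k} with $k=1$ to each layer gives $OPT(R^{(i)}) \leq 2V(R^{(i)})/S + 1$; summing over $i$ yields $OPT(R) \leq 2V(R)/S + k$. The same argument as in the first inequality, applied to $A\mathbf{x}_F$, gives $V(R) \leq S\cdot \mathbf{1}^T\mathbf{x}_F$. Combined with $\mathbf{1}^T\mathbf{x}_I = LIN(D_k) - \mathbf{1}^T\mathbf{x}_F$, this yields a preliminary estimate $OPT(D_k) \leq LIN(D_k) + \mathbf{1}^T\mathbf{x}_F + k$.

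\textbf{The main obstacle} is to sharpen the final estimate. The naive bound $\mathbf{1}^T\mathbf{x}_F < m(D_k)$ (a sum of at most $m(D_k)$ terms, each strictly less than $1$) only yields the looser $OPT(D_k)\leq LIN(D_k) + m(D_k) + k$, which is off by a factor of two from the target. To reach the claimed $(m(D_k)+k)/2$, I would refine the rounding: split the support of $\mathbf{x}_F$ according to whether $x_F[j]\geq 1/2$ (round \emph{up}, incurring an overhead of at most $1/2$ per entry) or $x_F[j]<1/2$ (round \emph{down}, feeding only the smaller-volume round-down residual into the layered $2V/S+k$ estimate). Balancing the two contributions against the fixed budget $|\{j : x^*[j] > 0\}| \leq m(D_k)$ is the delicate step, and I expect it to deliver the stated additive gap of $(m(D_k)+k)/2$.
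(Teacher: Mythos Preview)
Your treatment of the first two inequalities, and your setup for the third (take a basic optimal solution $\mathbf{x}^*$ of $F_k$, split it into $\mathbf{x}_I=\lfloor\mathbf{x}^*\rfloor$ and $\mathbf{x}_F$, and observe that $\mathbf{x}_F$ has at most $m(D_k)$ nonzero entries) all match the paper.

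The gap is in how you finish. You are using only \emph{one} upper bound on the residual, namely the volume bound $OPT(R)\le 2V(R)/S+k$ obtained by layering. The paper instead records a \emph{second}, trivial bound: since each configuration fits in one bin, rounding every fractional $x_F[j]$ up to $1$ packs the whole residual in at most $m(D_k)$ bins, so $OPT(R)\le m(D_k)$. With both bounds in hand one simply uses $\min\le\text{average}$:
\[
OPT(R)\le\min\{\,m(D_k),\,2V(R)/S+k\,\}\le \frac{m(D_k)+2V(R)/S+k}{2}=\frac{V(R)}{S}+\frac{m(D_k)+k}{2},
\]
and since $V(R)/S\le \mathbf{1}^T\mathbf{x}_F$ one gets $\mathbf{1}^T\mathbf{x}_I+V(R)/S\le \mathbf{1}^T\mathbf{x}^*=LIN(D_k)$, which closes the argument in one line.

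Your proposed round-up/round-down refinement does not reach the stated bound. Carrying it through: with $U=\{j:x_F[j]\ge 1/2\}$, $L=\{j:x_F[j]<1/2\}$, $u=|U|$, $l=|L|$, the rounded packing uses $\mathbf{1}^T\mathbf{x}_I+u$ bins and leaves a residual $R''$ with $V(R'')\le S\sum_{j\in L}x_F[j]$; the layered bound then gives
\[
OPT(D_k)\le \mathbf{1}^T\mathbf{x}_I+u+2\sum_{j\in L}x_F[j]+k
= LIN(D_k)+\Bigl(u-\sum_{j\in U}x_F[j]\Bigr)+\sum_{j\in L}x_F[j]+k
\le LIN(D_k)+\tfrac{u}{2}+\tfrac{l}{2}+k,
\]
hence at best $LIN(D_k)+m(D_k)/2+k$, which is $k/2$ larger than the claimed $(m(D_k)+k)/2$. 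The additive ``$+k$'' you pay each time you invoke the layered volume bound cannot be halved by the threshold split; only the averaging trick (which halves the $+k$ along with the $m(D_k)$) gets you there.
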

\begin{proof}
	Let $\mathbf{c}$ and $\mathbf{n}$ be the vectors of dimension $m(D)$ such that for each natural $i\le m(D)$ the $i$th entry of $\mathbf{c}$ is the size $c[i]$ and the $i$th entry of $\mathbf{n}$ is the number $n[i]$ of items of size $c[i]$. 
 For each natural $j\le t$ let the size of the configuration $j$ be the sum of all sizes of the items of $a_j$, which is equal to the $j$th entry of the vector $\mathbf{c^T}A$. For instance, let $m(D)=2$, $n[1]=2$, $n[2]=2$, $c[1]=3$, $c[2]=4$, and $S=10$. Then, the number $t$ of feasible  configurations is $6$, and let the matrix $A$ be 
	$ \begin{bmatrix}
		0 & 0 & 1 & 2 &1 & 2 \\
		1 & 2 & 0 & 0 &1 & 1
	\end{bmatrix} $. Then $\mathbf{c^T}A=\begin{bmatrix}
		4 & 8 & 3 & 6 & 7 & 10
	\end{bmatrix}$. Note that the size of each configuration is at most the bin size $S$. So if $\mathbf{S}$ is a vector of dimension $t$ whose each entry equals $S$ then
	\begin{equation} \label{kk:lemma2:equation1}
		\mathbf{S} \geq \mathbf{c^T}A
	\end{equation}
	Let $\mathbf{x}$ be an optimal basic feasible solution of the fractional linear program $F_k$, which represents the $k$BP problem, for instance, $D_k$. 
        Each element $x[j]$ in $\mathbf{x}$ represents the (fractional) number of bins filled with configuration $j$.
	Then, $\mathbf{S} \cdot \mathbf{x}$ is the maximum possible sum of all items  
from
 all configurations in the solution vector $\mathbf{x}$.
 Then,
	\begin{align}
		S \cdot LIN(D_k) = S \cdot LIN(F_k) &= \mathbf{S} \cdot \mathbf{x}  \label{kk:lemma2:equation2}\\
		&\geq (\mathbf{c}^T A) \cdot \mathbf{x} && \text{by \ref{kk:lemma2:equation1}} \label{kk:lemma2:equation3} \\
		&\geq  \mathbf{c}^T(k\mathbf{n}) && \text{by \ref{eqn:dlvl:5}} 
	\label{kk:lemma2:equation4}
	\end{align}

Since for each natural $i\le m(D)$, the $i$th entry of $\mathbf{c^T}$ is the size $c[i]$ and the $i$th entry of $\mathbf{n}$ is the number of items of size $c[i]$, $\mathbf{c^T n}$ is the sum of sizes of all items of $D$, so by \ref{kk:lemma2:equation4} we have  $S \cdot LIN(D_k) = k \cdot V(D) = V(D_k)$.

\par Since for any natural $j\le t$ the variable $x_j$ in the problem for $LIN(D_k)$ is the fractional number of occurrences of the configuration $j$, we have $LIN(D_k) \leq OPT(D_k)$, and the equality holds iff the problem for $LIN(D_k)$ has an integer solution.

\par 

It is well-known that the linear program {\ref{eqn:dlvl:4}} - {\ref{eqn:dlvl:6}} has 
an optimal solution $\mathbf{x}$ which is a \emph{basic feasible solution},
where the number of
non-zero variables is at most the number $m(D)$ of constraints. 

For each natural $j\le t$ such that $x_j>0$ we have $x_j = \lfloor x_j \rfloor + r_j$, where $ \lfloor x_j \rfloor$ is the integer part of $x_j$, and $r_j$ is the fractional part of $x_j$.
For each natural $j \le t$ such that $r_j \in (0,1)$, we take configuration $j$. Let $D'$ be the resulting constructed instance. Clearly, all items in $D'$ will have a single copy.
Then,
\begin{equation} \label{kk:lemma2:equation6}
	V(D') \leq S \cdot LIN(D')  = S \cdot \sum_j r_j 
\end{equation}
For the residual part, we can construct two packings. 
First, for each non-zero $r_j$, let's take a bin of configuration $j$. We can then remove extra items.
Therefore,
\begin{equation} \label{eqn:kk:lemma2:1} OPT(D') \leq m(D_k) \text{ ( the instances } D \text{ and } D_k \text{ have the same configurations)}\end{equation}

Second, from Lemma \ref{kkalgorithms: opt-dk<=2V(Dk)/s+k} we have	 
\begin{equation} \label{eqn:kk:lemma2:2} OPT(D') \leq 2V(D')/S + k  \end{equation}
By \ref{eqn:kk:lemma2:1} and \ref{eqn:kk:lemma2:2},$$OPT(D')\le  \min \{m(D_k), 2V(D')/S + k\}\le\frac {m(D_k) +  2V(D')/S + k}2= V(D')/S + \frac{m(D_k)+k}{2}.$$.

Therefore,
\begin{align} \label{kklemm2e4}
	OPT(D_k) &\leq \text{the principal part} + OPT(D') \nonumber \\
	OPT(D_k) &\leq \text{the principal part} + V(D')/S + \frac{m(D_k)+k}{2} \nonumber \\
	OPT(D_k) &\leq LIN(D_k) + \frac{m(D_k)+k}{2} && \text{by \ref{kk:lemma2:equation2}}
\end{align}
\qed
\end{proof}

\begin{corollary} \label{kk:lemm2cor1}
	There is an algorithm \textbf{A} that solves the $k$BP by solving the corresponding fractional bin-packing problem $F_k$. It produces at most $\mathbf{1 \cdot x} + \frac{m(D_k)+k}{2}$ bins, where $\mathbf{x}$ is 
an optimal solution of $F_k$.
\end{corollary}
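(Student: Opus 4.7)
The plan is to construct the algorithm $\mathbf{A}$ explicitly by mirroring the rounding argument embedded in the proof of \Cref{kkalgorithms: V(Dk)<=LIN(Dk)S+S(m(Dk)+k)/2}, then to read the bin count directly from that construction. First, I would invoke the standard fact about LP relaxations of packing-type programs: $F_k$ always admits an optimal \emph{basic feasible solution} $\mathbf{x}$ in which at most $m(D_k) = m(D)$ variables are strictly positive, since the number of equality constraints in $C_k$ is $m(D)$. The algorithm $\mathbf{A}$ will compute such a basic feasible $\mathbf{x}$ and then split it into an integer part and a residual: open $\lfloor x_j \rfloor$ bins of configuration $j$ for every $j$ with $x_j > 0$, and denote by $D'$ the multiset of items of $D_k$ that remain unpacked after this ``principal'' step.

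Next I would argue that $D'$ itself can be packed in few bins by taking the better of two bounds. On one hand, $D'$ is covered by taking one bin for each configuration with a nonzero fractional part $r_j \in (0,1)$, and there are at most $m(D_k)$ such configurations by basic feasibility; so $OPT(D') \le m(D_k)$. On the other hand, \Cref{kkalgorithms: opt-dk<=2V(Dk)/s+k} applied to $D'$ yields $OPT(D') \le 2 V(D')/S + k$. Taking the average of these two upper bounds gives
\[
OPT(D') \;\le\; \frac{m(D_k)}{2} + \frac{V(D')}{S} + \frac{k}{2} \;=\; \frac{V(D')}{S} + \frac{m(D_k)+k}{2}.
\]

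Finally, I would combine the two steps. The total bin count is $\sum_j \lfloor x_j \rfloor + OPT(D')$. Since $V(D')/S \le \sum_j r_j$ by equation \eqref{kk:lemma2:equation6} from the preceding lemma, and $\sum_j \lfloor x_j \rfloor + \sum_j r_j = \sum_j x_j = \mathbf{1}\cdot\mathbf{x}$, substituting the residual bound gives
\[
\text{bins} \;\le\; \sum_j \lfloor x_j \rfloor + \frac{V(D')}{S} + \frac{m(D_k)+k}{2} \;\le\; \mathbf{1}\cdot\mathbf{x} + \frac{m(D_k)+k}{2},
\]
which is exactly the claimed guarantee. The only subtle point — and the one I would be most careful about — is ensuring that the residual items actually admit a $k$BP-feasible packing, since \Cref{kkalgorithms: opt-dk<=2V(Dk)/s+k} was stated for $k$-times bin-packing of $D_k$; the key observation is that the residual items form a sub-multiset of $D_k$ where each item appears at most $k$ times, so the lemma applies with the same bin capacity $S$ and additive term $k$. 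Everything else is routine LP rounding bookkeeping.
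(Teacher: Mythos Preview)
Your proposal is correct and follows essentially the same route the paper takes: the corollary is stated without a separate proof because it is read off directly from the rounding argument inside the proof of \Cref{kkalgorithms: V(Dk)<=LIN(Dk)S+S(m(Dk)+k)/2}, which is exactly the construction you spell out (basic feasible $\mathbf{x}$ with at most $m(D_k)$ positive entries, floor the configurations, bound the residual $D'$ by the minimum of $m(D_k)$ and $2V(D')/S+k$, average, and combine via \eqref{kk:lemma2:equation6}). Your explicit remark about why \Cref{kkalgorithms: opt-dk<=2V(Dk)/s+k} legitimately applies to the residual instance is a useful clarification that the paper leaves implicit.
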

Note that the number of different item sizes in the instance is $m(D_k) = m(D)$.


\subsection{Solving the fractional linear program: } \label{kkalgorithms: solving-flp}
Solving the fractional linear program \hyperlink{kk:flp}{$F_k$}   involves a variable for each configuration. This results in a large number of variables. The fractional linear program $F_k$ has the following dual $D_F$.
\begin{align} \label{kk:dflp:appendix}
\max  \quad k \cdot \mathbf{n \cdot y} \\
\text{such that} \quad A^T \mathbf{y} &\leq \mathbf{1} \\
\mathbf{y} &\geq 0 
\end{align}
The above dual linear program can be solved to any given tolerance $h$ by using a variant of the ellipsoid method that uses an approximate separation oracle \cite{karmarkar-efficient-1982}.
This separation oracle accepts as input a vector $\mathbf{y}$ (each element $y[i]$ of $\mathbf{y}$ represents the price of item of size $c[i]$) and determines:
\begin{itemize}
\item whether $\mathbf{y}$ is feasible, or
\item if not, then return a constraint that is violated, i.e. a vector $\mathbf{a}$ such that $\mathbf{a} \cdot \mathbf{y} > 1$.
\end{itemize}
Let $\mathbf{c}$ be the $m(D)$-vector of the item sizes. The separation oracle solves the above problem by solving the following knapsack problem:
\begin{align} \label{kk:dflp:knapsack}
\max \quad \mathbf{a} \cdot \mathbf{y} \\
\text{such that} \quad \mathbf{a} \cdot \mathbf{c} &\leq S \\
\mathbf{a} &\geq 0 
\end{align}
Each entry $a[i]$ of $\mathbf{a}$ represents $a[i]$ pieces of size $c[i]$ 
, and hence $\mathbf{a}$ is an integer vector.   If $\mathbf{y}$ is feasible then the optimal value of the above knapsack problem is at most $1$. Otherwise, $\mathbf{a}$ correspond to a configuration that violates the constraint $\mathbf{a} \cdot \mathbf{y} \leq 1$.

Suppose we want a solution of $D_F$ 
within a specified tolerance $\delta$. Then, the above knapsack problem can be solved in polynomial time by rounding down each component of $\mathbf{y}$ to the closest multiple of  {$\frac{\delta}{n(D)}$}. The runtime of the above algorithm is  {$O\left(\frac{m(D) \cdot n(D)}{\delta}\right)$} \cite{karmarkar-efficient-1982}.  Alternatively, we can round down each component of $\mathbf{y}$ to the nearest multiple of {$\frac{\delta}{k \cdot n(D)}$}, and then the runtime of the above algorithm is {$O\left(\frac{k \cdot m(D) \cdot n(D)}{\delta}\right)$}.

The modified ellipsoid method uses the above approximate separation oracle as follows. Given the current ellipsoid center $\mathbf{y_c}$, let $\widetilde{\mathbf{y_c}}$ be the rounded down solution corresponding to $\mathbf{y_c}$. The modified ellipsoid method performs either an \textit{feasibility cut} (if $\widetilde{\mathbf{y_c}}$ is infeasible) or an \textit{optimality cut} (if $\widetilde{\mathbf{y_c}}$ is feasible). Since $\mathbf{\widetilde{y_c}}$ is the rounded down solution corresponding to $\mathbf{y_c}$, if $\mathbf{\widetilde{y_c}}$ is infeasible then $\mathbf{y_c}$ is not feasible too. In a feasibility cut, the modified ellipsoid method cuts from the ellipsoid all points such that $\mathbf{a} \cdot \mathbf{y} > 1$. If $\mathbf{\widetilde{y_c}}$ is feasible, then $\mathbf{y_c}$ may or may not be feasible.  
If the rounding-down is done to a multiple of {$\frac{\delta}{k \cdot n(D)}$}, then, by definition of the rounding, we have 
$$k \cdot \mathbf{n} \cdot \widetilde{\mathbf{y_c}} \geq k \cdot \mathbf{n} \cdot \mathbf{y_c} - k \cdot \mathbf{n} \cdot \mathbf{1} \cdot \frac{\delta}{k \cdot n} \nonumber  k \cdot \mathbf{n} \cdot \mathbf{y_c} - \delta.$$
In an optimality cut, the method preserves all the vectors whose value exceeds $\widetilde{\mathbf{y_c}}$ by more than $\delta$ \cite{karmarkar-efficient-1982} i.e. it removes all points that satisfy $k \cdot \mathbf{n} \cdot \mathbf{y} < k \cdot \mathbf{n} \cdot \widetilde{\mathbf{y_c}} + \delta $ .

The number of iterations in the modified ellipsoid algorithm is the same as that of the modified ellipsoid method in \cite{karmarkar-efficient-1982} that is, at most, {$Q = 4 \cdot {m(D)}^2 \cdot \ln{\left(\frac{m(D) \cdot n(D)}{\epsilon \cdot S \cdot \delta}\right)}$} iterations. The total execution time of the modified algorithm is  {$O\left(Q \cdot m(D) \cdot \left(Q \cdot m(D) + \frac{k \cdot n(D)}{\delta}\right)\right)$}. 

Let $t$ be the number of configurations in $k$BP, which is the same as the number of configurations in BP. There are at most $t$ constraints of the form $\mathbf{a} \cdot \mathbf{y} \leq 1$ during the ellipsoid method. It is well known that in any bounded linear program with {$m(D)$}  variables, at most {$m(D)$} constraints are sufficient to determine the optimal solution. 
We can use the same constraint elimination procedure as in \cite{karmarkar-efficient-1982} to come up with this critical set of {$m(D)$}  constraints. It results in having a reduced dual linear program with {$m(D)$}  variables and {$m(D)$}  constraints. Let us call this reduced dual linear program $R_{D_F}$. By taking the dual of $R_{D_F}$, we get a reduced primal linear program, say $R_F$. This reduced primal linear program will have only {$m(D)$}  variables corresponding to {$m(D)$}  constraints. Let us denote the optimal solutions of the dual linear program $D_F$ and the reduced dual linear program $R_{D_F}$ as $LIN(D_F)$ and $LIN(R_{D_F})$ respectively. Since $R_{D_F}$ is a relaxation of $D_F$ therefore, $LIN(D_F) \leq LIN(R_{D_F})$. From the modified GLS algorithm, we know that $LIN(R_{D_F}) \leq LIN(D_F) + \delta$ if rounding down is done in a multiple of  {$\frac{\delta}{k \cdot n(D)}$}. From the duality theorem of the linear program, both primal and dual have the same optimal solution. Therefore, the optimal solution of the reduced primal linear program, $R_F$, is at most $LIN(D_F) + \delta$ using the rounding down scheme in the modified GLS method.

The constraint elimination procedure will require to execute the modified GLS algorithm to run at most
{$(m(D)+1)\left(\left\lceil \frac{R}{m(D)+1} \right\rceil + (m(D)+1) \ln\left(\left\lceil \frac{R}{m(D)+1} \right\rceil\right) + 1\right)$} times, where  {$R \leq Q + 2 \cdot m(D)$} \cite{karmarkar-efficient-1982}. In the constraint elimination procedure, in the worst case, one out of every {$m(D)+1$} executions succeeds \cite{karmarkar-efficient-1982}. Therefore, the total accumulated error is  {$\delta \left(\left\lceil \frac{R}{m(D)+1} \right\rceil + (m(D)+1) \ln\left(\left\lceil \frac{R}{m(D)+1} \right\rceil\right) + 1\right)$}. If the tolerance $\delta$ is set to 
$$\frac{h}{\left\lceil \frac{R}{m(D)+1} \right\rceil + (m(D)+1) \ln\left(\left\lceil \frac{R}{m(D)+1} \right\rceil\right ) + 2}$$ then the optimal solution of the reduced primal linear program is at most $LIN(D_F) + h$ using the rounding down mechanism in the modified GLS method.

The total running time of the algorithm is
$$O\left(\left(Q  m(D) \left(Q  m(D) + \frac{k \cdot n(D)}{\delta}\right)\right)(m(D)+1)\left(\left\lceil \frac{R}{m(D)+1} \right\rceil + (m(D)+1) \ln\left\lceil \frac{R}{m(D)+1} \right\rceil + 1\right)\right) $$ $$\approx O\left(m(D)^8 \cdot  {\ln{m(D)}} \cdot {\ln^2\left(\frac{m(D) \cdot n(D)}{\epsilon \cdot S \cdot h}\right)} + \frac{m(D)^4 \cdot k \cdot n(D) \cdot \ln{m(D)}}{h} \ln{\frac{m(D) \cdot n(D)}{\epsilon \cdot S \cdot h}}\right).$$ {Since $m(D) = m(D_k)$ and $k \cdot n(D) = n(D_k)$}, in our analysis, we denote this running time as {$T(m(D_k),n(D_k))$}.	 

\subsection{Karmarkar-Karp Algorithm 1 extension to \texorpdfstring{$k$}{k}BP}
\label{kkalgorithms: algorithm1}
Recall that the algorithm 1 of Karmarkar-Karp algorithms uses the linear-grouping technique \ref{section:EAA:subsection:general}. 

\begin{algorithm}[H] 
\DontPrintSemicolon
\KwIn{A set $D$ of items, a number $\epsilon \in (0,1/2]$, and an integer $k$.}
\KwOut{A bin-packing of $D_k$.}
Let an item of $D$ is \emph{small} if it has size at  most $\max \{1/n, \epsilon\} \cdot S$ and \emph{large}, otherwise. Let $I$ and $J$ be multisets of small and large items of $D$, respectively.  Let $I_k$ be the $k$ copies of the instance $I$. \;
Perform the linear grouping on the instance $J$ by making groups of cardinality $g = \lceil n(J) \cdot {\epsilon}^2 \rceil $. Denote the resulting instances as ${U'}$  and  $U''$, as defined in the linear grouping technique in appendix 
\ref{appendix: dlvl-to-kbp} \;
Pack each item in ${U'_k}$ into a bin. There can be at most $g \cdot k$ bins, because each item of $D$ has $k$ copies in $D_k$. \;
Solve ${U''}_k$ using a fractional linear program with tolerance $h=1$. Denote the resulting solution as $\mathbf{x}$. \;
Construct the integer solution from $\mathbf{x}$ by the rounding method using no more than $1 \cdot \mathbf{x} + \frac{m(U'')+k}{2}$ bins (see Lemma \ref{kk:lemm2cor1}). \;
Reduce the item sizes as necessary and obtain a packing for $J_k$ along with the addition of bins as in step 3. \;
Obtain a packing for $D_k$ by adding the items in $I_k$ which are kept aside in step 1 by respecting constraint of $k$BP, creating new bin(s) when necessary. \;
\caption{Karmarkar-Karp Algorithm 1 extension to $k$BP}
\label{kk:algo1}
\end{algorithm}


\begin{proof}[of Theorem \ref{kkalgorithms:algorithm1:theorem:bin(Dk)<=(1+2kepsilon)OPT+additiveterms}] 
Since bin-packing is monotone.
\begin{equation} \label{eqn:kk:algorithm1:theorem1:1}
	OPT({U''}_k) \leq OPT(J_k) \leq OPT(D_k)  
\end{equation}
The size of each item in $J_k$ is at least $\frac{\epsilon}{2} \cdot S$. Therefore, from Lemma \ref{kkalgorithms: V(Dk)<=LIN(Dk)S+S(m(Dk)+k)/2},
\begin{align} \label{eqn:kk:algorithm1:theorem1:2}
	S\cdot OPT(J_k) &\geq V(J_k) \geq \frac{\epsilon}{2} \cdot S \cdot n(J_k) \nonumber \\
	OPT(J_k) &\geq \frac{\epsilon}{2} \cdot n(J_k).
\end{align}
From algorithm step 2 and from equations \ref{eqn:kk:algorithm1:theorem1:1} and \ref{eqn:kk:algorithm1:theorem1:2},
\[ g = \lceil n(J_k) \epsilon^2 \rceil \leq 2 \cdot \epsilon \cdot OPT(J_k) + 1 \leq 2 \cdot \epsilon \cdot OPT(D_k) + 1  \]
\[ m({U''}_k) = n({U''}_k)/g = n({U''}_k)/{ \lceil n(J_k)\epsilon^2 \rceil } \leq n(J_k)/ { \lceil n(J_k)\epsilon^2 \rceil } \leq 1/\epsilon^2.  \]
We have tolerance $h=1$, so
\[ \mathbf{1 \cdot x} \leq LIN({U''}_k) + 1 \leq OPT({U''}_k) + 1 \leq OPT(D_k) +1 \]
Step $6$ will result in the number of bins which is at most
\[ \mathbf{1 \cdot x} + \frac{m({U''}_k) + k}{2} + g \cdot k  \leq\]
\[   OPT(D_k) +1 + \frac{k + \frac{1}{\epsilon^2}}{2} + k \cdot (2 \cdot \epsilon \cdot OPT(D_k) + 1) \leq\]
\[ \ (1 + 2 \cdot k \cdot \epsilon)OPT(D_k) + \frac{1}{2 \cdot \epsilon^2} + (2k+1). \]

By Lemma \ref{general:lemma1}, the number $bins(D_k)$ of bins required at step $7$ of the algorithm is at most 
$$\max \left\{ (1 + 2 \cdot k \cdot \epsilon)OPT(D_k) + \frac{1}{2 \cdot \epsilon^2} + (2k+1), (1 + 2\ \cdot \epsilon)OPT(D_k) + k \right\}=$$ $$(1 + 2 \cdot k \cdot \epsilon)OPT(D_k) + \frac{1}{2 \cdot \epsilon^2} + (2k+1). \notag \nonumber $$
\qed
\end{proof}

\paragraph{Running time of algorithm \ref{kk:algo1}.}  \label{kk:algorithm1:running-time}
The time taken by fractional linear program is at most $T(m(U''_k),n(U''_k) ) \leq T\left(\frac{1}{\epsilon^2}, n(D_k)\right)$. The rest of the steps in Algorithm 1 will take at most $O(n(D_k) \log {n(D_k)})$ time. Therefore, the execution time of above Algorithm 1 is at most $O\left(n(D_k) \log {n(D_k)} + T(\frac{1}{\epsilon^2}, n(D_k)\right)$.


\subsection{Karmarkar-Karp Algorithm 2 extension to \texorpdfstring{$k$}{k}BP} \label{kkalgorithms:algorithm2}

\paragraph{Alternative Geometric Grouping.} Let $J$ be an instance and $g>1$ be an integer parameter. Sort the items in $J$ in a non-increasing order of their size. Now, we partition $J$ into groups $G_1,\dots G_r$ 
each containing necessary number of items so that the size of each but the last 
group (the sum of item sizes in that group) is at least $g \cdot S$.

For natural $i\le r$ let $l_j$ be the  cardinality of the group $G_i$, that is the number of items in the group.
Let 
$l_r$ be the cardinality of $G_r$. Note that the size of $G_r$ may be less than $g \cdot S$.

Since each item size is less than $S$, the number of items in each but the last
group is at least $g+1$. Let $\epsilon \cdot S$ be the smallest item size in $J$. Then, the number of items in each but the last 
group is at most $\frac{g}{\epsilon}$. 
Therefore,
$
g+1 \leq l_1 \leq \ldots \leq l_r \leq \frac{g}{\epsilon}
$. 
For each natural $i<r$ let ${G'}_{i+1}$ be the group obtained by increasing the size of each item in ${G}_{i+1}$ to the maximum item size in that group except for the smallest $l_{i+1} - l_{i}$ items in that group. Let $U' = \bigcup_{i=2}^r \Delta {G_i} \cup G_1$ and $U'' = \bigcup_{i=2}^r G_i^{'}$, where $\Delta G_i$ is the multiset of smallest $l_i - l_{i-1}$ items in $G_i$ for each natural $i$ with $2\le i\le r$ .

Since bin-packing is monotone, $OPT(J) \leq OPT(U') + OPT(U'')$. By Lemma \ref{kkalgorithms: opt-dk<=2V(Dk)/s+k}, $OPT(U') \leq 2 \cdot V(U')/S + 1$ (note that $k=1$). It holds that $V(U') \leq S \cdot g \cdot (1 + \ln {\frac{1}{\epsilon \cdot S}}) + S$ \cite[page 315]{karmarkar-efficient-1982}. Therefore,
\begin{equation} \label{kk:equation4}
OPT(J) \leq OPT(U'') + 2\cdot g \cdot \left(2 + \ln {\frac{1}{\epsilon \cdot S}} \right)
\end{equation}
Also, $V(U'') \geq {\sum_{i=2}^r} g \cdot S \cdot \frac{l_{i-1}}{l_i} \geq g \cdot S \cdot \left( (r-1) - \ln \frac{1}{\epsilon \cdot S} \right)$. 
Therefore, by \cite{karmarkar-efficient-1982},  
\begin{equation} \label{kk:equation5}
m(U'') = r-1 \leq \frac {V(U'')}{g \cdot S} + \ln {\frac{1}{\epsilon \cdot S}}.
\end{equation} 

\begin{algorithm}[H]
\DontPrintSemicolon
\KwIn{A set $D$ of items, $\epsilon \in (0,1/2]$, and integers $g > 1$ and $k$.}
\KwOut{A bin-packing of $D_k$.}
Let $J$ be the instance obtained after removing all items of size at most $ \epsilon \cdot S$ from $D$.  \;
\While{$V(J) > S \cdot (1 + \frac{g}{g-1} \cdot \ln{\frac{1}{\epsilon}})$}{
	Do the alternative geometric grouping with the parameter $g$. Let the resulting instances be $U'$ and $U''$.  \;
	Solve ${U''}_k$ using fractional bin-packing with the tolerance $h=1$. Denote the resuting basic feasible solution as $\textbf{x}$. \;
	Construct an integer solution from $\textbf{x}$. Create $\lfloor x_i \rfloor$ bins for each $x_i$ in $\textbf{x}$. Remove the packed items from $J$. \;
	Pack ${U'}_k$ in  at most $2 \cdot k \cdot g \cdot \left[ 2 + \ln \frac{1}{\epsilon} \right]$ bins. \;
}
When $V(J) \leq S \cdot (1 + \frac{g}{g-1} \cdot \ln{\frac{1}{\epsilon}})$, pack the remaining items greedily in at most $(2 + \frac{2 \cdot g}{g-1} \cdot \ln {\frac{1}{\epsilon}})$ bins and create $k$ copies of this packing. \;
Greedily pack the $k$ copies of the items of size at most $\epsilon \cdot S$ respecting $k$BP constraint to get a solution for $D_k$. \;
\caption{Karmarkar-Karp Algorithm 2 extension to $k$BP}
\label{kk:algorithm2}
\end{algorithm}	


\begin{proof}[Proof of Theorem \ref{kkalgorithm2:theorem:bin(Dk)<=opt+O(klog2opt)}]
Let ${U''}_{k,i}$ 
be the instance of $U''_k$ 
at the beginning of the $i$th iteration of the loop in step $2$. Then the fractional part of $\mathbf{x}$ contains at most $m({U''}_{k,i})$ 
non-zero variables $x_j$ for $j\le t$. Therefore, from 
\eqref{kk:equation5} we have
\begin{equation} \label{kk:algorithm2: analysis:equation1}
	m({U''}_{k,i})  = m({U''}_{1,i}) \leq \frac{V({U''}_{1,i})}{g\cdot S} + \ln{\frac{1}{\epsilon \cdot S}}
\end{equation}
\emph{The number of iterations in step $6$ of algorithm.}
As shown in \cite{karmarkar-efficient-1982} the number of iterations is at most $\frac{\ln {V({D})}}{\ln {g}}+ 1$.

\emph{Number of bins produced using Fractional Bin-Packing solution in steps $3-5$.}
Let $J_i$ denote the item set 
$J$ 
at the beginning of $i$th iteration. On applying alternative geometric grouping with the parameter $g$, we get 
the instances ${U'}_{1,i}$ and ${U''}_{1,i}$ . Let 
${U'}_{k,i}$ and ${U''}_{k,i}$ be the corresponding $k$BP instances.
Let $B_i$ be the number of bins obtained in solving ${U''}_{k,i}$ by applying fractional bin-packing at the iteration $i$ with the tolerance $h=1$. Then
\[
B_i \leq LIN({U''}_{k,i}) + 1.
\]
Summing across all iterations
\begin{align} \label{kk:algorithm2: analysis:equation2}
	\sum B_i &\leq LIN({U''}_{k}) + \frac{\ln {V(D) }}{\ln {g}}+ 1 \nonumber \\
	&\leq OPT({U''}_{k}) + \frac{\ln {V({D}) }}{\ln {g}}+ 1
\end{align}
\emph{The total number of bins used.}
Let $bins(D_k)$ denote the total number of used bins
Then $bins(D_k)$ is the sum of  the bins used 
in steps $3-5$, in step $6$, and in step $8$. So 
\begin{equation}
	\begin{split}
		bins(D_k) \leq OPT(D_k) + \left(\frac{\ln{V(D)}}{\ln{g}} + 1\right)\left(1 + 4 \cdot k \cdot g + 2 \cdot k \cdot g \ln{\frac{1}{\epsilon}}\right) + \\ 
		k\cdot \left(2 + \frac{2 \cdot g}{g-1} \ln{\frac{1}{\epsilon}}\right).	
	\end{split}
\end{equation}
By Lemma \ref{general:lemma1}, 
\begin{equation}
	\begin{split}
		bins(D_k) \leq \max 
		\bigg\{
			OPT(D_k) + 
			\left(
				\frac{\ln{V(D)}}{\ln{g}} + 1
			\right)
			\left(
				1 + 4 \cdot k \cdot g + 2 \cdot k \cdot g \ln{\frac{1}{\epsilon}}
			\right) + 
			\\ 
			k\cdot
			\left(
				2 + \frac{2 \cdot g}{g-1} \ln{\frac{1}{\epsilon}}
			\right), 
			(1+2 \cdot \epsilon)\cdot OPT(D_k) + k  
		\bigg\}
	\end{split}
\end{equation}
Choosing $g=2$ and $\epsilon=\frac{1}{V(D)}$ will result in 
\begin{align}
	bins(D_k) &\leq OPT(D_k) + O(k \cdot \log^2 {OPT(D)})
\end{align}
number of bins.
\qed
\end{proof}

\paragraph{Running time of algorithm \ref{kk:algorithm2}}
By \ref{kk:equation5}, at each iteration of the while loop, $m({U''}_{1,i})$ decreases by a factor of $g \cdot S$. During the $i$th iteration of the while loop, the time taken by fractional bin-packing is at most $T(m({U''}_{k,i}), n({U''}_{k,i}))$. The remaining steps will take at most $O(n(D_k) \cdot \log {n(D_k)})$ time. Therefore, the total time taken by the algorithm is $O\left(T\left(\frac{n(D) \cdot S}{g \cdot S} + \ln {\frac{1}{\epsilon}},n(D_k)\right) + n(D_k) \cdot \log{n(D_k)}\right)$. For the choice $g=2$ and $\epsilon=\frac{1}{V(D)}$ the running time will be $O\left(T\left(\frac{n(D)}{2} + \ln {V(D)},n(D_k)\right) + n(D_k) \cdot \log{n(D_k)}\right) \in O\left(T\left(\frac{n(D)}{2},n(D_k)\right) + n(D_k) \cdot \log{n(D_k)}\right)$.



\section{Experiment: approximation ratio of \texorpdfstring{$FFk$}{FFk} and \texorpdfstring{$FFDk$}{FFDk} algorithms}
\label{subsection:ffk and ffdk bp algorithms experiment}
In this section we describe the second experiment, for checking the approximation ratio of $FFk$ and $FFDk$ with respect to the optimal bin-packing.

\subsection{Data generation}
Since computing the optimal bin-packing of a given instance is NP-hard, we constructed instances whose optimal bin-packing is known. Given the bin capacity $S$, we use Algorithm \ref{appendix:experimentalResults:generate_ll}to generate a list of 
item sizes that sum up to the bin capacity $S$. Algorithm \ref{appendix:experimentalResults:groupItemsGeneration} uses Algorithm \ref{appendix:experimentalResults:generate_ll} to generate an instance whose 
item sizes sum up to $OPT$ times $S$, and by the construction, we know that the optimal bin-packing for the instance $D$ has exactly $OPT$ bins,
and moreover, the optimal bin-packing for $D_k$ has exactly $k\cdot OPT$ bins (all of them full).

\begin{algorithm}[H]
	\DontPrintSemicolon
	\KwIn{A bin capacity ${S}$}
	\KwOut{A list $L$ of item sizes whose sum is $S$.}
	\SetKw{KwRet}{return}
	
	Let $L$ be the list of the randomly generated item sizes. Initially $L$ is empty. \;
	Let $s_L$ be the sum of all the item sizes in the list. Initially $s_L=0$. \;
	
	\While{True}
	{generate a random number $r\in (1,S)$. \; 
		\eIf{$s_L + r > S$}
		{Append $S - s_L$ to $L$. \;
			break \;}
		{Append $r$ to $L$, and update $s_L$. \;}
	}
	return $L$
	
	\caption{generate\_items}
	\label{appendix:experimentalResults:generate_ll}
\end{algorithm}

Our generated dataset contains multiple files. Each file in the dataset contains a fixed number ${IC}$ of instances. Each instance has several item sizes. These items have an optimal packing into ${OPT}$ bins. 
Each bin has capacity $S$.
For example let $[1,2,3,7,6,1']$ be an instance in some file where for each instance in the file ${OPT}=2$. Bin capacity is $S=10$. 
As we can see, there are two groups $\{7,2,1\}, \{6,3,1'\}$ which sum to 10 and hence an optimal packing requires two bins to pack the items in this instance. 

\begin{algorithm}
	\DontPrintSemicolon
	\KwIn{A bin capacity ${S}$ and the optimal number ${OPT}$ of bins to pack items in this instance.}
	\KwOut{A list $D$ of item sizes which sum to ${OPT}\cdot{S}$, such that the optimal number of bins is indeed $OPT(D)=OPT$.}
	
	Let $D$ be the instance of item sizes. Initially $D$ is empty. \;
	Let $s_D$ be the sum of all item sizes in $D$. Initially $s_D=0$. \;
	\While{True}
	{$L =$ generate\_items($S$) \;
		$s_D = s_D + s_L$. \;
		\eIf{$s_D / S == {OPT}$}
		{Append item sizes in $L$ to $D$. \;
			Shuffle item sizes in $D$. \;
			break \;}
		{Append item sizes in $L$ to $D$. \;}
	}
	return $D$ \;
	
	\caption{generate\_instance\_items}
	\label{appendix:experimentalResults:groupItemsGeneration}
\end{algorithm}	




\subsection{Results}
\label{section:results}
\label{subsection:ffkffdkresults}

We ran our experiment for different values of $k > 1$, and for different integer values of $OPT$, $2\le OPT \le 9$.
We computed the conjectured upper bound on the number of bins required to pack items using $FFk$ (for $k>1$) and $FFDk$ as $1.375 \cdot OPT(D_k)$ and $\frac{11 \cdot OPT(D_k) + 6}{9}$, respectively. For each file, we report the maximum number of bins required to pack items. 
We can see at Figures \ref{figure:ffk-ffdk:k2k3}, and \ref{figure:ffk-ffdk:k4k5}
that the bins used by $FFk$ (for $k>1$) and $FFDk$ are at most $1.375 \cdot OPT(D_k)$ and  $\frac{11 \cdot OPT(D_k) + 6}{9}$, respectively, which supports our conjectures
in subsections \ref{section:approx-algo:subsection:approx_algo:ffk} and \ref{section:approx-algo:subsection:approx_algo:ffdk}.


\begin{figure}[p]
	\begin{subfigure}[0.8\textheight]{\linewidth}
		\centering
		\includegraphics[height = 0.45\textheight, width=\textwidth]{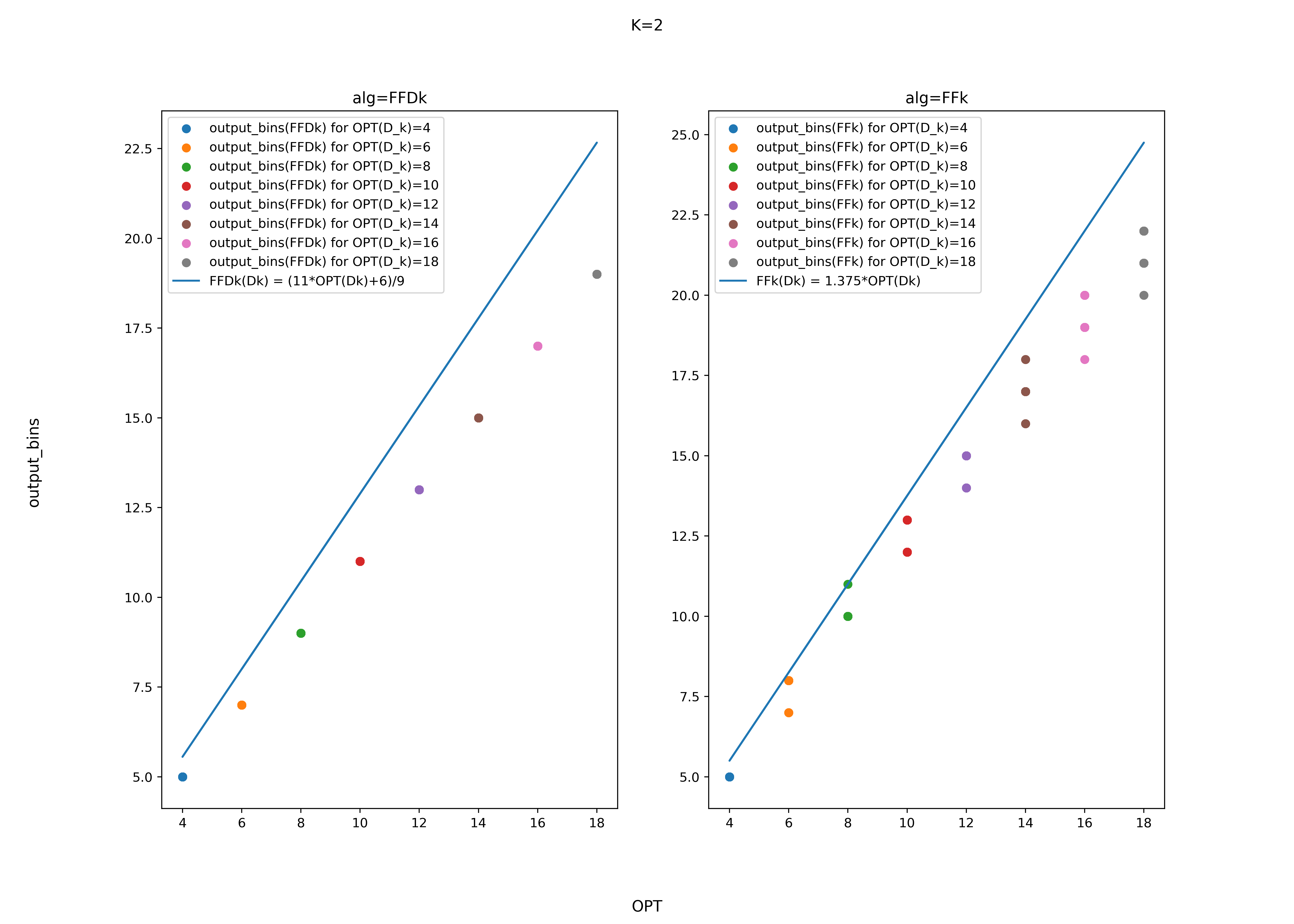}
	\end{subfigure}
	
	\begin{subfigure}[0.8\textheight]{\linewidth}
		\centering
		\includegraphics[height = 0.45\textheight, width=\textwidth]{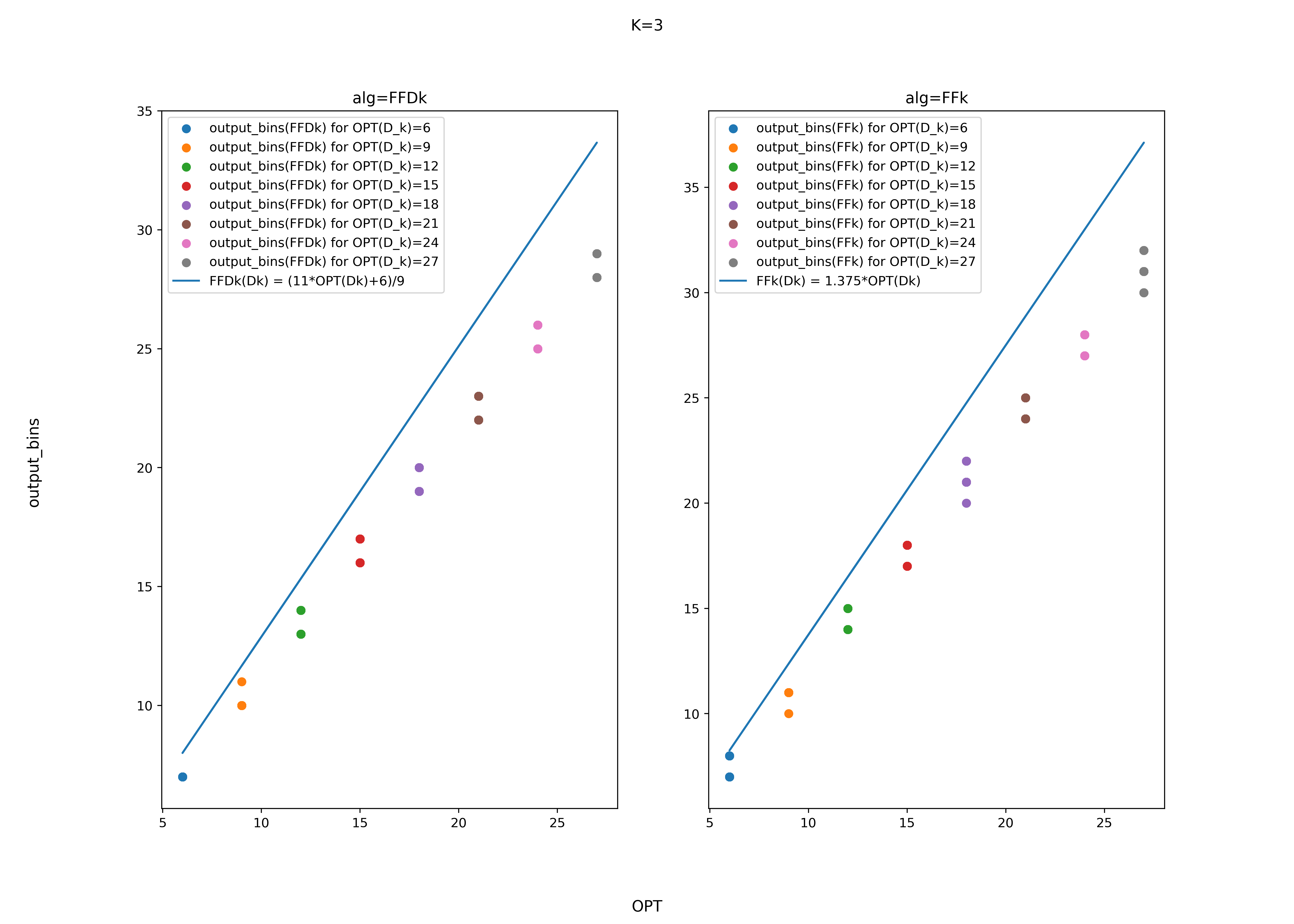}
	\end{subfigure}		
        \caption{The optimal numbers $OPT(D_k)$ of bins and numbers of bins used by $FFk$ and $FFDk$ algorithms bins are shown at the $x$- and $y$-axis, respectively. The data points in the legend show the upper bound that can be hypothesized for $OPT(D_k) = k \cdot OPT(D)$. Each subplot's data points display the number of different output bins corresponding to each conjectured upper bound. Note that the output bins convey the conjectured upper bounds.}
	\label{figure:ffk-ffdk:k2k3}		
\end{figure}


\begin{figure}[p]
	\begin{subfigure}[0.8\textheight]{\linewidth}
		\centering
		\includegraphics[height = 0.45\textheight, width=\textwidth]{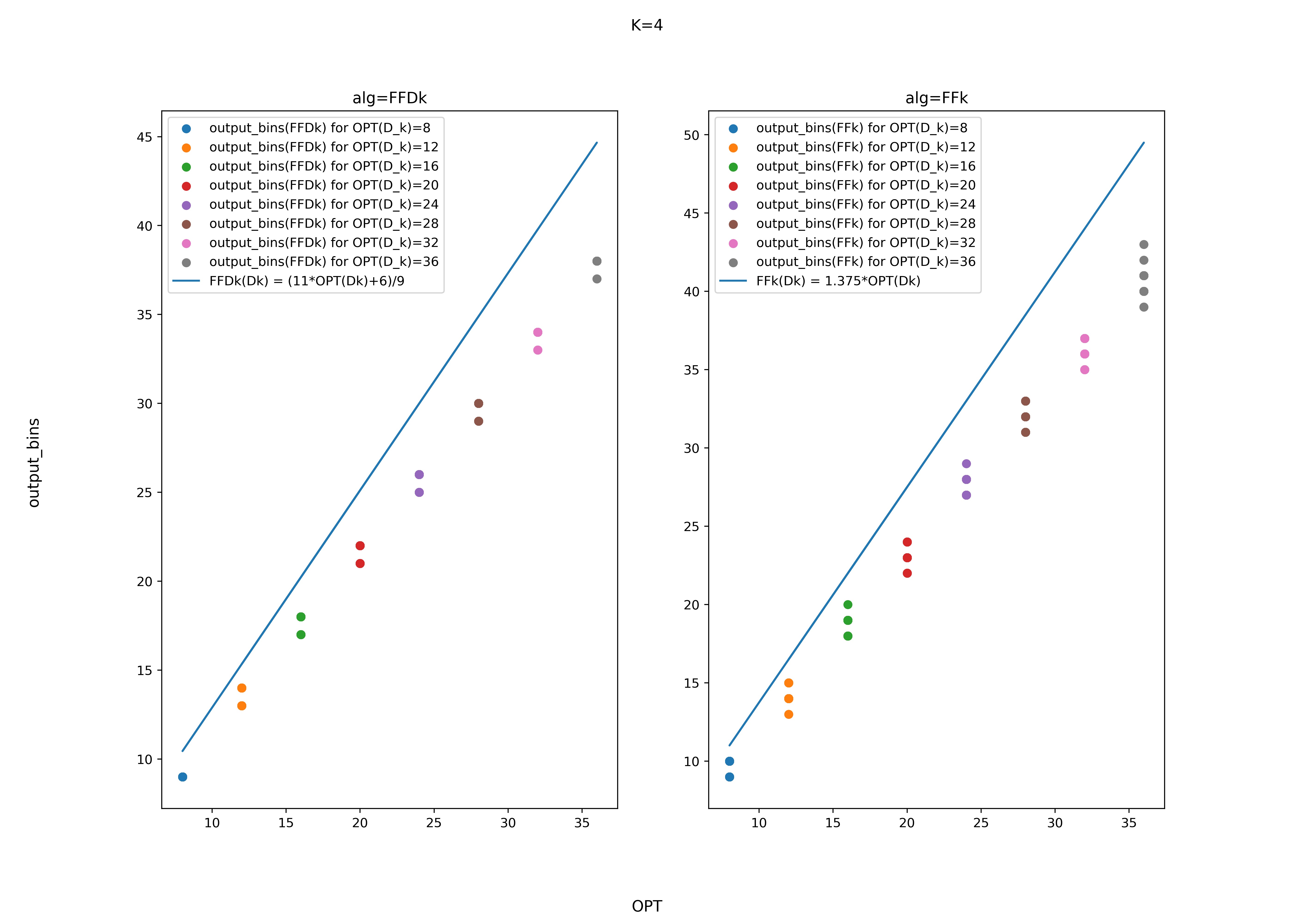}
	\end{subfigure}
	
	\begin{subfigure}[0.8\textheight]{\linewidth}
		\centering
		\includegraphics[height = 0.45\textheight, width=\textwidth]{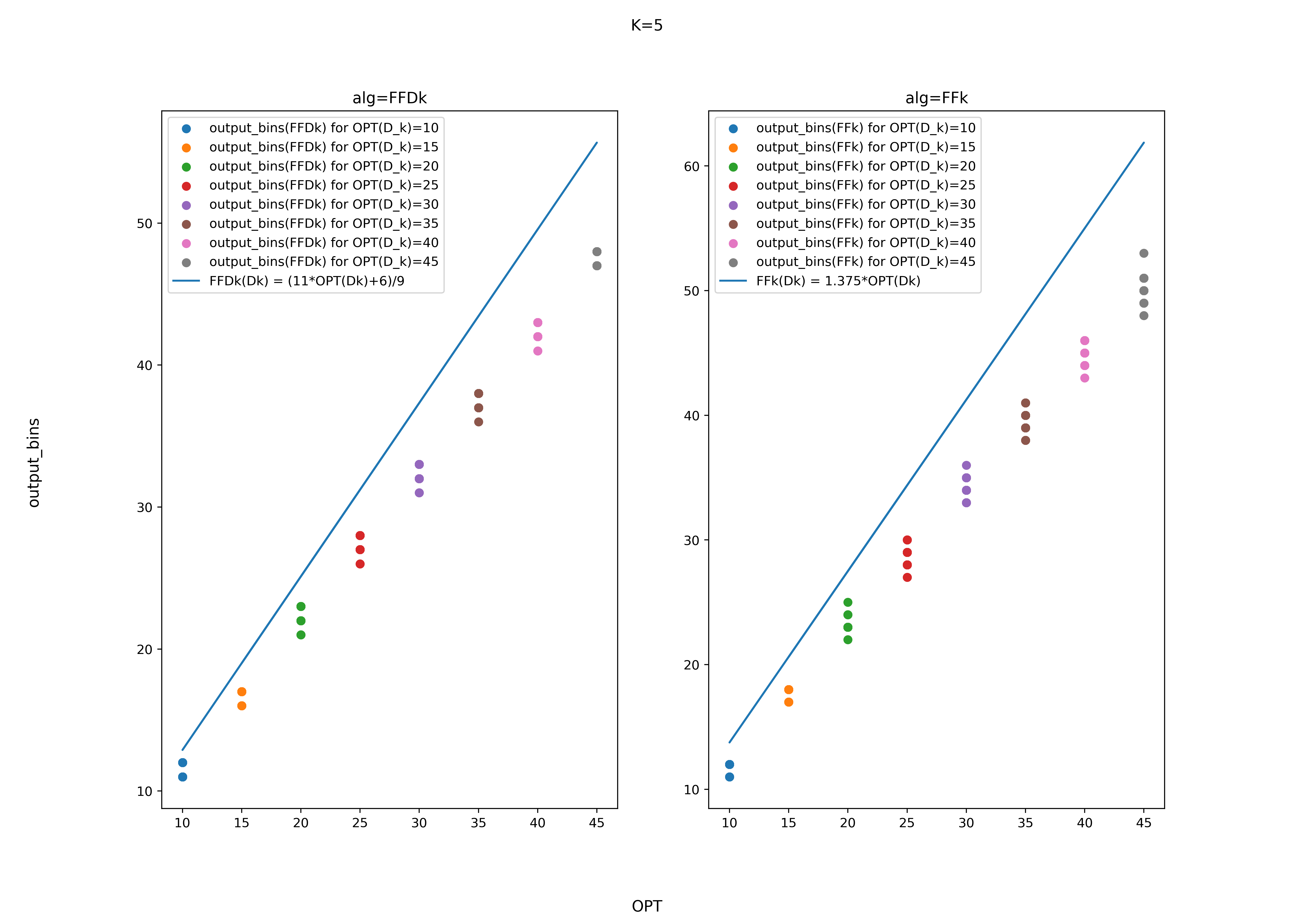}
	\end{subfigure}		
        \caption{The optimal numbers $OPT(D_k)$ of bins and numbers of bins used by $FFk$ and $FFDk$ algorithms bins are shown at the $x$- and $y$-axis, respectively. The data points in the legend show the upper bound that can be hypothesized for $OPT(D_k) = k \cdot OPT(D)$. Each subplot's data points display the number of different output bins corresponding to each conjectured upper bound. Note that the output bins convey the conjectured upper bounds.}
	\label{figure:ffk-ffdk:k4k5}		
\end{figure}

\section{More Electricity Distribution Results}	\label{APPENDIX: electricity-distribution-results}
In this section, we discuss the variation in hours of connection, comfort, and electricity supplied with different values of $k$ and with varying levels of uncertainty (standard deviation) in the agent's demand. 

In Figure \ref{appendix:more-results:k-vs-comf-util-with-delta}, we can observe that the sum of comfort the algorithm delivers to agents increase with an increasing value of $k$. However, that increase appears to saturate as $k$ increases. Similar phenomena occur for other metrics.

Figure \ref{appendix:more-results:k-vs-comf-egal-with-delta} shows that the minimum comfort the algorithm delivers to agents individually increases with an increasing value of $k$. However, that increase appears to saturate as $k$ increases. Similar phenomena occur for other metrics.


\begin{figure}[h]
\centering\includegraphics[width=\linewidth, height=0.45\textheight]{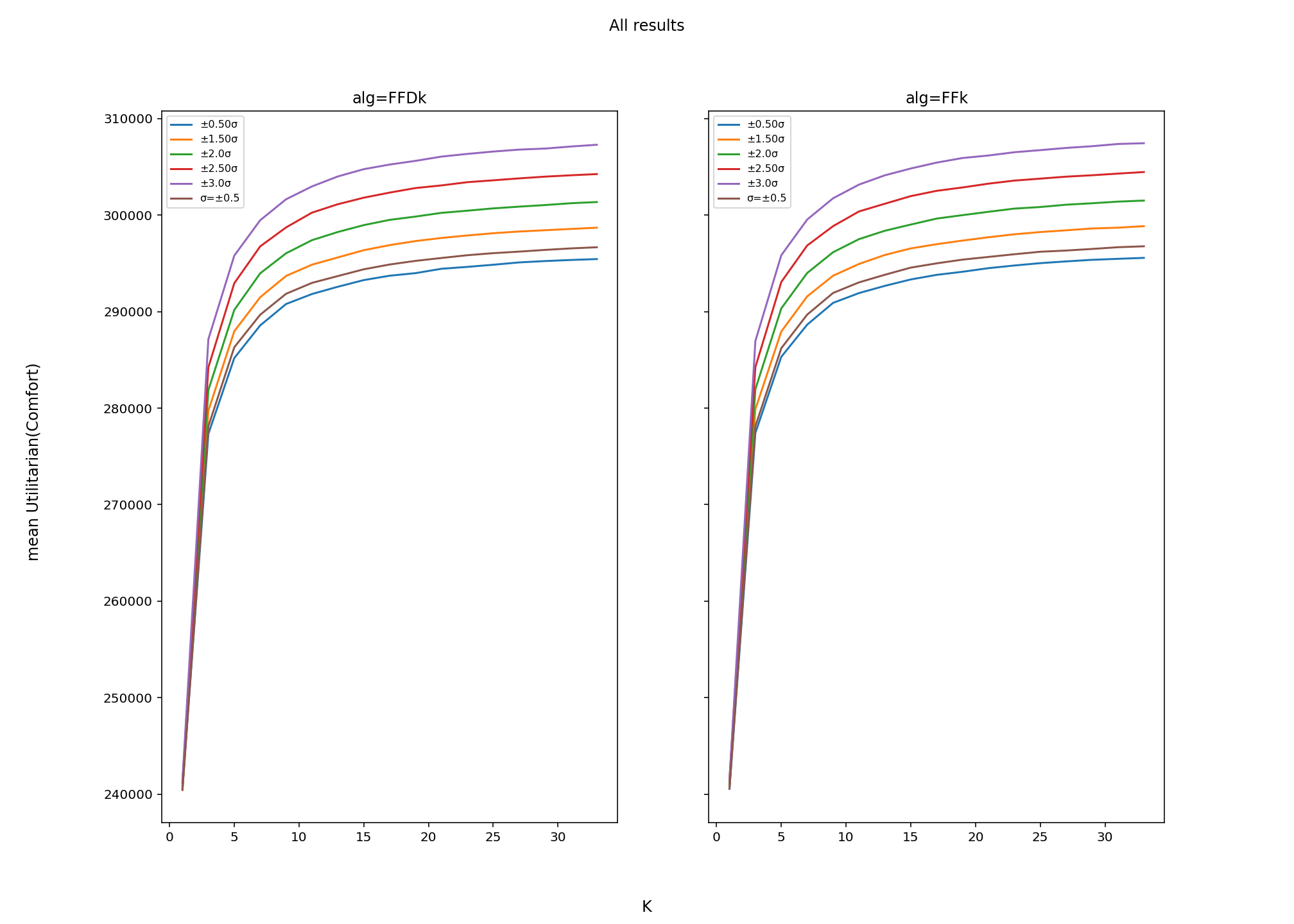}

\caption{Values of $k$ and mean utilitarian value of comfort are shown at the $x$- and $y$-axis, respectively. Figure shows the increase in sum of comfort delivered by the algorithm to agents with increasing value of $k$ for varying levels of uncertainty $\sigma$. 
\label{appendix:more-results:k-vs-comf-util-with-delta}
}


\centering\includegraphics[width=\linewidth, height=0.45\textheight]{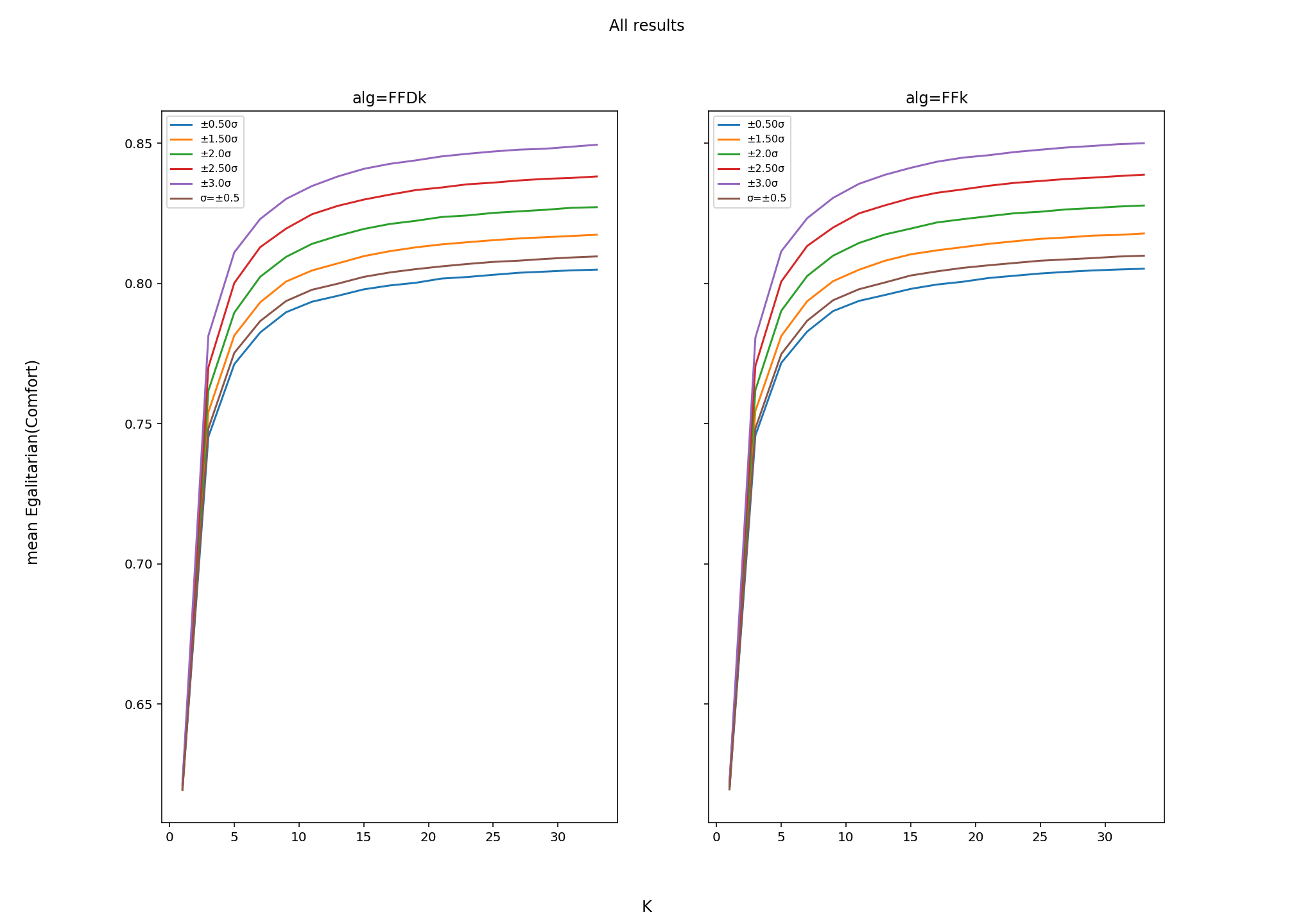}
\caption{Values of $k$ and minimum egalitarian value of comfort are shown at the $x$- and $y$-axis, respectively. Figure shows the increase in minimum comfort delivered to agents individually with increasing value of $k$ for varying levels of uncertainty $\sigma$. 
\label{appendix:more-results:k-vs-comf-egal-with-delta}
}
\end{figure}

In Figure \ref{appendix:more-results:k-vs-comf-mud-with-delta}, the maximum utility difference decreases with an increasing value of $k$. 
Similar phenomena occurs for electricity-supplied social welfare metrics. However, in the case of hours of connection, the maximum utility difference is 0 because the algorithm is executed for each hour.
\begin{figure}[h]
\centering\includegraphics[width=\linewidth, height=0.45\textheight]{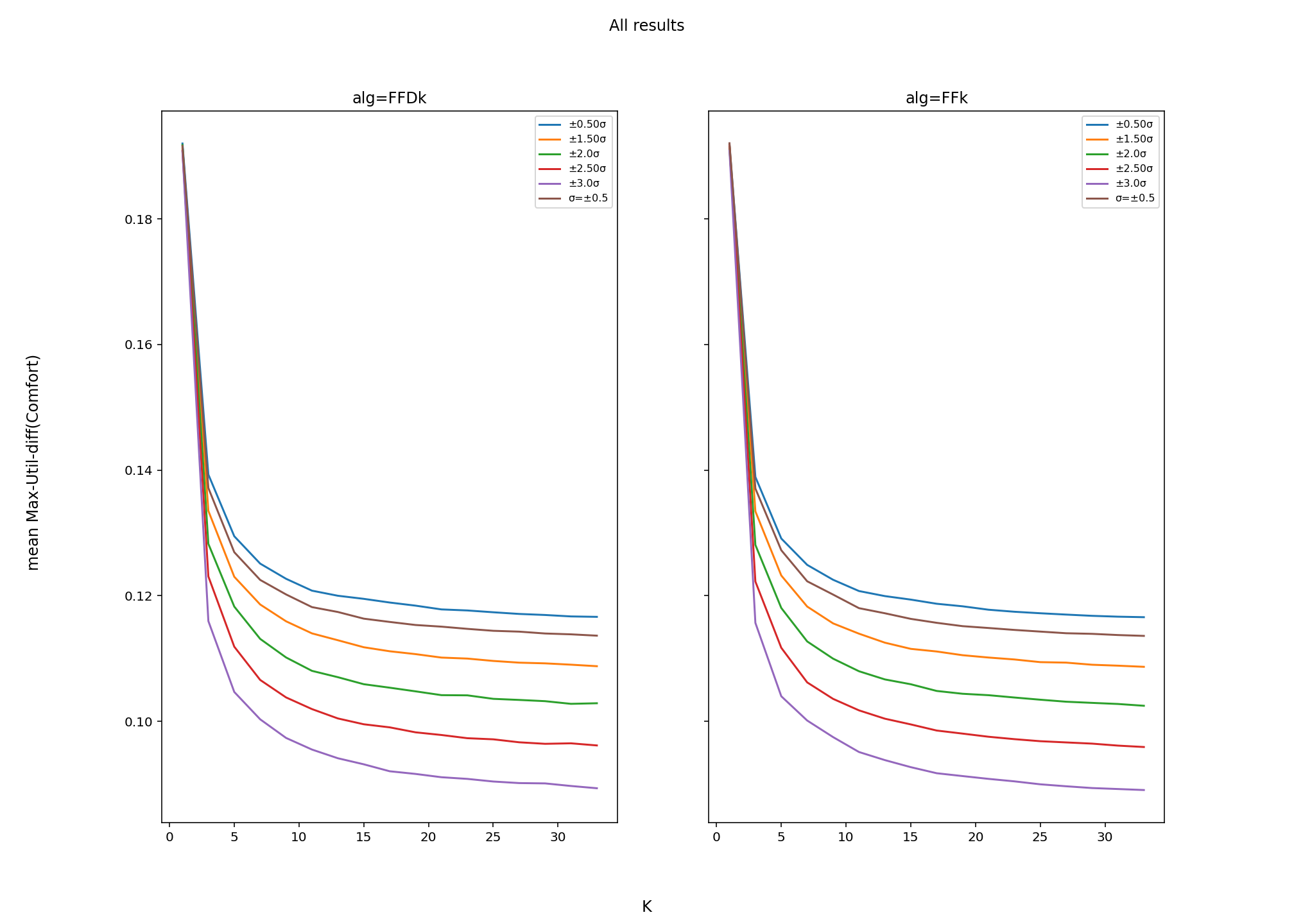}
\caption{Values of $k$ and maximum utility difference of comfort are shown at the $x$- and $y$-axis, respectively. Figure shows the decrease in maximum utility difference with increasing value of $k$ for varying levels of uncertainty $\sigma$. }
\label{appendix:more-results:k-vs-comf-mud-with-delta}
\end{figure}

\begin{credits}
\subsubsection{\ackname} We thank the reviewers of SAGT 2024 for their constructive comments. This research is partly funded by the Israel Science Foundation grant no. 712/20.
\end{credits}

\end{document}